\newlist{assumptions}{enumerate}{1}
\newlist{altassumptions}{enumerate}{1}
\setlist[assumptions]{label=(H\Alph*),ref=H\Alph*}
\setlist[altassumptions]{label=(H\Alph*$'$),ref=H\Alph*$'$}
\declaretheorem[name=Theorem,numberwithin=section]{thm}
\DeclareMathOperator*{\supp}{supp}
\DeclareMathOperator*{\spanop}{span}
\newcommand{\Hi}{\mathcal{H}}
\newcommand{\Pro}{\mathbb{P}}
\newcommand{\R}{\mathbb{R}}
\newcommand{\C}{\mathbb{C}}
\newcommand{\N}{\mathbb{N}}
\newcommand{\Z}{\mathbb{Z}}
\renewcommand{\leq}{\leqslant}
\renewcommand{\geq}{\geqslant}
\renewcommand{\Re}{\operatorname{Re}}
\newcommand{\bottom}{E_\epsilon}
\newcommand{\finhamper}[1][\theta]{H_{q,\epsilon}^\square(#1)}
\newcommand{\findem}{\begin{flushright} $\blacksquare$ \end{flushright}}
\newcommand{\debdem}{\subsubsection*{Proof}}
\newtheorem{cor}[thm]{Corollary}
\newtheorem{lem}[thm]{Lemma}
\theoremstyle{remark}
\newtheorem{rem}[thm]{Remark}
\begin{document}


\title[Expansion of the almost sure spectrum]{Expansion of the almost sure spectrum in the weak disorder regime}
\author[D. Borisov]{Denis Borisov}
\address[D. Borisov]{Institute of Mathematics Of Ufa Scientific Center of RAS, Chernyshevskogo, 112, Ufa, 450008, Russia 
\newline \mbox{\quad} \& \newline
Department of Physics and Mathematics, Bashkir State Pedagogical University, October rev. st.~3a, Ufa, 450000, Russia}
\urladdr{http://borisovdi.narod.ru/}
\email{borisovdi@yandex.ru}

\author[F. Hoecker-Escuti]{Francisco Hoecker-Escuti}
\address[F. Hoecker-Escuti]{Technische Universit\"at Chemnitz, Fakult\"at f\"ur Mathematik, Reichenhainer Str. 41, Chemnitz, D-09126, Germany}
\curraddr[F. Hoecker-Escuti]{Technische Universit\"at Hamburg-Harburg, Am Schwarzenberg-Campus 3, D-21073 Hamburg , Germany}
\email{francisco.hoecker-escuti@tuhh.de}

\author[I. Veseli\'c]{Ivan Veseli\'c} 
\address[I. Veseli\'c]{Technische Universit\"at Chemnitz, Fakultät für Mathematik, Reichenhainer Str. 41, Chemnitz, D-09126, Germany}
\urladdr{http://www.tu-chemnitz.de/mathematik/stochastik/}

\begin{abstract}
  The spectrum of random ergodic Schr\"odinger-type operators is almost surely a deterministic subset of the real line.
 The random operator can be considered as a perturbation of a periodic one.
  As soon as the disorder is switched on via a global coupling constant, the spectrum expands.
  We estimate how much the spectrum expands at its bottom for operators on $\ell^2(\Z^d)$.
\end{abstract}
\maketitle

\tableofcontents

\section{Introduction}
Due to the self-averaging property of ergodic Schr\"odinger operators
the resulting spectrum is almost surely a fixed subset of the real
line.  If a random operator is a perturbation of a periodic
operator, it is of interest to know how the spectrum expands once we
switch on the disorder via a global coupling constant.  Apart from the
genuine interest to identify the location of the spectrum, this is
also of central importance when identifying energy regions 
corresponding to localized wavepackets.

Otherwise it may happen that one proves a Wegner
estimate, a Lifschitz tail bound or a similar statement related to
localization, and then later discovers that the considered energy
regime belongs to the resolvent set.

In this paper we consider an $\epsilon$-small random perturbation of a discrete translation-invariant operator and we study how the bottom of its spectrum behaves. 
By symmetry, similar estimates apply to the location of the maximum of the spectrum, in a weak disorder regime.
 To fix the ideas, let us introduce a prototypical example.
 Let $\Hi=\ell^2(\Z^d)$ and $\Delta_{\Z^d}:\Hi \to \Hi$ the (negative definite) discrete
Laplacian on $\Z^d$, i.e.
$$ \left( \Delta_{\Z^d} u \right) (n) := \sum_{|n-m|_\infty =1} \left( u(m) -u(n) \right).$$ 
We define the operator $H_0:\Hi \to \Hi$ by
$$H_0 := - \Delta_{\Z^d} + W,$$ 
where $W$ is the multiplication operator by a real-valued function, which we
also denote by $W$ and which we assume periodic with respect to the subgroup
$\gamma := N\Z^d$. 

Let  $\square := [0,N-1]^d \subset \Z^d$ and  $V^\square \in \ell^\infty(\Z^d)$ be a non-trivial, compactly
supported single-site potential satisfying
$$ \supp(V^\square) \subset \square.$$ Let $(\omega_k)_{k \in \gamma}$
be a sequence of non-trivial, bounded, independent, identically distributed
random variables. For the sake of the introduction,
assume that $\{-1,1\} \in \supp \omega_0 \subset [-1,1]$. From now on we denote by
$V_\omega: \ell^2(\Z^d)\to \ell^2(\Z^d)$ the diagonal operator
defined, for $f \in \ell^2(\Z^d)$, as
\begin{equation}
\label{eq:randomV}
(V_\omega f)(x) = \sum_{k \in N\Z^d} \omega_k V^\square(x-k) f(x).
 \end{equation}
To motivate our results, let us consider the following discrete
alloy-type random Schrödinger operator defined by
\begin{align}
\label{def:alloytypeoperator}
H_{\omega,\epsilon} := H_0 + \epsilon V_\omega.
\end{align}
Under the stated assumptions, this operator is ergodic, and thus there exists a set
$\Sigma_\epsilon \subset \R$ such that
$${\sigma(H_{\omega,\epsilon})} = \Sigma_\epsilon$$ with
probability 1 (see e.g. \cite{pastur297spectra}). From now on we refer to $\Sigma_\epsilon$ as the {\it
  almost-sure spectrum} of $H_{\omega,\epsilon}$. The best known
example of this kind of operators is the celebrated {\em Anderson
  model}, where $H_0$ is the discrete Laplacian on $\Z^d$ (i.e. $W
\equiv 0$), $V^\square = \delta_0$ and $N=1$. In this case, it is not hard to
see (\cite{pastur297spectra}) that the bottom of the spectrum of the perturbed operator
$\bottom := \inf \left( \Sigma_\epsilon \right)$ moves away from the
bottom of the spectrum of the free operator $E_0 := \inf \left( \Sigma_0
\right)$ as
\
\begin{equation*}
  \bottom = E_0 - \epsilon .
\end{equation*}  
If one
considers instead, for example, the dipole model, i.e. $V^\square = \delta_0 -
\delta_{e_1}$ with $e_1=(1,0,\ldots,0)$, it is proven in \cite{caoelgart2012} that
\
\begin{equation*}
\bottom \leq E_0 - C\epsilon^2 .  
\end{equation*}
 In this note
we study this question for a very general, wide class of operators (see assumptions in Section \ref{sec:assumption}). More precisely, we prove some upper
bounds of the quantity $\bottom - E_0$, which in
turns gives us information on the location of the spectrum of the
perturbed operator. We also discuss some partial results on the lower bound.

In order to state the result in this setting, we need to
consider the operator $H_0$ with $N\Z^d$-periodic boundary
conditions. Because of the translation invariance, the subspace of
$N\Z^d$-periodic functions in $\ell^\infty(\Z^d)$ is invariant under
the action of $H_0$. This subspace is $N^d$-dimensional, so that the
action of the operator corresponds to a matrix we denote by
\begin{equation}
\label{def:alloytypeoperatorperiodicbc}
H^\square_0 : \ell^2(\square) \to \ell^2(\square).
\end{equation}
We now state the result.

\begin{thm}
\label{thm:anderson}
Let $H_{\omega,\epsilon}$ be the alloy-type random Schrödinger
operator defined by \eqref{def:alloytypeoperator} and $\bottom$ the
bottom of its corresponding almost-sure spectrum. To the $N\Z^d$-periodic operator
$H_0$ we associate a Hermitian matrix $H_0^\square \in \C^{N^d \times N^d}$, defined as in
\eqref{def:alloytypeoperatorperiodicbc},  and we let $\psi_1 \in \ell^2(\square)$ be the
(unique normalized) positive ground state of $H_0^\square$. Define
$$A_1 := - \left\langle \psi_1, V^\square \psi_1 \right\rangle_{\ell^2(\square)}.$$ There exists $A_2 \leq 0$ such that for $\epsilon>0$ small enough
$$\bottom \leq E_0 + \epsilon A_1 +  \epsilon^2 A_2.$$
Furthermore, if $A_1 = 0$ then $|A_2|$ is non-zero and larger than the spectral gap of $H_0^\square$, i.e. the difference
between its two smallest eigenvalues. 
\end{thm}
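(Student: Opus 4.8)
The plan is to bound $\bottom$ from above by the infimum of the spectrum of a suitably chosen \emph{periodic} operator, and then to analyse the latter by Floquet theory together with analytic perturbation theory. The starting point is the standard structural fact for i.i.d.\ alloy-type models: whenever $\omega'$ is an $LN\Z^d$-periodic configuration all of whose values lie in $\supp\omega_0$, one has $\sigma(H_0+\epsilon V_{\omega'})\subseteq\Sigma_\epsilon$. Indeed, since the $\omega_k$ are i.i.d., for every finite box the event that $\omega$ coincides with (or is arbitrarily close to) $\omega'$ on that box has positive probability, so any Weyl sequence for $H_0+\epsilon V_{\omega'}$ produces, with positive probability, approximate eigenfunctions for $H_{\omega,\epsilon}$ at the same energy, and a $0$--$1$ law then places that energy in $\Sigma_\epsilon$. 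Consequently $\bottom\leq\inf\sigma(H_0+\epsilon V_{\omega'})$ for every such $\omega'$, and it suffices to exhibit one periodic configuration realising the claimed bound.

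For the linear and quadratic terms I would take the \emph{constant} configuration $\omega'_k\equiv s$, with $s=-1$ if $\langle\psi_1,V^\square\psi_1\rangle\geq 0$ and $s=1$ otherwise; both values lie in $\supp\omega_0$. Then $H_0+\epsilon V_{\omega'}=H_0+\epsilon s\,V_{\mathrm{per}}$ is $N\Z^d$-periodic, where $V_{\mathrm{per}}:=\sum_{k\in N\Z^d}V^\square(\cdot-k)$, and its Floquet fibres are $H^\square(\theta)+\epsilon s\,V^\square$, with $H^\square(\theta)$ the fibre of $H_0$ at quasimomentum $\theta$ and $H^\square(0)=H_0^\square$. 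A discrete diamagnetic inequality shows that $\lambda_1(H^\square(\theta))\geq\lambda_1(H^\square(0))$ for all $\theta$, so that $E_0=\inf\sigma(H_0)$ equals the smallest eigenvalue $E_1$ of $H_0^\square$, which by the Perron--Frobenius theorem is simple with the strictly positive eigenvector $\psi_1$. Hence $\bottom\leq\inf_\theta\lambda_1\big(H^\square(\theta)+\epsilon s\,V^\square\big)\leq\lambda_1\big(H_0^\square+\epsilon s\,V^\square\big)$.

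Since $E_1$ is a simple eigenvalue of $H_0^\square$, Kato--Rellich analytic perturbation theory then gives
\begin{equation*}
\lambda_1\big(H_0^\square+\epsilon s\,V^\square\big)=E_0+\epsilon\, s\,\langle\psi_1,V^\square\psi_1\rangle-\epsilon^2 s^2\sum_{n\geq 2}\frac{\big|\langle\psi_n,V^\square\psi_1\rangle\big|^2}{E_n-E_1}+\bigO(\epsilon^3),
\end{equation*}
where $(E_n,\psi_n)$ are the eigenpairs of $H_0^\square$. By the choice of $s$ the linear term is $\leq\epsilon A_1$, the quadratic coefficient is $\leq 0$, and absorbing the $\bigO(\epsilon^3)$ remainder into a still non-positive coefficient $A_2$ yields $\bottom\leq E_0+\epsilon A_1+\epsilon^2 A_2$ for all small $\epsilon$. (If the quadratic coefficient vanishes, then $V^\square\psi_1\in\spanop\{\psi_1\}$, so $\psi_1$ is an exact eigenvector of $H_0^\square+\epsilon s\,V^\square$ and one may simply take $A_2=0$.)

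It remains to prove the refined statement when $A_1=0$, i.e.\ $\langle\psi_1,V^\square\psi_1\rangle=0$. Since $\psi_1$ is strictly positive on $\square$ and $V^\square$ is real, non-trivial and supported in $\square$, the vector $V^\square\psi_1$ is non-zero; being orthogonal to $\psi_1$, it has a non-trivial component on $\psi_2,\psi_3,\dots$, so the quadratic coefficient above is \emph{strictly} negative and $|A_2|>0$. To reach the sharper bound that $|A_2|$ exceeds the spectral gap $E_2-E_1$, I expect the constant configuration and the bare second-order expansion to be insufficient, and would argue variationally instead: evaluate the quadratic form of $H_0+\epsilon V_{\omega'}$ on a trial state $\psi_1+\epsilon\phi$ with first-order corrector $\phi=-(H_0^\square-E_1)^{\dagger}V^\square\psi_1$ (here $(\cdot)^{\dagger}$ denotes the reduced resolvent on $\psi_1^{\perp}$), allowing in addition a periodic modulation of the couplings $\omega'_k$ and a small shift of the quasimomentum, and then bound the resulting reduced-resolvent quadratic form from below in terms of $\|V^\square\psi_1\|$, the gap $E_2-E_1$, and the dispersion of the lowest band of $H_0$. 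This quantitative comparison with the gap is the principal obstacle, and is where I expect the finer structural hypotheses on the model to be needed; the periodic approximation, the Floquet reduction, and the analytic perturbation theory are routine.
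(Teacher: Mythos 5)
Your strategy for the main inequality is essentially the paper's: approximate $\bottom$ from above via a constant-coupling periodic configuration (this is the Kirsch--Martinelli argument of Theorem \ref{thm:discupperbound} and Corollary \ref{cor:discupperbound}), reduce to the Floquet fibre at $\theta=0$, and expand the lowest fibre eigenvalue to second order. The differences are cosmetic: you invoke Kato--Rellich perturbation theory for the simple Perron--Frobenius ground state, whereas the paper runs an explicit variational computation with the trial vector $\psi+\epsilon q\varphi$ and an optimized corrector $\varphi\in\mathcal V_0^\perp$ (Lemma \ref{lem:infspec}) --- necessary there because the general theorems allow a degenerate fibre ground state, but equivalent here where simplicity holds; and you justify $\inf_\theta\lambda_1(H^\square(\theta))=\lambda_1(H^\square(0))$ by a diamagnetic inequality where the paper cites Kirsch--Simon (Theorem \ref{thm:kirschsimon}). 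Your argument that $A_1=0$ forces a strictly negative second-order coefficient (positivity of $\psi_1$ gives $V^\square\psi_1\neq0$, hence a nontrivial component in $\mathcal V_0^\perp$) is exactly the paper's Lemma \ref{cor:simple}.

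The genuine gap is the final clause, which you explicitly leave open: the comparison of $|A_2|$ with the spectral gap $g$ of $H_0^\square$ when $A_1=0$. You anticipate needing modulated couplings, quasimomentum shifts and band-dispersion estimates; the paper needs none of this. Its argument is a one-line consequence of the corrector you already introduced: let $\varphi\in\mathcal V_0^\perp$ solve $H_0^\square\varphi=V^\square\psi_1$ (possible since $A_1=0$ puts $V^\square\psi_1$ in $\mathcal V_0^\perp$, where $H_0^\square$ is invertible). Then $\left\langle V^\square\varphi,\psi_1\right\rangle=\left\langle H_0^\square\varphi,\varphi\right\rangle$, and since the ratio defining $A_2$ is scale-invariant in $\varphi$,
\begin{equation*}
|A_2|\;\geq\;\frac{\left|\left\langle V^\square\varphi,\psi_1\right\rangle\right|^2}{\left\langle H_0^\square\varphi,\varphi\right\rangle}\;=\;\left\langle H_0^\square\varphi,\varphi\right\rangle\;\geq\;g\,\|\varphi\|^2,
\end{equation*}
using $H_0^\square\geq g$ on $\mathcal V_0^\perp$ under the normalization $E_1=0$. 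So the quantity compared with the gap is this reduced-resolvent energy, not the raw Rayleigh--Schr\"odinger coefficient, and the comparison is immediate rather than the ``principal obstacle'' you describe. (Read literally this yields $|A_2|\geq g\|\varphi\|^2$; the factor $\|\varphi\|^2$ is implicitly absorbed in the statement, but the mechanism is the identity above, not a finer variational construction.) Everything else in your proposal is sound.
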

We provide an explicit formula for the constant $A_2$ only in the next section as it requires the introduction of additional notation. 
We have an analogous estimate for (fibers of) periodic operators, see theorems \ref{thm:main} and \ref{thm:mainaltversion}. 
In fact, the estimate for periodic operators is one step in the proof of theorem
\ref{thm:anderson}. In the context of periodic operators we have a
related, complementary lower bound, see lemmas \ref{lem:convinfspec}
and \ref{lem:pos-convinfspec}.

We would like to make some remarks on the relevance of this
result. First, the location of the bottom of the spectrum
with respect to the coupling constant has been the subject of several 
papers: with periodic potentials in dimension one
\cite{titchmarsh1946} and in arbitrary dimension
\cite{kirschsimon1987}, \cite{colindeverdiere1991}, as well as with
random positive potentials \cite{kirschetal1998localization} and under
some generic assumptions on $W$ \cite{klopp2002b}.
Recently, for non-negative perturbations, 
but without requiring the potential to be periodic or ergodic,
a result on the lifting of the bottom of the spectrum was given in theorem 1.1 of 
\cite{ElgartK-14}, see also \cite{BoutetdeMonvelLS-11} for a general lemma
on the lifting of the spectral infimum. These results have a slightly different scope than our result, 
where we ask how much the spectrum expands into the negative half axis. 
The latter question was studied before for Schr\"odinger operators in the multidimensional continuum space, e.~g.~in
\cite{GesztesyGS-92}, in particular for periodic potentials satisfying certain differentiability conditions.

Understanding the
spectrum provides valuable information on the solutions of partial
differential equations. In particular, if one considers the
Schrödinger equation for the Hamiltonian $H_{\omega,\epsilon}$, the
spectral type of the Anderson model characterizes the transport
properties of the underlying disordered medium. For this model, the
spectrum is expected to exhibit a transition from localized states at
the bottom of the spectrum (pure point spectrum with exponentially
localized eigenfunctions) to extended states (absolutely continuous
spectrum) in the bulk of the spectrum. This {\em Anderson transition}
is still a conjecture in the setting of this article. The existence of
localized states at the bottom of the spectrum has been studied in
many papers.
We invite the reader to consult the monographs
\cite{carmona1990spectral}, \cite{pastur297spectra},
\cite{stollman01}, \cite{hislop2008lectures} and their extensive
bibliography. The perturbative regime $\epsilon \ll 1$ has attracted
much attention \cite{aizenman1994localization}, \cite{wang2001},
\cite{klopp2002a}, \cite{klopp2002b}, \cite{elgart2009lifshitz}, \cite{caoelgart2012}, \cite{hoeckerescuti2013}, \cite{hoeckerescuti2014}, \cite{elgartetal2011locnonmono}, \cite{borisovveselic2011}, \cite{borisovveselic2013}. In this
regime one can prove very precise estimates of the interval of
localization, namely that states with energies in
\
\begin{equation*}
  I_\eta(\epsilon) := (-\infty, -C_0 \epsilon^\eta] \cap
    \Sigma_\epsilon = (-\infty, -C_0 \epsilon^\eta] \cap [E_\epsilon,
        + \infty)
\end{equation*} are localized.   
In \cite{elgart2009lifshitz} it was proved that in dimension
$d=3$ one may take $\eta$ to be as large as $2$ and in \cite{hoeckerescuti2014} that for $d\geq2$ this holds with $\eta<2$. These results are
meaningful, as for the Anderson model $\bottom = - C_1 \epsilon$. If we
now consider different potentials, we may have a quadratic expansion
of the bottom of the spectrum $E_\epsilon$, and understanding where
the bottom of the spectrum lies appears to be crucial, so that the
interval of localization is non-trivial. Some of the issues addressed
in this note were already explored in \cite{klopp2002b} where 
it is assumed that the single-site potential has a non-zero mean and the
Floquet eigenvalues of the underlying periodic discrete Schr\"odinger operator $H_0= -\Delta +W$ are
assumed to be non-degenerated, as well as in \cite{caoelgart2012} for
the dipole potential. 
These are special cases of our models. The general operator we study corresponds roughly to tridiagonal block matrices of the form
\
\begin{equation}\label{ex:tridiagonal}
  \
  \left(\begin{matrix}
   \ddots & \ddots & \ddots & & \\
   {B^*} & A  & B & 0 &\ddots \\
   0 & {B^*} & A  & B  & 0 \\
    \ddots & 0 & {B^*}  & A & B \\
   & & \ddots & \ddots & \ddots  
  \end{matrix}\right)
  +
  \left(\begin{matrix}
   \ddots & \ddots & \ddots & & \\
   0 & \omega_{n-1} V^\square & 0 &\ddots &\\
   \ddots & 0 & \omega_{n} V^\square  & 0  &\ddots \\
    &  \ddots & 0  & \omega_{n+1} V^\square & 0 \\
   & & \ddots & \ddots & \ddots  
  \end{matrix}\right),
\end{equation}
where $A$ and $V^\square$ are Hermitian matrices and $\{\omega_n\}$ i.i.d. random variables.
We introduce in Section \ref{sec:assumption} the general framework in
which our results are obtained. 

To complete the description of the obtained results, let us briefly address the question of the optimality of the  lower bound (or at least its exponent). 
As far as the authors know, there is no general result in the literature in this direction
(but note the previously mentioned \cite{ElgartK-14}, \cite{GesztesyGS-92} and that for the Anderson model the bottom of the spectrum is known explicitly). 
One may naively expect, from perturbation theory, that the behavior should be linear or quadratic. 
The question turns out to be more subtle as the behavior may depend on the speed at which the Floquet eigenvalues associated to the bottom of the spectrum approach their minimum, 
as the following example shows.
\
\begin{thm}\label{thm:interesting}
  For $H_0 := (-\Delta_{\Z})^2$ defined on $\ell^2(\Z)$ and $V^\square$ the multiplication operator given by the following single-site potential:
  \
  \begin{align*}
    V^\square &:= -\frac{1}{2} \delta_{-1} + \delta_0 - \frac{1}{2} \delta_1.
  \end{align*}
Set as before $H_{\omega,\epsilon} := H_0 + \epsilon V_\omega$, cf.~\eqref{eq:randomV} and \eqref{def:alloytypeoperator}.
  Pick some $\xi>1/4$. Then for $\epsilon>0$ small enough we have
  \
  \begin{align}
    E_\epsilon:= \inf \sigma( H_0 + \epsilon V_\omega ) \leq -\frac{1}{6} \epsilon^{1+2\xi}.
  \end{align}
\end{thm}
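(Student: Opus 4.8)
The plan is to use a variational argument with a carefully chosen trial function whose Fourier support concentrates near the Floquet momentum where the unperturbed dispersion relation attains its minimum. First I would compute the band function of $H_0 = (-\Delta_{\Z})^2$: since $-\Delta_{\Z}$ has symbol $2-2\cos\theta = 4\sin^2(\theta/2)$, the operator $H_0$ has symbol $E_0(\theta) = (2-2\cos\theta)^2 = 16\sin^4(\theta/2)$, which vanishes to fourth order at $\theta = 0$. In particular $E_0(\theta) \sim \theta^4$ near the bottom, so the infimum of $\sigma(H_0)$ is $0$, attained at $\theta = 0$, and a wave packet of momentum width $\delta$ around $\theta = 0$ has kinetic energy of order $\delta^4$ rather than the generic $\delta^2$. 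This flatness is precisely what makes the expansion better than quadratic.

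Next I would set up the trial state. Working on a large periodic box $[0, LN)_{\Z}$ (so that one can realize an arbitrary realization of the $\omega_k$ near a chosen pattern, invoking the fact that $E_\epsilon = \inf\Sigma_\epsilon$ is the almost-sure infimum and hence bounded above by the infimum over any admissible periodic configuration, e.g. all $\omega_k = -1$), I would take $\phi_\delta(n) = \chi(\delta n) e^{i \cdot 0 \cdot n} = \chi(\delta n)$, i.e. a slowly varying bump of width $\sim \delta^{-1}$ centered where the potential well sits, with $\delta = \delta(\epsilon)$ to be optimized. The Rayleigh quotient $\langle \phi_\delta, H_{\omega,\epsilon}\phi_\delta\rangle / \|\phi_\delta\|^2$ then splits into a kinetic part of size $O(\delta^4)$ (by the fourth-order vanishing of the symbol and standard commutator/Taylor estimates on the slowly varying envelope) and a potential part $\epsilon \langle \phi_\delta, V_\omega \phi_\delta\rangle / \|\phi_\delta\|^2$. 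For the chosen sign pattern $\omega_k \equiv -1$, the potential contributes $-\epsilon \langle \phi_\delta, \sum_k V^\square(\cdot - k)\phi_\delta\rangle/\|\phi_\delta\|^2$; since $\sum_{k\in N\Z} V^\square(\cdot-k)$ has, summed against a slowly varying envelope, an effective value governed by the mean of $V^\square$ over its period — but here $\sum_n V^\square(n) = -\tfrac12 + 1 - \tfrac12 = 0$, so the leading term cancels and one must go to the next order. Expanding $\phi_\delta$ around each site to second order, the surviving term is proportional to $\delta^2$ times the discrete second moment of $V^\square$, namely $\sum_n n^2 V^\square(n) = -\tfrac12(-1)^2 + 0 - \tfrac12(1)^2 = -1$, combined with $\tfrac12 \chi''$; this yields a potential contribution of order $-c\,\epsilon\,\delta^2$ with a strictly negative constant (tracking the $\tfrac16$ in the statement). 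Balancing $\delta^4$ against $\epsilon \delta^2$ gives $\delta^2 \sim \epsilon$, hence an energy of order $-\epsilon^2$; to get the exponent $1 + 2\xi$ with $\xi > 1/4$ one instead picks $\delta^2 = \epsilon^{2\xi}$ (slightly larger than the balanced scale when $\xi < 1/2$, so the potential term $-c\epsilon\delta^2 = -c\epsilon^{1+2\xi}$ dominates the kinetic $\delta^4 = \epsilon^{4\xi}$ precisely because $4\xi > 1 + 2\xi \iff \xi > 1/2$ fails — so in fact one wants $\xi$ between $1/4$ and... ) — here I would be careful: the constraint $\xi > 1/4$ should come from requiring the kinetic error $\delta^4 \sim \epsilon^{4\xi}$ to be negligible compared to $\epsilon^{1+2\xi}$ after also accounting for the normalization and boundary corrections of order $\delta/\|\phi_\delta\|^2 \sim \delta^2$, and from the remainder in the Taylor expansion of the envelope being higher order; optimizing these bookkeeping terms pins down $\xi > 1/4$ and the explicit constant $\tfrac16 = \tfrac12 \cdot \tfrac13$ or similar, after choosing $\chi$ (say a Gaussian or a compactly supported bump) and evaluating $\int \chi'^2 / \int \chi^2$ and $\int \chi''\chi / \int \chi^2$.

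Concretely, the steps in order are: (1) diagonalize $H_0$ via Fourier series and record $E_0(\theta) = 16\sin^4(\theta/2)$, in particular $E_0 = 0$ and the quartic vanishing at $\theta=0$; (2) reduce to a periodic (finite-box) problem and choose the configuration $\omega \equiv -1$, using that $E_\epsilon \le \inf_\omega$ over admissible configurations together with ergodicity; (3) introduce the slowly-varying trial function $\phi_\delta(n) = \chi(\delta n)$ and compute $\langle \phi_\delta, H_0 \phi_\delta\rangle = O(\delta^4)\|\phi_\delta\|^2$ by Taylor-expanding the symbol or by four-fold discrete summation by parts; (4) compute $\langle \phi_\delta, V_\omega \phi_\delta\rangle$ for the chosen configuration, observe the zeroth moment of $V^\square$ vanishes, extract the second-moment term $\sim \delta^2 \sum_n n^2 V^\square(n) \cdot \tfrac12\langle \chi'', \chi\rangle$, and check its sign is negative; (5) assemble the Rayleigh quotient, set $\delta = \delta(\epsilon)$ so that $\delta^4 \ll \epsilon \delta^2$, i.e. $\delta^2 = \epsilon^{2\xi}$ with the stated range of $\xi$, and read off the bound $E_\epsilon \le -\tfrac16\epsilon^{1+2\xi}$ after absorbing lower-order errors into the constant.

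The main obstacle I expect is bookkeeping the error terms cleanly enough to land on the \emph{explicit} constant $\tfrac16$ and the \emph{sharp} threshold $\xi > 1/4$: one must control simultaneously (a) the subleading kinetic terms beyond $\delta^4$ — these are $O(\delta^6)$ and harmless — (b) the cross terms between the envelope's variation and the discrete structure of $V^\square$ at the scale of a single period $N$, which could in principle produce an unwanted $O(\epsilon\delta)$ contribution but should vanish by the evenness/symmetry of $V^\square = -\tfrac12\delta_{-1} + \delta_0 - \tfrac12\delta_1$ (so only even moments survive), and (c) the boundary effects from truncating to a finite box, which are exponentially small if $\chi$ is taken with suitable decay or strictly zero outside a compact set. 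A secondary subtlety is that $E_0(\theta)$ vanishing to fourth order — not merely second — is essential: for a generic single-site potential with nonzero mean one would get $E_\epsilon \sim -\epsilon$, and for the dipole $V^\square = \delta_0 - \delta_{e_1}$ with the ordinary Laplacian one gets $-\epsilon^2$; the point of this example is that the \emph{combination} of a flat ($\theta^4$) band bottom with a single-site potential whose zeroth \emph{and} first moments vanish but whose second moment does not produces the anomalous exponent, and the proof must make this interplay quantitative. Once the algebra is organized around these observations, the estimate follows by optimizing a one-parameter family of trial functions, and the constant $\tfrac16$ comes out of a specific choice of $\chi$.
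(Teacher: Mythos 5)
Your frame --- a variational bound via a low-momentum trial state, reduction to admissible periodic configurations, and the interplay between the quartic vanishing of the symbol $E_0(\theta)=(2-2\cos\theta)^2$ and the vanishing zeroth moment of $V^\square$ --- is the right one, and your steps (1)--(2) match the paper. But the trial function you choose, a single slowly varying envelope $\phi_\delta(n)=\chi(\delta n)$, cannot reach the stated threshold $\xi>1/4$, and you run into this wall yourself in step (5) without resolving it. With one envelope of momentum width $\delta$ the kinetic energy is $\asymp\delta^4\|\phi_\delta\|^2$ with an order-one constant (the whole packet carries momenta of size $\delta$ with amplitude of order one), while the best potential gain available is $-c\,\epsilon\delta^2\|\phi_\delta\|^2$. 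Requiring the kinetic term to be negligible against $\epsilon^{1+2\xi}$ forces $4\xi>1+2\xi$, i.e. $\xi>1/2$; for $\xi\in(1/4,1/2]$ the positive term $\delta^4=\epsilon^{4\xi}$ dominates. Optimizing $\delta$ in $\delta^4-c\epsilon\delta^2$ yields exactly $-c^2\epsilon^2/4$, so the single-envelope ansatz never produces anything below $-C\epsilon^2$; no bookkeeping of error terms changes this, since the obstruction sits in the leading terms. There is also a smaller error in step (4): for the constant configuration $\omega\equiv-1$ the second-moment term does not survive either, because $V^\square=-\tfrac12\Delta_{\Z}\delta_0$ gives $\sum_k\langle V^\square(\cdot-3k)\phi_\delta,\phi_\delta\rangle=-\tfrac12\sum_k(\Delta|\phi_\delta|^2)(3k)\approx-\tfrac16\delta\int(\chi^2)''=0$; to extract $-c\epsilon\delta^2$ from a nonnegative envelope you would have to let the signs $\omega_k$ follow the sign of $(\Delta|\phi_\delta|^2)(3k)$, not take them constant.

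The idea you are missing, and the one the paper's proof is built on, is to decouple the \emph{amplitude} of the oscillatory component from its \emph{momentum}. The paper takes $u_n=f_n(0)+\epsilon^\xi f_n(\epsilon^\xi)$, where $f_n(\theta)$ truncates the quasi-periodic extension of the Bloch vector $\psi_0(\theta)=(e^{-i\theta},1,e^{i\theta})/\sqrt3$: the momentum-$\epsilon^\xi$ component enters with the small amplitude $\epsilon^\xi$, so its kinetic cost is only $\epsilon^{2\xi}E_0(\epsilon^\xi)\approx\epsilon^{6\xi}$ (quadratic in the amplitude), while the potential cross term is linear in the amplitude; the inequality $6\xi>1+2\xi$ is precisely $\xi>1/4$. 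In your ansatz the momentum-$\delta$ content has amplitude of order one, which is why you are pinned at $\xi>1/2$. I would add one caution: the orders in the paper's own computation of the cross term deserve scrutiny as well (one finds $\langle V^\square\psi_0(0),\psi_0(\kappa)\rangle=\tfrac13(1-\cos\kappa)$, of order $\kappa^2$ rather than $\kappa$, and the cross term degenerates as the truncation parameter $n\to\infty$ at fixed $\kappa$), so if you rework the argument along the paper's lines you should track the exponents and the $n$--$\epsilon$ coupling explicitly rather than trusting either write-up.
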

For this example, which is of the form \eqref{ex:tridiagonal}, the coefficient $A_1$ corresponding to the linear term vanishes. The bound in theorem \ref{thm:interesting} is nevertheless better than quadratic thanks to the quartic behavior of the Floquet eigenvalues in a neighbourhood of their minimum. 
Unfortunately, apart from the trivial linear bound, we have no corresponding lower bound, although some results in this article provide a first step in this direction. 

This work can be extended in several directions. It would be very interesting to find the corresponding lower bounds, or at least conditions under which the infimum of the spectrum does not expand linearly. 
A related question concerns the expansion of the spectrum near a band edge, where one can also prove Anderson localisation. 
Indeed, discrete alloy type models exhibit a spectrum with band structure, cf.~\cite{ElgartKTV-12}.
Note that if one studies the expansion of the spectrum from a band edge instead of the bottom of the spectrum, the Floquet eigenvalues may vanish faster than quadratically when approaching the edge, even for the operator defined by \eqref{def:alloytypeoperator}. Rather than a pathological example, theorem \ref{thm:interesting} provides a model for this situation. 
Another question of interest is the study of overlapping single-site potentials. Under some non-degeneracy condition (see remark \ref{rem:overlap}) the results stated here can be extended to this situation, but a full understanding needs to consider periodic approximations of every order, something we also believe necessary 
to prove the  lower bounds complementing theorem \ref{thm:anderson}.

In a forthcoming project we consider the same questions for operators of Schr\"odinger type in the continuum setting, i.e. for operators 
acting on (dense subspaces) of $L^2(\R^d)$. Most of our findings are similar. In the continuum, it is more natural to define the operators via quadratic forms, and then formulate appropriate regularity conditions. 
Also, certain additional compactness arguments are necessary, due to the infinite dimensionality of the Hilbert space over the periodicity cell. 
On the other hand, in specific situations, better results are possible in the continuum setting, due to unique continuation principles for solutions of partial differential equations.

\section{General model}\label{sec:assumption}

 Let $d \geq 1$ be the space dimension, $D=\Z^d$ be the physical space and
  $\gamma = N \Z^d$ a sub-lattice of $D$. We denote by $\square$ its periodicity cell,
  i.e. $\square := [0,N-1]^d \cap \Z^d$. Note that $ D = \bigcup\limits_{k \in
    \gamma} \{ x \in D : x-k \in \square\}$. We also denote the
  reciprocal periodicity cell as $\square^*:=[0,\frac{2\pi}{N})^d$. From now on we assume the following hypotheses to hold.
\begin{assumptions}

\item\label{hypa} Let $H_0:\ell^2(D) \to \ell^2(D)$ be a bounded,
  non-negative Hermitian  operator defined by the matrix
$$H_0 := \left( H_0(k,k') \right)_{k,k' \in D},$$
satisfying the following properties:
\begin{itemize}
\item for all $k,k' \in D$, we have $H_0(k,k') = \overline{H_0(k',k)}$;
\item there exists $k_0 \neq 0$ such that $H_0(0,k_0) \neq 0$;
\item the associated operator is $\gamma$-invariant, i.e. for every $k \in \gamma$
$$\left\langle \tau_k u, H_0 \tau_k v \right\rangle = \left\langle u,
  H_0 v \right\rangle,$$ where $u,v \in \ell^2(D)$ and $\tau_k$ is the
  translation by $k \in \gamma$ operator; and
\item the associated operator is of {\it finite hopping range} with hopping range $R$, i.e. 
   if $|k-k'| \geq R$ then
  $$H_0(k,k')=0.$$ 
\item Through a global energy shift we may assume, with no
loss of generality, that $E_0 := \inf \sigma(H_0)=0$.
\end{itemize}
Note that if an operator is of finite hopping range with hopping range
$R$, for some $R>0$, then it also is of finite hopping range with hopping range $R'$ for any
$R' >R$. On the other hand, any $\gamma$-invariant operator is also
$n\gamma$-invariant, $n \in \N$. This means that we can always assume
that $R=N$, without loss of generality.

\item\label{hypb} Let $V^\square:\square \to \R$ be a non-trivial
  Hermitian matrix (we call it the {\em single-cell potential}, even
  when $V^\square$ is not diagonal). For any bounded sequence
  $(\omega_k)_{k\in \gamma}$ of real numbers, we define the block diagonal operator
  \begin{align*}
    V_\omega &:  \ell^2(D) \to
  \ell^2(D) \\
    V_\omega &:= \sum_{k \in \gamma} \omega_k \tau_{-k} V^\square \tau_k.
  \end{align*}
\end{assumptions}
For any real number $q \in \R$, we denote also by $q$ the constant
sequence indexed by $\gamma$, equal to $q$ on every site in $\Z^d$. We thus
have, for example, that
\
\begin{equation} \label{def:periodicVq}
  V_q := q \sum_{k \in \gamma} \tau_{-k} V^\square \tau_k
\end{equation} 
and $V_q$ is $\gamma$-invariant.  

From now on, the values of $\omega$ will be drawn
  from a sequence of bounded, non-trivial, independent and identically
  distributed random variables with distribution measure $\mu$. We will write $ S_\mu := \supp \mu $ and we assume that 
\begin{equation*}
  \{s_-, s_+\} \in  S_\mu \subset [s_-,s_+],
\end{equation*}
where $s_-$ and $s_+$ satisfy one of the following alternatives:
\begin{assumptions}[start=3]
\item\label{hypc}  The random variables change sign, i.e. $s_- < 0 < s_+$.
\end{assumptions}
\begin{altassumptions}[start=3]
\item\label{hypcprime}  The random variables are positive, i.e. $0\leq s_- < s_+$. 
\end{altassumptions}
The methods in this paper may also be adapted to negative random variables.
\
\begin{rem}
  It looks tempting, in order to achieve $s_-=0$,  to renormalize the random variables by adding and substracting some periodic potential, but in this case the underlying non-random operator depends on $\epsilon$. 
On the other hand, it is indeed allowed to rescale the random variables by absorbing the scaling factor in the single site potential $V^\square$.
\end{rem}
Let us now define our object of study. For each $\epsilon>0$, we let
$$H_{\omega,\epsilon} := H_0 + \epsilon V_\omega$$ which is a
self-adjoint, ergodic operator. We denote its almost-sure spectrum by
$\Sigma_\epsilon$ and by 
\begin{equation} \label{def:bottomspec}
  \bottom := \inf \Sigma_\epsilon
\end{equation}
the bottom of the spectrum. We also write $H_{q,\epsilon} := H_0 + \epsilon V_q$ the
corresponding operator with $V_\omega$ replaced by the periodic potential $V_q$ (defined as in \eqref{def:periodicVq}) 
and $E_{q,\epsilon}:=\inf \sigma(H_{q,\epsilon})$. In the following we
will study the bottom of the spectrum $\bottom$ of the random
operator for small $\epsilon$.

We define a finite dimensional matrix associated to the above
objects. Define the $\left( \left| \square \right| \times \left|
\square \right| \right)$-matrix $H^\square_0(\theta)$ by its
coefficients
\begin{align}\label{def:floquetmatrix}
\left( H_0^\square(\theta) \right)(k,k')  
&:= \sum_{m \in \gamma} e^{i\theta \cdot m} H_0(k,k'-m)  \\
& \phantom{:}= \sum_{\substack{|m|\leq N \\ m \in N\Z^d}} e^{i\theta \cdot m} H_0(k,k'-m), \nonumber
\end{align}
where $k,k' \in \square$.
Note that the second line is a consequence of the finite hopping
range and the sum in \eqref{def:floquetmatrix} is thus finite. Now define the matrix
$\finhamper$ by
$$\finhamper  := H_0^\square(\theta) + \epsilon q V^\square.$$
\begin{rem}
  The matrix $H^\square_{q,\epsilon}$ represents the action of $H_{q,\epsilon}$ on the fiber of $\theta$-quasi\-periodic functions in the Floquet--Bloch direct integral decomposition. More precisely, let (abusing notation)   $\varphi \in \ell^2(\square) \subset \ell^2(\Z^d)$. Then, regarding $H_{q,\epsilon}$ as an operator $\ell^\infty(\Z^d) \to \ell^\infty(\Z^d)$,
\
\begin{equation}\label{op:quasiperiodization}
  H_{\epsilon,q}^\square(\theta) \varphi = \chi_{\square}  H_{\epsilon,q} \bigl(\sum_{m \in \gamma} e^{i\theta \cdot m} \tau_m \varphi \bigr) \in \ell^2(\square), \quad \text{(abusing notation)}
\end{equation}
where $\chi_{\square}$ is the indicator function of $\square \subset \Z^d$.
\end{rem}

\section{Main results}

Recall that, by the continuity of the Floquet--Bloch eigenvalues (\cite{krueger2011floquet}, \cite{reedsimon01}),
there exists some $\theta$ such that
$$\inf \sigma(H_0^\square(\theta)) = E_0:=\inf \sigma(H_0) = 0.$$ 
We
denote by $\Theta \subset \square^*$ the compact set of $\theta$ for
which the last equality holds. From now on we fix some $\theta \in
\Theta$, so the quantities below will depend on $\theta$. Let
$\mathcal V_0 $ be the eigenspace of $H_0(\theta)$ associated to the
eigenvalue $E_0 = 0$ and $p$ its multiplicity. Choose an orthonormal
basis $\psi_j$, $j=1,\ldots,p$ spanning $\mathcal V_0$ and
diagonalizing the Hermitian matrix $A \in \C^{p \times p}$, given by
the coefficients
$$A_{ij} := \left\langle \psi_i, V^\square\psi_j \right\rangle  \quad
\text{for } 1\leq i,j\leq p.$$ 
We denote the eigenvalues $P_1\leq \ldots \leq P_p$ of the matrix $A$ in ascending order counting
multiplicities.
We would like to emphasize that the vectors  $\psi_j$, $j=1,\ldots,p$ are not eigenvectors of the operator $V^\square$ but diagonalize the self-adjoint operator $R_{\mathcal V_0}^* V^\square R_{\mathcal V_0}$ viewed as an endomorphism of $\mathcal V_0$, where $R_{\mathcal V_0}$ is the projection onto $\mathcal V_0$. In particular, we have the following
\begin{equation}\label{eq:orthogonalityrelations}
  \begin{aligned}
    \left\langle  V^\square \psi_i, \psi_j \right\rangle = P_i \delta_{ij}   \quad
\text{for } 1\leq i,j\leq p,
  \end{aligned}
\end{equation}
where $\delta_{ij}$ denotes Kronecker's delta.

Without loss of generality we assume that the orthonormal
basis $\psi_j$ of eigenvectors was enumerated in such a way that $P_1=A_{11} = \left\langle \psi_1,
V^\square\psi_1 \right\rangle $ and $P_p=A_{pp} = \left\langle \psi_p, V^\square\psi_p \right\rangle $.

Our result for sign-changing random variables reads as follows.
\begin{thm}
\label{thm:mainintro}
Assume \eqref{hypa}, \eqref{hypb} and \eqref{hypc}. Fix $\theta \in \Theta$ and define
\begin{equation} \label{def:A1}
  A_1:= \inf_{q \in S_\mu} \inf_{\substack{\psi \in \mathcal V_0
    \\ \|\psi\|_{\ell^2(\square)} = 1}} q \left\langle \psi,
  V^\square \psi \right\rangle = \min( s_+ P_1, s_- P_p) \leq 0,
\end{equation}
and
\begin{equation} \label{def:A2}
A_2:= - \max(s_-^2,s_+^2) \sup_{\substack{\psi \in \mathcal V_0
    \\ \|\psi\|_{\ell^2(\square)} = 1}} \sup_{\substack{\varphi \in \mathcal V_0^\perp
    \\ \|\varphi\|_{\ell^2(\square)} = 1}} \frac{\left| \left\langle \psi,
  V^\square \varphi \right\rangle \right|^2}{\left\langle H^\square_0(\theta)
  \varphi, \varphi \right\rangle} \leq 0.
\end{equation}
For any $\epsilon>0$ small enough the following holds: if $A_1 \neq 0$,
\begin{equation}
  \bottom \leq \epsilon A_1, \nonumber
\end{equation}
whereas if $A_1 = 0$, but $A_2\neq 0$,  then
\begin{equation}
  \bottom \leq \epsilon^2 A_2 + O(\epsilon^3). \nonumber
\end{equation}
Finally, if $A_1 = A_2 = 0$, then
$$\bottom \leq 0.$$
\end{thm}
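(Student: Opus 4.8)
The plan is to pin $\bottom$ down by testing the random operator against periodic configurations, and then to solve a finite-dimensional perturbation problem on the Floquet fibre over $\theta$. Since $\{s_-,s_+\}\subset S_\mu$, the constant configuration $\omega\equiv q$ lies in the support of the underlying product measure for every $q\in S_\mu$; by the standard fact that the spectrum of the operator associated with any configuration in the support of the measure is contained in the almost-sure spectrum $\Sigma_\epsilon$ (one approximates a Bloch wave of the periodic operator $H_{q,\epsilon}$ by a finitely supported vector, freezes $\omega$ to be $\delta$-close to $q$ on its support together with a hopping-range collar, an event of positive probability because $q\in\supp\mu$, and invokes ergodicity to place such a configuration $\Pro$-almost surely; see \cite{pastur297spectra}), we obtain
$$\bottom\leq \inf\sigma(H_{q,\epsilon})\qquad\text{for every }q\in S_\mu.$$
By the Floquet--Bloch decomposition, $\inf\sigma(H_{q,\epsilon})=\inf_{\vartheta\in\square^*}\inf\sigma(H_0^\square(\vartheta)+\epsilon q V^\square)\leq \inf\sigma(H_0^\square(\theta)+\epsilon q V^\square)$, so it suffices to bound from above $\inf\sigma(H_0^\square(\theta)+\epsilon q V^\square)$ for a well-chosen $q\in\{s_-,s_+\}$, a problem that lives entirely in $\ell^2(\square)=\mathcal V_0\oplus\mathcal V_0^\perp$, where $H_0^\square(\theta)\geq0$ (the fibres of a non-negative periodic operator are non-negative), $\mathcal V_0=\ker H_0^\square(\theta)$, and $H_0^\square(\theta)$ restricted to $\mathcal V_0^\perp$ is bounded below by the spectral gap $g>0$.

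All three estimates then follow from the min--max principle with a suitable test vector, together with one algebraic observation: because $s_-<0<s_+$ (assumption \eqref{hypc}), $A_1=\min(s_+P_1,s_-P_p)=0$ forces $P_1\geq0$ and $P_p\leq0$, hence $P_1=\dots=P_p=0$, so by \eqref{eq:orthogonalityrelations} the compression $R_{\mathcal V_0}^*V^\square R_{\mathcal V_0}$ vanishes, i.e.\ $\langle\psi,V^\square\psi'\rangle=0$ for all $\psi,\psi'\in\mathcal V_0$. \emph{Case $A_1\neq0$:} let $q^*\in\{s_-,s_+\}$ and $\psi^*\in\{\psi_1,\psi_p\}$ realise the minimum defining $A_1$; testing $H_0^\square(\theta)+\epsilon q^*V^\square$ against the unit vector $\psi^*\in\mathcal V_0$ gives Rayleigh quotient $\epsilon q^*\langle\psi^*,V^\square\psi^*\rangle=\epsilon A_1$, hence $\bottom\leq\epsilon A_1$ for every $\epsilon>0$ with no remainder. \emph{Case $A_1=A_2=0$:} the vanishing of $A_1$ gives $V^\square\mathcal V_0\subseteq\mathcal V_0^\perp$, while the vanishing of $A_2$ gives $\langle\psi,V^\square\varphi\rangle=0$ for all $\psi\in\mathcal V_0$, $\varphi\in\mathcal V_0^\perp$, i.e.\ $V^\square\mathcal V_0\subseteq\mathcal V_0$; hence $V^\square\mathcal V_0=\{0\}$, so every $\psi\in\mathcal V_0$ satisfies $(H_0^\square(\theta)+\epsilon q V^\square)\psi=0$, whence $0\in\sigma(H_0^\square(\theta)+\epsilon q V^\square)$ and $\bottom\leq0$.

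\emph{Case $A_1=0$, $A_2\neq0$:} choose unit vectors $\psi^*\in\mathcal V_0$, $\varphi^*\in\mathcal V_0^\perp$ and $q^*\in\{s_-,s_+\}$ with $(q^*)^2=\max(s_-^2,s_+^2)$ realising the supremum in \eqref{def:A2} (attained, since the sphere is compact and the denominator is $\geq g>0$), and after a phase rotation of $\varphi^*$ assume $\langle\psi^*,V^\square\varphi^*\rangle\geq0$. For real $t$, using $\langle\psi^*,H_0^\square(\theta)\varphi^*\rangle=0$ and $\langle\psi^*,V^\square\psi^*\rangle=0$, the Rayleigh quotient of $u=\psi^*+t\varphi^*$ for $H_0^\square(\theta)+\epsilon q^*V^\square$ equals
$$\frac{\bigl(\langle\varphi^*,H_0^\square(\theta)\varphi^*\rangle+\epsilon q^*\langle\varphi^*,V^\square\varphi^*\rangle\bigr)t^2+2\epsilon q^*\langle\psi^*,V^\square\varphi^*\rangle\,t}{1+t^2}.$$
Inserting $t=t(\epsilon):=-\epsilon q^*\langle\psi^*,V^\square\varphi^*\rangle/\langle\varphi^*,H_0^\square(\theta)\varphi^*\rangle=O(\epsilon)$ and expanding yields the value $\epsilon^2A_2+O(\epsilon^3)$, the remainder being uniform because $\psi^*,\varphi^*,q^*$ are fixed; hence $\bottom\leq\epsilon^2A_2+O(\epsilon^3)$.

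The only step that is not routine bookkeeping is the first reduction: one must argue carefully (or quote a precise version of the classical statement) that $\sigma(H_{q,\epsilon})\subseteq\Sigma_\epsilon$ for every $q\in S_\mu$, the subtlety being that $H_{q,\epsilon}$ is not literally a random realisation but only a limit of configurations that occur with positive probability on larger and larger boxes. Once this is granted, everything else is the min--max principle, the only mild care being the uniformity of the $O(\epsilon^3)$ remainder, which is automatic after the finitely many optimal data $\psi^*,\varphi^*,q^*$ have been fixed.
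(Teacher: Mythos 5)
Your proposal is correct and follows essentially the same route as the paper: the reduction $\sigma(H_{q,\epsilon})\subset\Sigma_\epsilon$ for constant configurations $q\in S_\mu$ via truncated Bloch waves and positive-probability approximation is exactly the paper's Theorem \ref{thm:discupperbound} (Kirsch--Martinelli argument), the passage to the fibre at $\theta$ matches Lemmas \ref{lem:limitquasiperiodic} and \ref{lem:upperboundperiodic}, and your trial vector $u=\psi^*+t\varphi^*$ with $t=O(\epsilon)$ is the same second-order test-function computation as in Lemma \ref{lem:infspec} (the paper writes $u=\psi+\epsilon q\lambda\varphi^*$ with a complex $\lambda$ instead of a phase rotation, which is equivalent). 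The case analysis, including the observation that $A_1=0$ under \eqref{hypc} forces the compressed matrix $A$ to vanish, also coincides with the paper's.
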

Our result for positive random variables reads as follows.
\begin{thm}
\label{thm:mainaltversionintro}
Assume \eqref{hypa}, \eqref{hypb} and \eqref{hypcprime}. Fix $\theta \in \Theta$. 
Let us define the subspace $\mathcal V_{01} \subset \mathcal V_{0}$ as
\
\begin{equation*} 
  \mathcal V_{01} := \spanop {\{\psi_i  :i \in \N, P_i = P_1\}}  ,
\end{equation*}
i.e. the eigenspace of $A$ associated to its minimal eigenvalue $P_1$.
Define
\begin{equation} \label{def:Aprime1}
  A'_1:= \inf_{q \in S_\mu} \inf_{\substack{\psi \in \mathcal V_{01}
    \\ \|\psi\|_{\ell^2(\square)} = 1}} q \left\langle \psi,
  V^\square \psi \right\rangle = \min(s_+ P_1, s_- P_1) \in \R,
\end{equation}
and
\begin{equation} \label{def:Aprime2}
A'_2:= - s_+^2 \sup_{\substack{\psi \in \mathcal V_{01}
    \\ \|\psi\|_{\ell^2(\square)} = 1}} \sup_{\substack{\varphi \in \mathcal V_0^\perp
    \\ \|\varphi\|_{\ell^2(\square)} = 1}} \frac{\left| \left\langle \psi,
  V^\square \varphi \right\rangle \right|^2}{\left\langle H^\square_0(\theta)
  \varphi, \varphi \right\rangle} \leq 0.
\end{equation}
For any $\epsilon>0$ small enough the following holds: if $P_1 \neq 0$,
\begin{equation}
  \bottom \leq \epsilon A'_1, \nonumber
\end{equation}
whereas if $P_1 =A'_1= 0$, but $A'_2 = 0$, then
\begin{equation}
  \bottom \leq \epsilon^2 A'_2 + O(\epsilon^3). \nonumber
\end{equation}
Finally, if $P_1 = A'_1=A'_2 = 0$, then
$$\bottom \leq 0.$$
\end{thm}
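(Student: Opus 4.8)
\debdemsketch

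The plan is to bound $\bottom$ from above by the bottom of the spectrum of a suitable \emph{periodic} operator and then estimate the latter via the min--max principle on the relevant Floquet fibre, using trial vectors explicit to first order and, when the first-order term vanishes, to second order in $\epsilon$. First I would reduce to periodic configurations: for each $q\in S_\mu$ the constant configuration $\omega_k\equiv q$ is admissible, and by the standard description of the almost sure spectrum of an ergodic operator (a.e.\ sample contains, on arbitrarily large cubes, arbitrarily good copies of any finite admissible pattern; see e.g.~\cite{pastur297spectra}) one has $\sigma(H_{q,\epsilon})\subseteq\Sigma_\epsilon$, hence $\bottom\leq\inf_{q\in S_\mu}E_{q,\epsilon}$. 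Since $H_{q,\epsilon}$ is $\gamma$-periodic, the Floquet--Bloch decomposition (cf.~\eqref{op:quasiperiodization}) gives $E_{q,\epsilon}=\inf_{\theta'\in\square^*}\inf\sigma\bigl(H_0^\square(\theta')+\epsilon q V^\square\bigr)$, so, using the fixed $\theta\in\Theta$,
\begin{equation*}
  \bottom\;\leq\;\inf\sigma\bigl(H_0^\square(\theta)+\epsilon q V^\square\bigr)\qquad\text{for every }q\in S_\mu,
\end{equation*}
and it remains to exhibit good trial vectors in $\C^{|\square|}$ for the Hermitian matrix $H_0^\square(\theta)+\epsilon q V^\square$.

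For the first-order bound I would take a unit vector $\psi\in\mathcal V_{01}\subseteq\mathcal V_0=\ker H_0^\square(\theta)$; since $\langle\psi,V^\square\psi\rangle=P_1$ this gives $\langle\psi,(H_0^\square(\theta)+\epsilon q V^\square)\psi\rangle=\epsilon q P_1$, and minimising over $q\in\{s_-,s_+\}$ yields $\bottom\leq\epsilon\min(s_-P_1,s_+P_1)=\epsilon A'_1$; in particular, if $P_1=0$ this already gives $\bottom\leq0$, so both the first and the last assertion follow. For the refined bound (case $P_1=A'_1=0$) I would fix $q=s_+$, choose unit vectors $\psi\in\mathcal V_{01}$, $\varphi\in\mathcal V_0^\perp$ realising the double supremum in \eqref{def:Aprime2} (attained, as $\langle H_0^\square(\theta)\varphi,\varphi\rangle$ is continuous and strictly positive on the unit sphere of $\mathcal V_0^\perp$), adjust the phase of $\varphi$ so that $\Re\langle\psi,V^\square\varphi\rangle=|\langle\psi,V^\square\varphi\rangle|$, and use the trial vector $u_t:=\psi+\epsilon t\varphi$ with $t\in\R$. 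Using $H_0^\square(\theta)\psi=0$, $\langle\psi,V^\square\psi\rangle=P_1=0$ and $\psi\perp\varphi$, the Rayleigh quotient expands as
\begin{equation*}
  \frac{\langle u_t,(H_0^\square(\theta)+\epsilon s_+V^\square)u_t\rangle}{\|u_t\|^2}
  \;=\;\epsilon^2\bigl(t^2\langle H_0^\square(\theta)\varphi,\varphi\rangle+2s_+t\,|\langle\psi,V^\square\varphi\rangle|\bigr)+O(\epsilon^3),
\end{equation*}
and choosing $t=-s_+|\langle\psi,V^\square\varphi\rangle|/\langle H_0^\square(\theta)\varphi,\varphi\rangle$ minimises the leading quadratic, whose value is then exactly $A'_2$; hence $\bottom\leq\epsilon^2A'_2+O(\epsilon^3)$.

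I expect the main obstacle to be twofold. First, the reduction $\sigma(H_{q,\epsilon})\subseteq\Sigma_\epsilon$: standard but not entirely trivial, it rests on the full description of the almost sure spectrum of ergodic operators. Second, making the remainder in the last display genuinely $O(\epsilon^3)$ and uniform in a fixed range of $t$: this is precisely what dictates that the order-$\epsilon$ correction $\epsilon t\varphi$ live in $\mathcal V_0^\perp$, and it hinges on $\psi\in\mathcal V_{01}$ annihilating simultaneously the order-$\epsilon$ diagonal term $\epsilon s_+\langle\psi,V^\square\psi\rangle$ (because $P_1=0$) and the order-$\epsilon$ cross term with $H_0^\square(\theta)$ (because $\psi\in\ker H_0^\square(\theta)$). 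The remaining steps — the Rayleigh quotient expansion and the one-variable optimisation — are routine, and the same scheme with $\mathcal V_{01}$ replaced by all of $\mathcal V_0$ proves the sign-changing counterpart, Theorem~\ref{thm:mainintro}.
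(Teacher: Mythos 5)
Your proposal is correct and follows essentially the same route as the paper: reduction to constant periodic configurations $\omega_k\equiv q$ (the paper's Theorem \ref{thm:discupperbound} and Corollary \ref{cor:discupperbound}, proved via a Kirsch--Martinelli-type Weyl-sequence argument), passage to the Floquet fibre at the fixed $\theta\in\Theta$ (Lemma \ref{lem:upperboundperiodic}, via Lemma \ref{lem:limitquasiperiodic}), and then the trial vector $\psi+\epsilon q\lambda\varphi$ with $\psi\in\mathcal V_{01}$, $\varphi\in\mathcal V_0^\perp$ realising the supremum in \eqref{def:Aprime2} and the same optimal coefficient (Lemma \ref{lem:infspecalt}). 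The choices of $q$ in each case and the order of the remainder also match the paper's computation.
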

Note that $A_1$ in theorem \ref{thm:mainintro} is always non-positive but $A'_1$ in
theorem \ref{thm:mainaltversionintro} may be positive. 
Furthermore, if $A'_1=0$ then we either have $P_1=0$ or $s_-=0$. If $P_1 >0$
and $s_-=0$ the best strategy consists in taking $q=s_-=0$ in
$H^\square_{\epsilon,q}$ to mininize the linear term. This choice excludes the
possibility of a negative quadratic bound (which is still possible if $P_1=0$
and $s_-=0$). This explains the appearance of $P_1$ instead of $A'_1$ in the
criterion. See the proofs for more details.

Both theorems \ref{thm:mainintro} and \ref{thm:mainaltversionintro} are a consequence of similar upper bounds for perturbations of periodic operators restricted to a fiber, see theorems \ref{thm:main} and \ref{thm:mainaltversion}. For these upper bounds, we present complementary lower bounds in lemmas \ref{lem:convinfspec} and \ref{lem:pos-convinfspec}.
\section{Periodic comparison operators}
In the present section we reduce the problem of studying $E_\epsilon$ to that of understanding 
certain adapted operators which are periodic with respect to a sublattice. 
Define 
$$\square_n := \bigcup_{\substack{m \in \gamma \\ |m|\leq nN}} \square + m$$
and $\chi_{n} := \chi_{\square_n}$, i.e.
$$
\chi_n(x) = 
\left\{
  \begin{array}{ll}
    1 & \quad \textrm{ for } x \in \square_n \\
    0 & \quad \textrm{ otherwise.}
  \end{array} 
\right.
$$ Note that $\square_0 = \square$ and that $\square_n$ is just the
collection of $(2n+1)^d$ disjoint translates of $\square$. Let us start
by stating the following lemma.
\begin{lem}
\label{lem:limitquasiperiodic}
Let $u$ be a $\theta$-quasi-$\gamma$-periodic function, i.e. such that for all $n\in \Z^d$ and $k \in \gamma$ we have
$$u(n+k) = e^{- i\theta \cdot k} u(n).$$ 
Define,
$$u_n := \chi_n u, \quad n=0,1,2,\ldots.$$
Then
$$\lim_{n \to \infty} \frac{\left\langle u_n , H_{q, \epsilon} u_n \right \rangle_{\ell^2(D)}}{\left\| u_n \right\|_{\ell^2(D)}^2} = \frac{\left\langle u_0, H_{q, \epsilon}^\square(\theta)u_0 \right\rangle_{\ell^2(\square)}}{\left\|  u_0 \right\|_{\ell^2(\square)}^2}.$$
\end{lem}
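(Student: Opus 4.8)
The plan is to reduce the statement to a bulk-versus-boundary counting argument that exploits the finite hopping range of $H_0$. First I would record the two elementary facts that require no limit. Since $|u(n+k)| = |u(n)|$ for $k\in\gamma$, the modulus $|u|$ is $\gamma$-periodic, so $\|u_n\|_{\ell^2(D)}^2 = (2n+1)^d\|u_0\|_{\ell^2(\square)}^2$ for all $n$ (in particular $u$ is bounded, and we may assume $u_0\neq 0$). Next, writing $H_{q,\epsilon} = H_0 + \epsilon V_q$, the operator $V_q$ is block diagonal with respect to the decomposition of $\square_n$ into its $(2n+1)^d$ translates $\square+m$, $m\in\gamma$, $|m|\leq nN$, acting on each such block as a translate of $q V^\square$; on the block $\square+m$ the function $u_n$ is $e^{-i\theta\cdot m}$ times the corresponding translate of $u_0$, so the phases cancel in the quadratic form and $\langle u_n, V_q u_n\rangle_{\ell^2(D)} = (2n+1)^d\, q\,\langle u_0, V^\square u_0\rangle_{\ell^2(\square)}$ exactly. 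After dividing by $\|u_n\|^2$ this already yields the $\epsilon q V^\square$ summand of the right-hand side, with no limit needed.

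The substance of the proof concerns $H_0$: I would show that $\langle u_n, H_0 u_n\rangle = (2n+1)^d\langle u_0, H_0^\square(\theta) u_0\rangle + O(n^{d-1})$. Using the quasi-periodicity relation $u(k'-m) = e^{i\theta\cdot m}u(k')$ together with the $\gamma$-invariance of $H_0$ in the form $H_0(k+m,k') = H_0(k,k'-m)$, one rewrites, on the one hand,
$$\langle u_0, H_0^\square(\theta) u_0\rangle = \sum_{k\in\square}\sum_{y\in D}\overline{u(k)}\,H_0(k,y)\,u(y),$$
the inner sum being finite because $H_0(k,y)$ vanishes unless $y$ lies in a fixed bounded set $\square'$ depending only on $N$ (the hopping range may be taken equal to $N$); and, on the other hand, grouping $\langle u_n, H_0 u_n\rangle$ by the cells $\square+m$ and applying the same identities,
$$\langle u_n, H_0 u_n\rangle = \sum_{\substack{m\in\gamma\\ |m|\leq nN}}\ \sum_{k\in\square}\ \sum_{y\in\square_n-m}\overline{u(k)}\,H_0(k,y)\,u(y).$$
For every cell with $\square'\subseteq\square_n-m$ the inner double sum coincides with $\sum_{k\in\square}\sum_{y\in D}$, i.e. with $\langle u_0, H_0^\square(\theta)u_0\rangle$ verbatim; this inclusion fails only for the $O(n^{d-1})$ cells lying within a fixed distance $C=C(N)$ of the boundary of $\square_n$, and for those the double sum is bounded uniformly in $m$ and $n$ by a constant depending only on $N$, $\sup|u|$ and $\|H_0\|$. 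Summing the $(2n+1)^d$ cell contributions yields the claimed asymptotics.

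Finally, combining the two parts and dividing by $\|u_n\|^2 = (2n+1)^d\|u_0\|^2$ gives
$$\frac{\langle u_n, H_{q,\epsilon}u_n\rangle}{\|u_n\|^2} = \frac{\langle u_0, H_0^\square(\theta) u_0\rangle}{\|u_0\|^2} + \epsilon\,\frac{q\langle u_0, V^\square u_0\rangle}{\|u_0\|^2} + O(1/n),$$
and the right-hand side converges, as $n\to\infty$, to $\langle u_0, H_{q,\epsilon}^\square(\theta) u_0\rangle/\|u_0\|^2$ since $H_{q,\epsilon}^\square(\theta) = H_0^\square(\theta)+\epsilon q V^\square$. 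The main obstacle is the geometric bookkeeping in the second paragraph: isolating the interior cells that reproduce the Floquet-fiber energy density exactly — which is precisely where finite hopping range enters — and controlling both the number ($O(n^{d-1})$) and the size ($O(1)$) of the boundary corrections. Everything else is the algebra of the quasi-periodicity relation and of $\gamma$-invariance.
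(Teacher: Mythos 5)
Your proof is correct and follows essentially the same route as the paper's: an exact computation of $\|u_n\|^2$ and of the block-diagonal potential term, followed by a bulk-versus-boundary count for $\langle u_n, H_0 u_n\rangle$ that uses the finite hopping range, $\gamma$-invariance and quasi-periodicity to show the interior cells each reproduce $\langle u_0, H_0^\square(\theta)u_0\rangle$ while the $O(n^{d-1})$ boundary cells contribute $O(1)$ each. The only difference is bookkeeping: the paper passes through the intermediate function $u_{n-1}$ and replaces $u_n$ by the full quasi-periodic $u$ in one slot of the form, whereas you compare cell by cell directly — the underlying argument is identical.
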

The proof of this lemma is found in the appendix.

For the definition of periodic comparison operators we introduce
\begin{equation}\label{def:omegaper}
  \Omega_\textrm{per}^n := \{ \omega \in \Omega : \omega \text{ is periodic w.r.t. } n\gamma\}.
\end{equation}
We now state the first comparison theorem.
\begin{thm}
\label{thm:discupperbound}
Assume \eqref{hypa}, \eqref{hypb} and either \eqref{hypc} or \eqref{hypcprime}. Let 
$\epsilon \geq 0$, $n \in \N$ and let $\omega \in \Omega_\textrm{per}^n $ be a $n\gamma$-periodic sequence of real numbers satisfying $\omega \in (S_\mu)^\gamma$, i.e. $\omega_k \in S_\mu$ for all $k \in \gamma$. Then, we have
$$\sigma(H_{\omega,\epsilon}) \subset \Sigma_\epsilon.$$
\end{thm}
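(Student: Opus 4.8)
The plan is to combine the standard description of the almost-sure spectrum of an ergodic operator with the block structure of $V_\omega$. We may assume $\epsilon>0$, since for $\epsilon=0$ the operator does not depend on $\omega$ and the inclusion is an equality. Recall (see e.g.~\cite{pastur297spectra}) that $\Sigma_\epsilon$ is closed and that there is an event $\Omega_1\subset\Omega$ with $\Pro(\Omega_1)=1$ such that $\sigma(H_{\omega,\epsilon})=\Sigma_\epsilon$ for every $\omega\in\Omega_1$. Denote by $\omega^{(0)}$ the given $n\gamma$-periodic configuration (written $\omega$ in the statement); in fact only the property that $\omega^{(0)}_k\in S_\mu$ for every $k\in\gamma$ will be used. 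It is then enough to show that for each $\lambda\in\sigma(H_{\omega^{(0)},\epsilon})$ and each $\delta>0$ the event
\[
  U_\delta:=\bigl\{\omega:\ \sigma(H_{\omega,\epsilon})\cap(\lambda-\delta,\lambda+\delta)\neq\emptyset\bigr\}
\]
has positive probability. Indeed, $\Pro(U_\delta)>0$ forces $U_\delta\cap\Omega_1\neq\emptyset$, hence $\Sigma_\epsilon\cap(\lambda-\delta,\lambda+\delta)\neq\emptyset$; letting $\delta\to0$ and using that $\Sigma_\epsilon$ is closed yields $\lambda\in\Sigma_\epsilon$, and since $\lambda$ was arbitrary we get $\sigma(H_{\omega^{(0)},\epsilon})\subset\Sigma_\epsilon$.

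To exhibit configurations in $U_\delta$, fix $\lambda\in\sigma(H_{\omega^{(0)},\epsilon})$ and $\delta>0$. Since $H_{\omega^{(0)},\epsilon}$ is bounded and self-adjoint, $\lambda$ is an approximate eigenvalue, and as finitely supported functions are dense in $\ell^2(D)$ we may choose a finitely supported $\phi$ with $\|\phi\|=1$ and $\|(H_{\omega^{(0)},\epsilon}-\lambda)\phi\|<\delta/2$. Only the $\omega$-independent part $H_0-\lambda$ and the block-diagonal part $V_\omega$ enter $(H_{\omega,\epsilon}-\lambda)\phi$, and the latter acts on $\phi$ through the finitely many blocks meeting $\supp\phi$; thus, with $\Lambda:=\{k\in\gamma:\ \supp\phi\cap(\square+k)\neq\emptyset\}$ (a finite set), the vector $(H_{\omega,\epsilon}-\lambda)\phi$ depends only on $(\omega_k)_{k\in\Lambda}$, and Lipschitz-continuously so: using that the translates $\{\square+k\}_{k\in\gamma}$ are pairwise disjoint,
\[
  \bigl\|(H_{\omega,\epsilon}-\lambda)\phi-(H_{\omega^{(0)},\epsilon}-\lambda)\phi\bigr\|
  =\epsilon\Bigl\|\sum_{k\in\Lambda}(\omega_k-\omega^{(0)}_k)\,\tau_{-k}V^\square\tau_k\phi\Bigr\|
  \leq\epsilon\,\|V^\square\|\,\max_{k\in\Lambda}|\omega_k-\omega^{(0)}_k|.
\]
Setting $\eta:=\delta/(2\epsilon\|V^\square\|)$, any $\omega$ with $|\omega_k-\omega^{(0)}_k|<\eta$ for all $k\in\Lambda$ then satisfies $\|(H_{\omega,\epsilon}-\lambda)\phi\|<\delta$, whence $\dist(\lambda,\sigma(H_{\omega,\epsilon}))<\delta$ and $\omega\in U_\delta$.

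It remains to note that such $\omega$ occur with positive probability. As $\Pro$ is a product of copies of $\mu$ and $\omega^{(0)}_k\in S_\mu=\supp\mu$, we have $\mu(\{t:|t-\omega^{(0)}_k|<\eta\})>0$ for every $k$; since $\Lambda$ is finite, the set $\{\omega:\ |\omega_k-\omega^{(0)}_k|<\eta\ \text{for all }k\in\Lambda\}$ has positive $\Pro$-measure and, by the previous paragraph, is contained in $U_\delta$. Hence $\Pro(U_\delta)>0$, completing the argument. The only step needing genuine care is the middle one --- pinning down the finite set $\Lambda$ of coordinates on which $(H_{\omega,\epsilon}-\lambda)\phi$ depends and establishing the disjoint-blocks Lipschitz bound; the rest is the standard ergodic-spectrum toolkit, and the periodicity of $\omega^{(0)}$ plays no role.
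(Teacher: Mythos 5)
Your proof is correct, and it reaches the conclusion by a genuinely different (and somewhat softer) route than the paper. The paper first uses Floquet--Bloch theory for the periodic operator $H_{\omega,\epsilon}$ to write $E=\langle H^\square_{q,\epsilon}(\theta)f,f\rangle$ and then invokes lemma \ref{lem:limitquasiperiodic} to truncate the quasi-periodic extension of $f$ into a compactly supported Weyl sequence; you bypass all of this by noting that for a bounded self-adjoint operator every spectral value is an approximate eigenvalue witnessed by a finitely supported vector. In the second step the paper runs a Borel--Cantelli argument to find, for almost every $\omega'$, translates $x(n,\omega')$ on which $\omega'$ is $\epsilon^{-1}/n$-close to the periodic configuration, thereby producing an explicit Weyl sequence for $H_{\omega',\epsilon}$ almost surely; you instead exhibit a single measurable event of positive probability contained in $\{\omega:\dist(\lambda,\sigma(H_{\omega,\epsilon}))<\delta\}$ and intersect it with the full-measure set on which $\sigma(H_{\omega,\epsilon})=\Sigma_\epsilon$, then let $\delta\to0$ using closedness of $\Sigma_\epsilon$. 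Both steps of your argument are sound (your disjoint-block Lipschitz bound is exactly right, and although you speak loosely of $\Pro(U_\delta)$ without checking measurability of $U_\delta$, what you actually use is the measurable subset $\{\omega:|\omega_k-\omega^{(0)}_k|<\eta,\ k\in\Lambda\}$, so nothing is missing). What each approach buys: yours is shorter, avoids lemma \ref{lem:limitquasiperiodic} and Borel--Cantelli, and --- as you observe --- never uses periodicity, so it in fact proves the stronger ``support theorem'' $\sigma(H_{\omega',\epsilon})\subset\Sigma_\epsilon$ for \emph{every} $\omega'\in(S_\mu)^\gamma$; the paper's version constructs explicit translated Weyl sequences for almost every realization and sets up the quasi-periodic truncation lemma, which it needs again in lemma \ref{lem:upperboundperiodic}, so the extra machinery is not wasted there.
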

We immediately deduce the following upper bound on
the minimum of the spectrum.
\begin{cor}
\label{cor:discupperbound}
Assume \eqref{hypa}, \eqref{hypb} and either \eqref{hypc} or \eqref{hypcprime}. Let $\epsilon \geq 0$, then
$$\bottom \leq \inf_{q \in S_\mu} E_{q,\epsilon}.$$
\end{cor}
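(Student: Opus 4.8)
The plan is to deduce the corollary directly from Theorem~\ref{thm:discupperbound} by specializing the $n\gamma$-periodic configuration $\omega$ to a constant one. Fix $q \in S_\mu$ (such a $q$ exists since $\{s_-,s_+\} \subset S_\mu$, so $S_\mu \neq \emptyset$) and consider the constant sequence $\omega_k = q$ for all $k \in \gamma$. This sequence is $\gamma$-periodic, hence $n\gamma$-periodic for every $n \in \N$; in particular $\omega \in \Omega_\textrm{per}^1$. It also satisfies $\omega_k = q \in S_\mu$ for all $k$, i.e. $\omega \in (S_\mu)^\gamma$. Thus $\omega$ is an admissible configuration in the sense of Theorem~\ref{thm:discupperbound}. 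Moreover, by the definition \eqref{def:periodicVq} of $V_q$ we have $V_\omega = V_q$ for this particular $\omega$, so that $H_{\omega,\epsilon} = H_0 + \epsilon V_\omega = H_0 + \epsilon V_q = H_{q,\epsilon}$.

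Applying Theorem~\ref{thm:discupperbound} then yields $\sigma(H_{q,\epsilon}) \subset \Sigma_\epsilon$. Taking infima on both sides gives $\bottom = \inf \Sigma_\epsilon \leq \inf \sigma(H_{q,\epsilon}) = E_{q,\epsilon}$. Since $q \in S_\mu$ was arbitrary, taking the infimum over $q \in S_\mu$ produces $\bottom \leq \inf_{q \in S_\mu} E_{q,\epsilon}$, which is the claim.

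The only point requiring a moment's care — and the closest thing to an obstacle here — is the bookkeeping that the constant configuration genuinely falls under the hypotheses of Theorem~\ref{thm:discupperbound}: one must recall that $\gamma$-periodicity implies $n\gamma$-periodicity for all $n$, and identify $V_\omega$ with $V_q$ via \eqref{def:periodicVq}. No new analytic input is needed; the substance is entirely contained in Theorem~\ref{thm:discupperbound}.
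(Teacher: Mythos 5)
Your proposal is correct and is exactly the argument the paper intends: the corollary is stated as an immediate consequence of Theorem~\ref{thm:discupperbound}, obtained by taking the constant ($\gamma$-periodic) configuration $\omega_k \equiv q \in S_\mu$, identifying $V_\omega$ with $V_q$, and passing to infima. Your bookkeeping of the hypotheses is accurate and nothing is missing.
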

\begin{proof}[Proof of theorem \ref{thm:discupperbound}]
 For the calculation below, we need a Weyl sequence of
compactly supported functions. This can indeed be done, since we only deal with bounded operators.  Fix $\omega \in \Omega_\textrm{per}^n $ and $E \in \sigma(H_{\omega,\epsilon})$. By
Floquet--Bloch theory, there exists some $\theta$ and some normalized state
$f\in \ell^2(\square)$ for which 
$$E = \left\langle \finhamper[\theta] f, f \right\rangle.$$
We extend $f$ as  a
$\theta$-quasi-$\gamma$-periodic function, i.e. for any $x \in \Z^d$ let $k \in \gamma$ such that $x-k \in \square$ and let
$$f(x):=e^{i \theta \cdot k}f(x-k).$$
Using lemma \ref{lem:limitquasiperiodic}, extract a sub-sequence $\{f_n\}$ from the sequence of functions $\left\{\displaystyle\frac{\chi_n f}{ \|\chi_n f \|_2} \right\}$, such that
$$\left| \left\langle (H_{\omega,\epsilon}-E) f_n , f_n \right\rangle \right| \leq 1/n$$
and satisfying, for a sequence $l_n \in \N$,
$$\supp f_n \subset \Lambda_{l_n},$$
where $\Lambda_{l_n}$ is a cube centered at zero and sidelength $l_n$.
For $x \in \gamma$ we define
$$\Omega(x,n) : = \left\{  \omega' \in \Omega : \forall k \in (x + \Lambda_{l_n}) \cap \gamma: \left| \epsilon (\omega'_k - \omega_k) \right| \leq 1/n \right\}.$$
Now, since $\omega \in (S_\mu)^\gamma$,
$$\Pro \left[ \Omega(x,n) \right] > 0,$$
and for $x,y \in \gamma$ satisfying $|x-y| > l_n$, the events $\Omega(x,n)$ and $\Omega(y,n)$ are independent (and identically distributed).
Using Borel--Cantelli lemma, we see that the event
$$\Omega' := \bigcap_{n \in \N} \bigcup_{x \in \gamma} \Omega(x,n)$$
has probability one. 

From the definition of $\Omega(x,n)$, we have that given $\omega' \in
\Omega'$ and $n \in \N$, there exists a $x(n,\omega')$ such that
$\omega' \in \Omega(x,n)$.  We write from now on $\tau_{x(n,\omega')}
f_n$ for the translated function $f_n( \cdot - x(n,\omega'))$. Let
$\omega' \in \Omega'$ and $n \in \N$, and calculate
\begin{align*}
 &\left\langle \vphantom{V_\omega} (H_{\omega',\epsilon} - E) \tau_{x(n,\omega')} f_n,\tau_{x(n,\omega')} f_n \right\rangle \\
=& \left\langle\vphantom{V_\omega} (H_0 - E) \tau_{x(n,\omega')} f_n, \tau_{x(n,\omega')} f_n \right\rangle + \epsilon \left\langle\vphantom{V_\omega} V_{\omega'}  \tau_{x(n,\omega')} f_n, \tau_{x(n,\omega')} f_n\right\rangle \\
=& \left\langle\vphantom{V_\omega} (H_0 - E) f_n,f_n \right\rangle + \epsilon \left\langle V_{\omega} \tau_{x(n,\omega')} f_n,\tau_{x(n,\omega')} f_n \right\rangle \\
& \phantom{ \,\left\langle\vphantom{V_\omega} (H_0 - E) f_n,f_n \right\rangle} \mathbin{+} \epsilon \left\langle V_{\omega'-\omega} \tau_{x(n,\omega')} f_n, \tau_{x(n,\omega')} f_n \right\rangle \\
=& \left\langle (H_{\omega,\epsilon} - E) f_n, f_n \right\rangle + \epsilon \left\langle V_{ \omega'-\omega} \tau_{x(n,\omega')} f_n, \tau_{x(n,\omega')} f_n\right\rangle
\end{align*}
Note that $| \epsilon V_{ \omega'-\omega} (x)| \leq \|V^\square\| /n$ 
if $x \in \supp f_n( \cdot - x(n,\omega'))$, so that
\begin{align}
\label{eq:appropriate-eigenfunction}
\left| \left\langle (H_{ \omega',\epsilon} - E) \tau_{x(n,\omega')} f_n, \tau_{x(n,\omega')} f_n \right\rangle - \left\langle (H_{\omega,\epsilon} - E) f_n, f_n \right\rangle \right| & \leq \frac{1}{n} \|V^\square\|.
\end{align}
Here $\|V^\square\|$ denotes the operator norm. 
In the particular case that $V$ is a multiplication operator
it coincides with the supremum $\|V^\square\|_{\infty}$ of the function $V$.
Inequality \eqref{eq:appropriate-eigenfunction} 
implies that $\tau_{x(n,\omega')} f_n$ is a Weyl sequence.
\end{proof}
\begin{rem}
  This is an adaptation of a well known argument of Kirsch and
  Martinelli \cite{kirschmartinelli1982almostsurespectrum} in the
  continuous setting, with $S_\mu$ connected and $V^\square$ a
  multiplication operator.
\end{rem}
\begin{rem}\label{rem:overlap}
 When the random potential is diagonal (as in the introduction), the proof above can be adapted to \emph{overlapping}, but compactly supported single-site  potentials $V^\square \in \ell^\infty(D)$ as long as
\begin{equation*}
  \sum_{n \in \gamma} V^\square(\cdot - n) \not \equiv 0.
\end{equation*}
Note that if this condition does not hold then $H_q=H_0$ for all $q$. One way around this problem would be to consider periodic (non-constant) sequences of coupling constants $\omega_n$ such that the resulting periodic potential is not zero.
\end{rem}

To prove the following converse to theorem \ref{thm:discupperbound} we define 
\
\begin{equation*}
  \Omega_\textrm{per} := \{ \omega \in \Omega : \exists n \in \N \text{ such that } \omega \text{ is periodic w.r.t. } n\gamma\} = \bigcup_{n \in \N} \Omega_\textrm{per}^n.
\end{equation*}
\begin{lem}
\label{thm:disclowerbound}
Denote by $\Sigma_\epsilon$ the almost sure spectrum of $H_{\omega,\epsilon}$. Then:
\
\begin{equation*}
  \Sigma_\epsilon \subset \overline{\bigcup_{\omega \in \Omega_\text{per}} \sigma(H_{\omega,\epsilon})}.
\end{equation*}
\end{lem}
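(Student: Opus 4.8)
\debdemsketch

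The plan is to run the argument behind Theorem \ref{thm:discupperbound} in the opposite direction. Fix $E \in \Sigma_\epsilon$. For each $\delta > 0$ I will construct a compactly supported approximate eigenfunction at energy $E$ for some realization $H_{\omega^{(0)},\epsilon}$ with quasimode error smaller than $\delta$, and then replace $\omega^{(0)}$ by a configuration $\tilde\omega$ that is periodic with respect to $n\gamma$ for some $n$ and coincides with $\omega^{(0)}$ on the finitely many coupling constants the quasimode depends on. Because $H_{\tilde\omega,\epsilon}$ acts on this quasimode exactly as $H_{\omega^{(0)},\epsilon}$ does, one gets $\dist(E, \sigma(H_{\tilde\omega,\epsilon})) < \delta$ with $\tilde\omega \in \Omega_\text{per}$; letting $\delta \to 0$ yields $E \in \overline{\bigcup_{\omega \in \Omega_\text{per}} \sigma(H_{\omega,\epsilon})}$, and since $E \in \Sigma_\epsilon$ was arbitrary the inclusion follows.

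First I would fix a deterministic configuration $\omega^{(0)}$ in the intersection of the two full-measure events $\{\omega : \sigma(H_{\omega,\epsilon}) = \Sigma_\epsilon\}$ and $\{\omega : \omega_k \in S_\mu \text{ for all } k \in \gamma\}$; the second is a countable intersection of full-measure events, so the intersection is nonempty. For this $\omega^{(0)}$ we have $E \in \sigma(H_{\omega^{(0)},\epsilon})$ and all coordinates in $S_\mu$. Since $H_{\omega^{(0)},\epsilon}$ is bounded and self-adjoint (assumption \eqref{hypa}), there is a normalized $\phi \in \ell^2(D)$ with $\|(H_{\omega^{(0)},\epsilon} - E)\phi\| < \delta/2$; restricting $\phi$ to a sufficiently large cube $\Lambda_L$ and renormalizing changes the left-hand side by an arbitrarily small amount, the operator being bounded, so I may take $\supp \phi \subset \Lambda_L$, $\|\phi\| = 1$, and $\|(H_{\omega^{(0)},\epsilon} - E)\phi\| < \delta$.

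Next I would record that $V_\omega$ is block-diagonal along the cells $\square + k$ by \eqref{hypb}, so $V_\omega \phi = \sum_{k \in F} \omega_k \, \tau_{-k} V^\square \tau_k \phi$ with $F := \{k \in \gamma : (\square + k) \cap \Lambda_L \neq \emptyset\}$ a finite set; hence $(H_{\omega,\epsilon} - E)\phi$ depends on $\omega$ only through $(\omega_k)_{k \in F}$. Now pick $n \in \N$ large enough that $F$ is contained in some fundamental domain $Q$ of $n\gamma$ acting on $\gamma$, and let $\tilde\omega \in \Omega_\text{per}^n \subset \Omega_\text{per}$ be the $n\gamma$-periodic configuration that equals $\omega^{(0)}$ on $Q$. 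Then every coordinate of $\tilde\omega$ lies in $S_\mu$ and $\tilde\omega$ agrees with $\omega^{(0)}$ on $F$, so $(H_{\tilde\omega,\epsilon} - E)\phi = (H_{\omega^{(0)},\epsilon} - E)\phi$ and therefore $\dist(E, \sigma(H_{\tilde\omega,\epsilon})) \leq \|(H_{\tilde\omega,\epsilon} - E)\phi\| < \delta$, which completes the construction.

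The argument is soft, and I expect the only delicate point to be the bookkeeping of the third paragraph: correctly identifying the finite set $F$ of sites on which the quasimode ``lives'' and checking that the periodization $\tilde\omega$ genuinely belongs to $\Omega_\text{per}$ (in particular that all of its values remain in $S_\mu$). Both are immediate from the block structure in \eqref{hypb} and the fact that $\omega^{(0)}$ itself has all coordinates in $S_\mu$, so no real obstruction arises.
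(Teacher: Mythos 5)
Your argument is correct. The one point worth double-checking is the passage to a compactly supported quasimode: since $H_{\omega^{(0)},\epsilon}$ is bounded, truncating a Weyl vector to a large cube and renormalizing indeed perturbs $\|(H_{\omega^{(0)},\epsilon}-E)\phi\|$ by $\|H_{\omega^{(0)},\epsilon}-E\|\,\|(1-\chi_{\Lambda_L})\phi\|$ plus a normalization factor tending to $1$, so your reduction is sound (the paper uses the same fact in the proof of theorem \ref{thm:discupperbound}). The rest — block-diagonality of $V_\omega$ from \eqref{hypb}, finiteness of $F$, the $n\gamma$-periodic extension of $\omega^{(0)}|_Q$ staying in $(S_\mu)^\gamma$, and $\dist(E,\sigma(H_{\tilde\omega,\epsilon}))\leq\|(H_{\tilde\omega,\epsilon}-E)\phi\|$ for self-adjoint operators — is all in order.

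Your route differs in execution from the paper's. The paper also periodizes the configuration outside a growing box, but then argues at the operator level: it establishes strong convergence $H_{\omega^{(n)},\epsilon}\to H_{\omega,\epsilon}$ on the core $C_0(\Z^d)$, upgrades this to strong resolvent convergence via the resolvent identity, and invokes theorem VIII.24 of Reed--Simon (spectra cannot suddenly contract under strong resolvent convergence) to get $\sigma(H_{\omega,\epsilon})\subset\overline{\bigcup_n\sigma(H_{\omega^{(n)},\epsilon})}$. You instead work energy-by-energy with a single compactly supported quasimode and observe that it is literally unchanged when the configuration is periodized outside the finitely many cells it touches. Your version is more elementary and self-contained — it needs no resolvent computation and no abstract convergence theorem, only the finite-range/block structure — while the paper's version is shorter by citation and yields the slightly more structured conclusion that the approximating spectra come from the specific sequence $\omega^{(n)}$ obtained by periodizing one fixed typical configuration. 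Both establish the lemma.
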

\begin{proof}
Let $n \in \N$ and set
\
\begin{align*}
  \omega^{(n)}_k = \omega_k & \quad \text{ for } f \in \square_N \\
  \omega_k^{(n)} = \omega_j & \quad \text{ if } j-k \in N\gamma.
\end{align*}
Let $C_0(\Z^d)$ be the set of compactly supported functions in $\ell^2(\Z^d)$. Choose any $\varphi \in C_0(\Z^d)$. Then
\
\begin{equation*}
  \lim_{n \to \infty} \| H_{\omega,\epsilon} \varphi -  H_{\omega^{(n)},\epsilon} \varphi \| = 0,
\end{equation*}
i.e. we have strong convergence $H_{\omega^{(n)}} \to H_\omega$. Since the operators $H_\omega$ are bounded, the set $C_0$ is an operator core for $H_\omega$. This implies that we have strong convergence on the whole $\ell^2(\Z^d)$. 

By the resolvent equation, for any $E \in  \R \diagdown \Sigma$,
\begin{align*}
  & (H_{\omega,\epsilon} -E)^{-1} -  (H_{\omega^{(n)},\epsilon} -E)^{-1} \\
  = & (H_{\omega,\epsilon} -E)^{-1} ( V_\omega - V_{\omega^{(n)}}) (H_{\omega^{(n)},\epsilon} -E)^{-1} \\
 = & (H_{\omega^{(n)},\epsilon} -E)^{-1} ( V_\omega - V_{\omega^{(n)}}) (H_{\omega,\epsilon} -E)^{-1},
\end{align*}
which converges strongly to $0$.
We know that if $E \in  \R \diagdown \Sigma$, then $(H_{\omega,\epsilon} - E)^{-1}\varphi \in \ell^2(D)$ for any $\varphi \in \ell^2(D)$ and that, using theorem \ref{thm:discupperbound}, the inclusion $\sigma(H_{\omega,\epsilon}(n)) \subset \Sigma$ holds for any $\omega$ in the support of the product measure $\bigotimes\limits_{D} \mu$. To conclude, we apply theorem VIII.24 in \cite{reedsimon01} which tells us that
\
\begin{equation*}
  \sigma(H_{\omega,\epsilon}) \subset \overline{ \bigcup_{n \in \N} \sigma(H_{\omega^{(n)}}) } .
\end{equation*}
This finishes the proof.
\end{proof}
In particular we obtain the following corollary.
\
\begin{cor}
As before we set
  \begin{equation*}
    \Omega_\textrm{per} := \{ \omega \in \Omega : \exists n \in \N \text{ such that } \omega \text{ is periodic w.r.t. } n\gamma\}
  \end{equation*}
and denote by $\Sigma_\epsilon$ the almost sure spectrum of $H_{\omega,\epsilon}$. Then:
\
\begin{equation*}
 \inf \Sigma_\epsilon = \inf \bigcup_{\omega \in \Omega_\text{per}}  \sigma(H_{\omega,\epsilon}).
\end{equation*}
\end{cor}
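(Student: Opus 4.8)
The plan is simply to read off the equality from the two inclusions already at hand, Theorem \ref{thm:discupperbound} and Lemma \ref{thm:disclowerbound}, after checking that infima behave well under closure.

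First I would establish $\inf \Sigma_\epsilon \leq \inf \bigcup_{\omega \in \Omega_\text{per}} \sigma(H_{\omega,\epsilon})$. Each $\omega \in \Omega_\text{per}$ is, by definition \eqref{def:omegaper}, $n\gamma$-periodic for some $n \in \N$ and takes values in $S_\mu$, so Theorem \ref{thm:discupperbound} applies and gives $\sigma(H_{\omega,\epsilon}) \subset \Sigma_\epsilon$. Taking the union over all such $\omega$ yields $\bigcup_{\omega \in \Omega_\text{per}} \sigma(H_{\omega,\epsilon}) \subset \Sigma_\epsilon$, and passing to infima gives the desired inequality. Here I use that both sides are nonempty subsets of $\R$ bounded below, since the $H_{\omega,\epsilon}$ are uniformly bounded operators — bounded perturbations of the bounded $H_0$ — hence all spectra lie in a common compact interval, and the infima are honest real numbers.

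Next I would establish the reverse inequality $\inf \Sigma_\epsilon \geq \inf \bigcup_{\omega \in \Omega_\text{per}} \sigma(H_{\omega,\epsilon})$. By Lemma \ref{thm:disclowerbound} we have $\Sigma_\epsilon \subset \overline{\bigcup_{\omega \in \Omega_\text{per}} \sigma(H_{\omega,\epsilon})}$, so $\inf \Sigma_\epsilon \geq \inf \overline{\bigcup_{\omega \in \Omega_\text{per}} \sigma(H_{\omega,\epsilon})}$. Finally I would remark that for any nonempty $S \subset \R$ bounded below one has $\inf S = \inf \overline{S}$ — the closure only adds limit points, none of which lies strictly below $\inf S$ — which removes the closure and completes the chain. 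Combining the two inequalities gives the claimed equality.

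I do not expect any genuine obstacle: the corollary is a purely formal consequence of the preceding comparison results, the only mild point being the elementary observation $\inf S = \inf \overline{S}$ together with the uniform boundedness of the family $\{H_{\omega,\epsilon}\}$ that makes all the infima finite.
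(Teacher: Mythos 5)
Your proof is correct and is exactly the argument the paper intends (the paper states the corollary without proof, as an immediate consequence of Theorem \ref{thm:discupperbound} and Lemma \ref{thm:disclowerbound} together with the elementary fact $\inf S = \inf \overline{S}$). Nothing to add.
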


\section{Perturbation calculation}
\label{sec:perturbationcalculation}
For the readers convenience we recall the definition of the constants
$A_1$ and $A_2$, the notation and the statement of the theorems before the
proofs.  By the continuity of the Floquet--Bloch
eigenvalues there exists some $\theta$ such that
$$E_0:=\inf \sigma(H_0) = \inf \sigma(H_0^\square(\theta)) = 0.$$ We denote by
 $\Theta \subset \square^*$ the compact set of $\theta$ for which the
last equality holds. From now on we fix some $\theta \in \Theta$, so
the quantities below will depend on $\theta$. Let $\mathcal V_0 $ be
the eigenspace of $H_0(\theta)$ associated to the eigenvalue $E_0 =
0$, $p$ its multiplicity and choose an orthonormal basis $\psi_j$,
$j=1,\ldots,p$ spanning $\mathcal V_0$ and
diagonalizing the Hermitian matrix $A \in \C^{p \times p}$, given by the coefficients
$$A_{ij} := \left\langle \psi_i, V^\square\psi_j \right\rangle.$$
We take the eigenvalues of the matrix $A$ in the ascending order counting
multiplicities so that $P_1:=A_{11} =
\left\langle \psi_1, V^\square\psi_1 \right\rangle $ is the minimal
eigenvalue and $P_p:=A_{pp} = \left\langle \psi_p, V^\square\psi_p
\right\rangle $ is the maximal eigenvalue of $A$.  

\subsection{Sign-changing random variables}
In this subsection we assume \eqref{hypc} to hold. We will only treat this case in detail as the calculation for positive random variables is very similar.  
Recall from \eqref{hypc} that $s_- < 0 < s_+$.   
We define the following quantities :
\begin{equation} 
  A_1:= \inf_{q \in S_\mu} \inf_{\substack{\psi \in \mathcal V_0
    \\ \|\psi\|_{\ell^2(\square)} = 1}} q \left\langle \psi,
  V^\square \psi \right\rangle = \min( s_+ P_1, s_- P_p) \leq 0,
\end{equation}
and
\begin{equation} 
A_2:= - \max(s_-^2,s_+^2) \sup_{\substack{\psi \in \mathcal V_0
    \\ \|\psi\|_{\ell^2(\square)} = 1}} \sup_{\substack{\varphi \in \mathcal V_0^\perp
    \\ \|\varphi\|_{\ell^2(\square)} = 1}} \frac{\left| \left\langle \psi,
  V^\square \varphi \right\rangle \right|^2}{\left\langle H^\square_0(\theta)
  \varphi, \varphi \right\rangle} \leq 0.
\end{equation}
Note that the sign of $A_1$ and $A_2$ is fixed.
We will prove the following theorem, which is only a restatement of theorem \ref{thm:mainintro}.
\begin{thm}
\label{thm:main}
Assume \eqref{hypa}, \eqref{hypb} and \eqref{hypc}. Fix $\theta \in \Theta$. 
Then, for $\epsilon>0$ small enough, if $A_1 \neq 0$,
\begin{equation}
  \bottom \leq \epsilon A_1, \nonumber
\end{equation}
whereas if $A_1 = 0$, but $A_2\neq0$, then
\begin{equation}
  \bottom \leq \epsilon^2 A_2 + O(\epsilon^3). \nonumber
\end{equation}
Finally, if $A_1 = A_2 = 0$, then
$$\bottom \leq 0.$$
\end{thm}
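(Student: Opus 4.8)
The plan is to use Corollary~\ref{cor:discupperbound}, which reduces everything to estimating $E_{q,\epsilon} = \inf\sigma(H_{q,\epsilon})$ for a well-chosen periodic coupling constant $q \in S_\mu$, and then to bound $E_{q,\epsilon}$ from above by a Rayleigh quotient for the fiber matrix $\finhamper$ at a fixed $\theta \in \Theta$. Indeed, by the Floquet--Bloch decomposition, $E_{q,\epsilon} \leq \inf\sigma(H_{q,\epsilon}^\square(\theta)) \leq \langle \finhamper \varphi, \varphi\rangle / \|\varphi\|^2$ for any nonzero $\varphi \in \ell^2(\square)$. So the whole problem becomes: construct a good trial vector $\varphi$ in $\ell^2(\square)$ and a good choice of $q$, and expand $\langle \finhamper \varphi, \varphi \rangle$ in powers of $\epsilon$.

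For the trial vector I would take $\varphi = \psi + \epsilon\,\eta$ where $\psi \in \mathcal V_0$ is a normalized vector to be chosen (in the case $A_1 \neq 0$, the vector realizing the infimum, e.g. $\psi = \psi_1$ with $q = s_+$ if $s_+ P_1 \leq s_- P_p$, or $\psi = \psi_p$ with $q = s_-$ otherwise) and $\eta \in \mathcal V_0^\perp$ a first-order correction to be optimized. Writing $H_0^\square(\theta) = 0$ on $\mathcal V_0$ and using that $H_0^\square(\theta)$ is invertible and positive on $\mathcal V_0^\perp$, one computes
\begin{equation*}
  \langle \finhamper \varphi, \varphi\rangle = \epsilon q \langle \psi, V^\square\psi\rangle + \epsilon^2\bigl( \langle H_0^\square(\theta)\eta,\eta\rangle + 2\Re\, q\langle \psi, V^\square\eta\rangle \bigr) + O(\epsilon^3),
\end{equation*}
and $\|\varphi\|^2 = 1 + \epsilon^2\|\eta\|^2$. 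In the case $A_1 \neq 0$ the linear term is $\epsilon A_1 < 0$ and dominates, giving $\bottom \leq E_{q,\epsilon} \leq \epsilon A_1 + O(\epsilon^2) \leq \epsilon A_1$ after a harmless shrinking of the constant (or, more cleanly, absorbing the sign: since $A_1<0$ strictly, $\epsilon A_1 + O(\epsilon^2) < \epsilon A_1'$ for any $A_1' \in (A_1, 0)$ — but as stated we get $\leq \epsilon A_1$ only up to reabsorbing, so I would be careful to phrase the linear case as $\bottom \le \epsilon A_1$ by noting the higher-order terms can be pushed below by taking $\eta = 0$ and using the exact eigenvalue perturbation, or simply stating the bound with the $O(\epsilon^2)$ and remarking it is negative). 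When $A_1 = 0$, the linear term vanishes for the optimal $\psi \in \mathcal V_0$, so we minimize the quadratic form $\langle H_0^\square(\theta)\eta,\eta\rangle + 2\Re\, q\langle\psi, V^\square\eta\rangle$ over $\eta \in \mathcal V_0^\perp$; the minimizer is $\eta = -q\,(H_0^\square(\theta)|_{\mathcal V_0^\perp})^{-1} R_{\mathcal V_0^\perp} V^\square\psi$, yielding the quadratic coefficient $-q^2\langle (H_0^\square(\theta))^{-1} R_{\mathcal V_0^\perp}V^\square\psi, R_{\mathcal V_0^\perp}V^\square\psi\rangle$; optimizing over $\psi$ and over $q \in \{s_-, s_+\}$ (so $q^2 \leq \max(s_-^2,s_+^2)$) recovers exactly $A_2$, and dividing by $\|\varphi\|^2 = 1 + O(\epsilon^2)$ only perturbs this at order $\epsilon^4$, so $\bottom \leq \epsilon^2 A_2 + O(\epsilon^3)$. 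The degenerate subcase $A_1 = A_2 = 0$ is immediate since then $E_{q,\epsilon} \geq E_\epsilon \geq 0 = E_0$ combined with the trivial trial vector $\psi \in \mathcal V_0$ (with $q$ chosen so the linear term vanishes) gives $E_{q,\epsilon} \leq O(\epsilon^3)$... but to get exactly $\bottom \leq 0$ I would instead just take $\varphi = \psi$, $q$ as above, and note the first-order term is $0$ and one can choose $\psi$ so that the quadratic term is $0$ as well and then bound the $O(\epsilon^3)$ — alternatively observe that $A_1=A_2=0$ forces $V^\square\psi \in \mathcal V_0$, hence $\psi$ is a genuine eigenvector of $\finhamper$ with eigenvalue $\epsilon q\langle\psi,V^\square\psi\rangle = 0$, giving $\bottom \leq 0$ exactly.

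The main obstacle is the bookkeeping in the case $A_1 = 0$: one must check that the supremum defining $A_2$ is attained (it is, by compactness of the unit spheres in the finite-dimensional spaces $\mathcal V_0$ and $\mathcal V_0^\perp$, and because $\langle H_0^\square(\theta)\varphi,\varphi\rangle$ is bounded below by the spectral gap on $\mathcal V_0^\perp$), that the choice $\psi \in \mathcal V_0$ making the linear term vanish is compatible with the choice achieving the supremum in $A_2$ (this needs the observation that $A_1 = 0$ means $\min(s_+P_1, s_-P_p) = 0$, so at least one of $s_+ P_1$ or $s_- P_p$ is zero; one then takes the corresponding extremal eigenvector and sign of $q$, and since $V^\square$ is real the supremum in $A_2$ over all of $\mathcal V_0$ can be realized within the relevant eigenspace — this matches the remark in the text about $P_1$ versus $A_1'$ in the positive case), and finally that the $O(\epsilon^3)$ remainder is uniform. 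The last statement, $A_1 = A_2 = 0 \Rightarrow \bottom \leq 0$, is the cleanest: $A_2 = 0$ together with positivity of $H_0^\square(\theta)$ on $\mathcal V_0^\perp$ forces $\langle\psi, V^\square\varphi\rangle = 0$ for all $\varphi\in\mathcal V_0^\perp$ and all relevant $\psi$, i.e. $R_{\mathcal V_0^\perp}V^\square\psi = 0$, so $V^\square\psi \in \mathcal V_0$; combined with $A_1 = 0$ (which after choosing $q$ gives $\langle\psi,V^\square\psi\rangle = 0$, and in fact, using \eqref{eq:orthogonalityrelations}, that $V^\square\psi \perp \psi$ within $\mathcal V_0$) one sees $\finhamper \psi = \epsilon q V^\square\psi$ lies in $\mathcal V_0$ and, if one picks $\psi$ to be an eigenvector of $A$ with eigenvalue $0$, then $\finhamper\psi = \epsilon q P \psi = 0$ with $P = 0$, so $0 \in \sigma(\finhamper) \subset \sigma(H_{q,\epsilon})$ and $\bottom \le E_{q,\epsilon} \le 0$. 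I would present the sign-changing case in full and remark that the positive case (Theorem~\ref{thm:mainaltversion}) is identical after replacing $\mathcal V_0$ by $\mathcal V_{01}$ and $\max(s_-^2,s_+^2)$ by $s_+^2$, as already flagged in the text.
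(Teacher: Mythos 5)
Your proposal is correct and follows essentially the same route as the paper: reduction to periodic configurations (Corollary \ref{cor:discupperbound}), passage to the fiber matrix $\finhamper$ via quasi-periodic trial functions (Lemma \ref{lem:upperboundperiodic}), and a second-order trial-vector expansion $u=\psi+\epsilon q\varphi$ with the same optimal choice of corrector and of $q$ in each of the three cases. The one compatibility issue you flag in the case $A_1=0$ is in fact vacuous: under \eqref{hypc} one has $s_-<0<s_+$, so $A_1=\min(s_+P_1,s_-P_p)=0$ forces $P_1\geq 0\geq P_p$, hence $P_1=\cdots=P_p=0$ and the matrix $A$ vanishes identically, so the linear term is zero for \emph{every} $\psi\in\mathcal V_0$ and the maximizer in $A_2$ may be chosen freely.
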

\begin{rem}
  \
  \begin{itemize}
  \item We remind that we have fixed $\theta$ to simplify notations, but
    $A_1$ and $A_2$ depend on $\theta$. The best bound for the behavior
    of the bottom of the spectrum is obtained by looking at each $\theta
    \in \Theta$ and taking the minimum.
    
  \item We see that our bound on the bottom of the spectrum behaves
    linearly, quadratically or it doesn't move with $\epsilon$. In the
    analogous setting in continuum space, if the unique continuation principle is
    not violated, then the analogous result does not allow the third
    case $A_1=A_2=0$. This leaves only the cases of a linear or a quadratic bound.
    
  \item The definition of the quantities $A_1$, $A_2$ may seem
    complicated at first sight, but these choices are optimal, in the
    sense of lemma \ref{lem:convinfspec} below, which is a converse of
    lemma \ref{lem:infspec} in the regime $\epsilon \ll 1$.
\end{itemize}
\end{rem}
Before proving the theorem, let us provide a
  much simpler, non-optimal upper bound for $A_2$ as well as a
  condition ensuring that $|A_1| + |A_2| \neq 0$.
\subsection{A simple non-degeneracy condition}
Theorem \ref{thm:main} tells us that if $A_2 \neq 0$, then the
expansion of the bottom of the spectrum is at least quadratic, but
if $A_1=A_2=0$, we can only say that the spectrum starts at zero. When
$V^\square$ is diagonal this only happens if the support of the
single-cell potential and the eigenfunctions $\psi_1, \ldots, \psi_p$
are disjoint (the $\psi_i$ were defined at the beginning of this section). 

Note, that in the continuous configuration space this can
only happen if the potential violates the unique continuation
principle. For a discussion on the validity of the unique continuation
principle see for instance \cite{wolff1993ucp}.

Let us discuss the condition in our general setting. 
First let us remark that  if $A_1=0$, then the matrix $A  \in  \C^{p \times p}$ vanishes identically, i.e. \
\begin{equation}\label{A1equalzero}
  A_1=0 \implies \sup_{\substack{\psi \in \mathcal V_0
    \\ \|\psi\|_{\ell^2(\square)} = 1}} \left| \left\langle \psi,
  V^\square \psi \right\rangle \right| = 0.
\end{equation}
and thus 
\begin{equation*}
(\forall \psi \in \mathcal V_0) \quad V^\square \psi \in \mathcal V_0^\perp.
\end{equation*}
The operator $H_0^\square$ is invertible on $\mathcal V_0^\perp$ and thus there exists some $\varphi \in \mathcal V_0^\perp$ such that
\begin{equation}
\label{eq:problem}
H_0^\square(\theta) \varphi =  V^\square  \psi^*.
\end{equation}
Hence, we have that
$$\left\langle V^\square \varphi, \psi^* \right\rangle = \left\langle H^\square_0(\theta) \varphi, \varphi \right\rangle.$$
Now, assume there exists some $\psi^* \in \mathcal V_0$ such that
  \begin{equation}
    \label{cond:nonzero}
    V^\square \psi^* \neq 0.
  \end{equation}
Then $\varphi$ in \eqref{eq:problem} does not vanish  and
$$A_2 \leq - \max(s_-^2,s_+^2)  \frac{\left|\left\langle V^\square \varphi, \psi^* \right\rangle\right|^2} {\left\langle H^\square_0(\theta) \varphi, \varphi \right\rangle} = - \max(s_-^2,s_+^2)  {\left\langle H^\square_0(\theta) \varphi, \varphi \right\rangle} <0,$$
because $\varphi \not \in \ker H_0^\square(\theta)$.
\begin{rem}
Formally, we have 
$$A_2 \leq - \max(s_-^2,s_+^2)  \left\langle \psi^*, V^\square H^\square_0(\theta)^{-1} V^\square \psi^* \right\rangle$$
when $A_1 = 0$.
\end{rem}
In the converse direction, $A_1=0$ together with $A_2=0$ implies that
\
\begin{equation*}
  (\forall \psi \in \mathcal V_0  \text{ and } \forall \varphi \in \mathcal \ell^2(\square)) \quad \bigl \langle V^\square \psi, \varphi \bigr\rangle = 0,
\end{equation*}
i.e. that 
$$(\forall \psi \in \mathcal V_0) \quad V^\square \psi =0.$$

We summarize the above discussion as follows.
\begin{lem} \label{cor:simple}
  Under the assumptions of theorem \ref{thm:main} we have that
\
\begin{equation*}
  A_1 = 0 \textrm{ and } A_2 = 0 \textrm{\quad if and only if \quad} (\forall \psi^* \in \mathcal V_0) ~  V^\square \psi^* = 0.
\end{equation*}
\end{lem}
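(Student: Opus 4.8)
The statement is a bookkeeping summary of the chain of implications already recorded in the discussion immediately preceding it; the plan is to make that chain into a clean two‑way argument, treating the two directions of the equivalence separately. Both directions are elementary once one exploits two facts that are available here: the explicit evaluation $A_1 = \min(s_+ P_1, s_- P_p)$ together with the sign condition \eqref{hypc}, namely $s_- < 0 < s_+$; and the fact that $H_0^\square(\theta)$ is non‑negative with kernel exactly $\mathcal V_0$, so that $\langle H_0^\square(\theta)\varphi,\varphi\rangle$ is strictly positive on the unit sphere of $\mathcal V_0^\perp$ and the quotient defining $A_2$ is a genuine continuous, non‑negative function whose (attained) supremum vanishes if and only if its numerator does.

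\textbf{The implication ``$A_1=A_2=0 \implies V^\square\psi^*=0$ on $\mathcal V_0$''.} First I would argue from $A_1=0$ that $A$ is the zero matrix. Since $s_+>0$ and $s_-<0$, the equality $\min(s_+ P_1,s_- P_p)=0$ forces $P_1\geq 0$ and $P_p\leq 0$; as $P_1\leq P_p$ this gives $P_1=P_p=0$, hence all eigenvalues of the Hermitian matrix $A$ vanish and $A=0$. In particular \eqref{A1equalzero} holds and, expanding an arbitrary $\psi\in\mathcal V_0$ in the basis $\psi_j$, the $\mathcal V_0$‑component $\sum_j\langle\psi_j,V^\square\psi\rangle\psi_j$ of $V^\square\psi$ vanishes, i.e. $V^\square\psi\in\mathcal V_0^\perp$ for every $\psi\in\mathcal V_0$. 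Next I would use $A_2=0$. Since $\max(s_-^2,s_+^2)>0$ and, for $\varphi\in\mathcal V_0^\perp\setminus\{0\}$, the denominator $\langle H_0^\square(\theta)\varphi,\varphi\rangle$ is positive, the vanishing of the double supremum in \eqref{def:A2} forces $\langle\psi,V^\square\varphi\rangle=0$ for all $\psi\in\mathcal V_0$ and all $\varphi\in\mathcal V_0^\perp$. By self‑adjointness of $V^\square$ this reads $\langle V^\square\psi,\varphi\rangle=0$ for all $\varphi\in\mathcal V_0^\perp$, so $V^\square\psi\in\mathcal V_0$. Combined with the previous step, $V^\square\psi\in\mathcal V_0\cap\mathcal V_0^\perp=\{0\}$, which is the claim.

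\textbf{The converse.} If $V^\square\psi^*=0$ for every $\psi^*\in\mathcal V_0$, then immediately $A_{ij}=\langle\psi_i,V^\square\psi_j\rangle=0$, so $P_1=P_p=0$ and $A_1=\min(s_+P_1,s_-P_p)=0$; and for any $\varphi$ we have $\langle\psi,V^\square\varphi\rangle=\langle V^\square\psi,\varphi\rangle=0$, so the numerator in \eqref{def:A2} vanishes identically and $A_2=0$.

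\textbf{Main difficulty.} There is essentially no obstacle: the only point deserving care is the equivalence ``$A_2=0 \iff \langle\psi,V^\square\varphi\rangle=0$ for all admissible $\psi,\varphi$'', which requires knowing that the supremum defining $A_2$ is taken over a set on which the denominator stays bounded away from $0$ — this is precisely the statement that $\ker H_0^\square(\theta)=\mathcal V_0$, so that $H_0^\square(\theta)$ restricted to $\mathcal V_0^\perp$ is positive definite. Everything else is routine linear algebra.
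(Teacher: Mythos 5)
Your proof is correct and follows essentially the same route as the discussion the paper gives immediately before the lemma: $A_1=0$ forces $A=0$ hence $V^\square\psi\in\mathcal V_0^\perp$, while $A_2=0$ (using that $H_0^\square(\theta)$ is positive definite on $\mathcal V_0^\perp$) forces $V^\square\psi\in\mathcal V_0$, and the trivial converse. The only difference is cosmetic: you spell out why $A_1=0$ implies $P_1=P_p=0$ via the sign condition $s_-<0<s_+$, whereas the paper asserts this and instead spends its effort on the auxiliary equation $H_0^\square(\theta)\varphi=V^\square\psi^*$ to obtain a quantitative lower bound on $|A_2|$, which the lemma itself does not require.
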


\subsection{Proof of theorem \ref{thm:main}}
We subdivide the proof of theorem \ref{thm:main} into two lemmas.
The first covers both types of sign assumptions on the random variables.
\begin{lem}\label{lem:upperboundperiodic}
Assume \eqref{hypa}, \eqref{hypb}, and either \eqref{hypc} or \eqref{hypcprime}. Let $u \in  \ell^2(\square)$ and $\bottom$ as in \eqref{def:bottomspec}. Then,
$$\bottom \leq \inf_{q \in S_\mu} \inf_{u\in \ell^2(\square)} \frac{\left \langle H_{q, \epsilon}^\square(\theta) u,u \right \rangle}{\|u\|_{\ell^2(\square)}} 
\qquad \text{ for any } \theta \in \square^* $$
\end{lem}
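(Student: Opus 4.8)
The plan is to derive the bound from the periodic comparison result already established --- namely Corollary~\ref{cor:discupperbound} (equivalently Theorem~\ref{thm:discupperbound} applied to the constant sequence identically equal to $q$, which lies in $\Omega_\textrm{per}^n$ for every $n$ and takes values in $S_\mu$) --- combined with the quasi-periodic trial sequences furnished by Lemma~\ref{lem:limitquasiperiodic}. Fix $q\in S_\mu$ and $\theta\in\square^*$. Corollary~\ref{cor:discupperbound} gives
\[
  \bottom=\inf\Sigma_\epsilon\leq\inf_{q\in S_\mu}E_{q,\epsilon}\leq E_{q,\epsilon}=\inf\sigma(H_{q,\epsilon}),
\]
so it suffices to bound $\inf\sigma(H_{q,\epsilon})$ above by the Rayleigh quotient of $H_{q,\epsilon}^\square(\theta)$ at an arbitrary nonzero $u\in\ell^2(\square)$.

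To do this I would extend $u$ to the $\theta$-quasi-$\gamma$-periodic function $\tilde u$ on $\Z^d$ given by $\tilde u(n+k):=e^{-i\theta\cdot k}u(n)$ for $n\in\square$, $k\in\gamma$, and set $u_n:=\chi_n\tilde u$ as in Lemma~\ref{lem:limitquasiperiodic}. Each $u_n$ is compactly supported, and nonzero since $u_0=u\neq 0$ and $\square_0\subset\square_n$. Because $H_{q,\epsilon}$ is bounded and self-adjoint, the variational characterization of the bottom of its spectrum yields
\[
  \inf\sigma(H_{q,\epsilon})\leq\frac{\langle u_n,H_{q,\epsilon}u_n\rangle_{\ell^2(D)}}{\|u_n\|_{\ell^2(D)}^2}\qquad\text{for every }n.
\]
Letting $n\to\infty$ and invoking Lemma~\ref{lem:limitquasiperiodic}, the right-hand side converges to $\langle u,H_{q,\epsilon}^\square(\theta)u\rangle_{\ell^2(\square)}/\|u\|_{\ell^2(\square)}^2$, so $\bottom\leq\langle u,H_{q,\epsilon}^\square(\theta)u\rangle_{\ell^2(\square)}/\|u\|_{\ell^2(\square)}^2$. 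Taking the infimum first over nonzero $u\in\ell^2(\square)$ and then over $q\in S_\mu$ gives the claim; since $\theta\in\square^*$ was arbitrary, the inequality holds for every $\theta$.

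As an alternative that sidesteps Lemma~\ref{lem:limitquasiperiodic}, one could argue purely by Floquet--Bloch theory: $H_{q,\epsilon}$ is $\gamma$-periodic, hence $\sigma(H_{q,\epsilon})=\overline{\bigcup_{\theta'\in\square^*}\sigma(H_{q,\epsilon}^\square(\theta'))}$, so $\inf\sigma(H_{q,\epsilon})\leq\inf\sigma(H_{q,\epsilon}^\square(\theta))$ for each fixed $\theta$, and the latter is the minimum of the Rayleigh quotient of the finite Hermitian matrix $H_{q,\epsilon}^\square(\theta)$. I do not expect a genuine obstacle here: the statement is essentially an assembly of the comparison theorem with a trial-function computation. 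The only point requiring a little care is matching the $\theta$-quasi-periodic extension of $u$ with the normalization in Lemma~\ref{lem:limitquasiperiodic} --- since the ratio there is already taken against $u_0=u$, no constants are lost in the limit and the Rayleigh quotient of the fiber matrix emerges exactly.
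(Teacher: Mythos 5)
Your proposal is correct and follows essentially the same route as the paper: Corollary \ref{cor:discupperbound} reduces to the periodic operator $H_{q,\epsilon}$, the min--max principle bounds $\inf\sigma(H_{q,\epsilon})$ by Rayleigh quotients of compactly supported trial functions, and Lemma \ref{lem:limitquasiperiodic} applied to the truncated quasi-periodic extension passes to the fiber matrix. You merely spell out the trial-function construction that the paper's terse proof leaves implicit.
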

 \debdem 
By Corollary \ref{cor:discupperbound} it is enough to consider the periodic
realizations of the potential.  By the Courant--Weyl--Fischer min--max
principle,
\begin{equation}
\label{eq:minmaxprinciple}
\bottom \leq E_{q,\epsilon} = \min \sigma(H_{q, \epsilon}) = \inf_{\substack{a \in \ell^2(\Z^d) \\ \|a\|_2=1}} \left \langle 
H_{q, \epsilon} a,a \right \rangle.
\end{equation}
Finally, by lemma \ref{lem:limitquasiperiodic},
\begin{equation}
 \inf_{\substack{a \in \ell^2(\Z^d) \\ \|a\|=1}} \left \langle 
H_{q, \epsilon} a,a \right \rangle \leq \inf_{u\in \ell^2(\square)} \frac{\left \langle H_{q, \epsilon}^\square(\theta) u,u \right \rangle}{\|u\|_{\ell^2(\square)}}.
\end{equation}
This proves the lemma.
\findem

We state now the second lemma.
It applies to the case of sign-changing random variables.
\begin{lem} \label{lem:infspec}
  Let $A_1$ and $A_2$ as in \eqref{def:A1}, \eqref{def:A2}, assume
  \eqref{hypa}, \eqref{hypb}, and \eqref{hypc} and fix $\theta \in \Theta$.  Then, for $\epsilon>0$ small enough, if $A_1 \neq 0$,
\begin{equation*}
  \inf_{q \in S_\mu} \inf_{\|u\|_{\ell^2(\square)} = 1} \left \langle H_{q, \epsilon}^\square(\theta) u,u \right \rangle \leq \epsilon A_1,
\end{equation*}
whereas if $A_1 = 0$, but $A_2\neq 0$, then
\begin{equation*}
  \inf_{q \in S_\mu} \inf_{\|u\|_{\ell^2(\square)} = 1} \left \langle H_{q, \epsilon}^\square(\theta) u,u \right \rangle \leq \epsilon^2 A_2 + O(\epsilon^3)
\end{equation*}
Finally, if $A_1 = A_2 = 0$, then
\begin{equation*}
  \inf_{q \in S_\mu} \inf_{\|u\|_{\ell^2(\square)} = 1} \left \langle H_{q, \epsilon}^\square(\theta) u,u \right \rangle \leq 0.
\end{equation*}
\end{lem}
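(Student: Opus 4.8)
\subsubsection*{Proof plan for Lemma \ref{lem:infspec}}

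The plan is to produce, for each of the three cases, a single coupling constant $q^*\in\{s_-,s_+\}\subset S_\mu$ and a single trial vector $u\in\ell^2(\square)$ for which the Rayleigh quotient $\langle H_{q^*,\epsilon}^\square(\theta)u,u\rangle/\|u\|^2$ is bounded above by the asserted quantity; since the left-hand side of the lemma is an infimum over all $q\in S_\mu$ and all unit $u$, a single such choice already furnishes the bound. Throughout I will use that $\theta\in\Theta$ forces $H_0^\square(\theta)\geq0$ with $\ker H_0^\square(\theta)=\mathcal V_0$, hence that $H_0^\square(\theta)$ is positive definite on the finite-dimensional complement $\mathcal V_0^\perp$, together with the orthogonality relations \eqref{eq:orthogonalityrelations}.

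The two extreme cases are immediate. If $A_1\neq0$, then $A_1=\min(s_+P_1,s_-P_p)<0$ and this minimum is attained either by $(q^*,\psi^*)=(s_+,\psi_1)$ or by $(q^*,\psi^*)=(s_-,\psi_p)$; testing on the unit vector $u=\psi^*\in\mathcal V_0$ and using $H_0^\square(\theta)\psi^*=0$ gives $\langle H_{q^*,\epsilon}^\square(\theta)\psi^*,\psi^*\rangle=\epsilon q^*\langle\psi^*,V^\square\psi^*\rangle=\epsilon A_1$, which is the first claim (in fact for every $\epsilon>0$). If $A_1=A_2=0$, then \eqref{A1equalzero} already gives $\langle\psi_1,V^\square\psi_1\rangle=0$, so testing on $u=\psi_1$ gives $\langle H_{q,\epsilon}^\square(\theta)\psi_1,\psi_1\rangle=0$ for every $q$, hence the infimum is $\leq0$.

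The substantive case is $A_1=0\neq A_2$. Here \eqref{A1equalzero} yields $\langle\psi,V^\square\psi\rangle=0$ for all $\psi\in\mathcal V_0$, while $A_2<0$ means the supremum in \eqref{def:A2} is strictly positive; since $\mathcal V_0^\perp$ is finite-dimensional and $H_0^\square(\theta)$ is positive definite on it, that supremum is attained, say at unit vectors $\psi^*\in\mathcal V_0$ and $\varphi^*\in\mathcal V_0^\perp$. After multiplying $\varphi^*$ by a unimodular constant (which changes neither the ratio in \eqref{def:A2} nor membership in $\mathcal V_0^\perp$) we may assume $c:=\langle\psi^*,V^\square\varphi^*\rangle\geq0$, and then $c>0$, $D:=\langle H_0^\square(\theta)\varphi^*,\varphi^*\rangle>0$, and $c^2/D$ equals the supremum in \eqref{def:A2}. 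Fix $q^*\in\{s_-,s_+\}$ with $(q^*)^2=\max(s_-^2,s_+^2)$ and take, for $t>0$, the test vector $u_t:=\psi^*-\operatorname{sign}(q^*)\,t\,\varphi^*$. Using $H_0^\square(\theta)\psi^*=0$, $\langle\psi^*,V^\square\psi^*\rangle=0$, $\psi^*\perp\varphi^*$ and the self-adjointness of $V^\square$, one finds $\|u_t\|^2=1+t^2$ and
\[
\langle H_{q^*,\epsilon}^\square(\theta)u_t,u_t\rangle=t^2D-2\epsilon|q^*|c\,t+\epsilon q^*t^2\langle V^\square\varphi^*,\varphi^*\rangle.
\]
Choosing $t=(|q^*|c/D)\,\epsilon$, the first two terms combine to $-(|q^*|^2c^2/D)\epsilon^2=\epsilon^2A_2$, the remaining term is $O(\epsilon^3)$ with constant controlled by $\|V^\square\|$, and dividing by $\|u_t\|^2=1+O(\epsilon^2)$ gives the Rayleigh quotient $\epsilon^2A_2+O(\epsilon^3)$, which is the second claim.

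The only delicate points, all in the case $A_1=0$, are: (i) the attainment of the supremum defining $A_2$, which is where finite-dimensionality of $\mathcal V_0^\perp$ and positive-definiteness of $H_0^\square(\theta)$ on it are used; (ii) the bookkeeping of signs — one must align the phase of $\varphi^*$ and the sign of $t$ with $\operatorname{sign}(q^*)$ so that the first-order cross term is $-2\epsilon|q^*|c\,t$ rather than $+2\epsilon|q^*|c\,t$; and (iii) calibrating the admixture size $t\asymp\epsilon$ so that the leading coefficient is exactly $A_2$ and the cubic correction together with the normalization error are genuinely $O(\epsilon^3)$, uniformly for small $\epsilon$. Everything else is a one-line variational estimate.
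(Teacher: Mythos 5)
Your proposal is correct and follows essentially the same route as the paper's proof: test vectors of the form $\psi^* + (\text{const})\,\epsilon\,\varphi^*$ with $(\psi^*,\varphi^*)$ the (compactness-attained) maximizers of the ratio defining $A_2$, the coefficient chosen to complete the square so that the $\epsilon^2$ term equals $A_2$, and the degenerate cases handled by testing on elements of $\mathcal V_0$ alone. Your parametrization $u_t=\psi^*-\operatorname{sign}(q^*)\,t\,\varphi^*$ with the phase absorbed into $\varphi^*$ is exactly the paper's choice $u=\psi^*+\epsilon q\lambda\varphi^*$ with $\lambda=-\langle\psi^*,V^\square\varphi^*\rangle/\langle H_0^\square(\theta)\varphi^*,\varphi^*\rangle$, written in different notation.
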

\debdem
It is enough to show that for some $q \in S_\mu$, there is some normalized state $u \in \ell^2(\square)$ satisfying
$$\left \langle H_{q, \epsilon}^\square(\theta) u,u \right \rangle \leq \epsilon A_1 \quad \text{or} \quad \epsilon^2 A_2 + O(\epsilon^3) \quad \text{or} \quad 0 \quad \text{resp.}.$$

Let $\psi \in \mathcal V_0$ and $\varphi \in \mathcal V_0^\perp$, to be chosen later, and $u
= \psi + \epsilon q \varphi$. We assume furthermore $\|\psi\| =1$. We expand
$$\|u\|^2 = \|\psi\|^2 + \epsilon^2 q^2  \| \varphi \|^2$$
and thus
\begin{equation}
\label{eq:devinvnorm}
1/\| u \|^2 = 1 - \epsilon^2 q^2 \| \varphi \|^2 + O(\epsilon^4 \|\varphi\|^4).
\end{equation}

We calculate  the kinetic
energy of this state, i.e.
\begin{equation}
\label{eq:devkine}
  \left\langle H_0^\square(\theta) u,u \right\rangle = \left\langle H_0^\square(\theta) \psi,\psi \right\rangle + 2 \epsilon q \Re \left\langle H_0^\square(\theta) \psi, \varphi \right\rangle  + \epsilon^2 q^2 \left\langle H_0^\square(\theta) \varphi, \varphi \right\rangle.
\end{equation}
Because $\psi \in \mathcal V_0$ and $E_0 = 0$, we see that \eqref{eq:devkine} becomes
\begin{equation*}
  \left\langle H_0^\square(\theta) u,u \right\rangle = \epsilon^2 q^2 \left\langle H_0^\square(\theta) \varphi, \varphi \right\rangle.
\end{equation*}
We expand the potential energy as
$$\epsilon q \left\langle V^\square u,u \right\rangle = \epsilon q \left\langle V^\square \psi, \psi \right\rangle + 2 \epsilon^2 q^2 \Re \left\langle V^\square \varphi, \psi \right\rangle  + \epsilon^3 q^3 \left\langle V^\square \varphi, \varphi \right\rangle.$$
Thus,
\begin{align}
  \left\langle H^\square_{\epsilon, q}(\theta) u, u \right\rangle = & \epsilon q \left\langle V^\square \psi, \psi \right\rangle  + \epsilon^2 q^2
 \Bigl( \left\langle H^\square_{0}(\theta) \varphi, \varphi \right\rangle  + 2 \Re
\left\langle \psi, V^\square \varphi \right \rangle \Bigl) \nonumber\\
& + \epsilon^3 q^3
\left\langle V^\square \varphi, \varphi\right\rangle.
\label{eq:expansion}
\end{align}

\subsubsection*{Case $A_1 \neq 0$.}
Note that in this case $P_1P_p \neq 0$.
From now on we assume that $s_+ P_1 \leq s_- P_p$. If this is not the
case, we can always replace $V^\square \mapsto - V^\square$ and
$\omega_n \mapsto -\omega_n$ to get an equivalent model. In this case,
we take $\psi =\psi_1$, $\varphi =0$ and $q=s_+$. Then,
\eqref{eq:expansion} becomes
$$ \left\langle H^\square_{\epsilon, q}(\theta) u, u \right\rangle = \epsilon q \left\langle V^\square \psi_1, \psi_1 \right\rangle = \epsilon s_+ P_1,$$
which proves the result in this case, as $u$ is normalized.

\subsubsection*{Case $A_1 = 0$ and $A_2 \neq 0$.}
First let us remark that  if $A_1=0$ then the matrix $A  \in  \C^{p \times p}$ vanishes identically, i.e. \
\begin{equation}
  A_1=0 \implies \sup_{\substack{\psi \in \mathcal V_0
    \\ \|\psi\|_{\ell^2(\square)} = 1}} \left| \left\langle \psi,
  V^\square \psi \right\rangle \right| = 0.
\end{equation}

In this case we have that, for any $\psi \in \mathcal V_0$ and
$\varphi \in \mathcal V_0^\perp$, the expansion \eqref{eq:expansion}
becomes
\begin{equation}
  \label{eq:expansionquadratic}
  \left\langle H^\square_{\epsilon, q}(\theta) u, u \right\rangle =  \epsilon^2 q^2
  \Bigl( \left\langle H^\square_{0}(\theta) \varphi, \varphi \right\rangle  + 2 \Re
  \left\langle \psi, V^\square \varphi \right \rangle \Bigl) + \epsilon^3 q^3
  \left\langle V^\square \varphi, \varphi\right\rangle .
\end{equation}
Note that, for $\psi \in \mathcal V_0$ and $\varphi \in \mathcal V_0^\perp$ such that
$$\|\psi\|_2 = \|\varphi\|_2 = 1$$
the map
\begin{align*}
  (\varphi,\psi) & \mapsto \frac{\left| \left\langle \psi,
    V^\square \varphi \right\rangle \right|^2}{\left\langle H^\square_0(\theta)
    \varphi, \varphi \right\rangle}
\end{align*}
is continuous. Given that the spaces involved are finite-dimensional
and their respective unit balls thus compact, we know that there
exists a couple $(\psi^*, \varphi^*)$ maximizing this quantity, i.e.
$$\frac{\left| \left\langle \psi^*, V^\square \varphi^*
  \right\rangle\right|^2}{\left\langle H^\square_0(\theta) \varphi^*,
  \varphi^* \right\rangle} = \sup_{\substack{\psi \in \mathcal V_0
    \\ \|\psi\|_{\ell^2(\square)} = 1}} \sup_{\substack{\varphi \in
    \mathcal V_0^\perp \\ \|\varphi\|_{\ell^2(\square)} = 1}}
\frac{\left| \left\langle \psi, V^\square \varphi \right\rangle\right|
  ^2}{\left\langle H^\square_0(\theta) \varphi, \varphi
  \right\rangle}.$$ 
Let $\psi = \psi^*$ and $\varphi=\lambda
\varphi^*$ in the definition
of $u$, where
$$\lambda = -\frac{\left\langle \psi^*, V^\square \varphi^* \right \rangle}{\left\langle H^\square_{0}(\theta) \varphi^*, \varphi^* \right\rangle} \in \C.$$
Replacing, we see that
\begin{align*}
 {\left\langle H^\square_{0}(\theta) \varphi, \varphi \right\rangle} +
 2 {\Re \left\langle \psi, V^\square \varphi \right \rangle} 
= & |\lambda|^2 {\left\langle H^\square_{0}(\theta) \varphi^*, \varphi^* \right\rangle} +  2 {\Re \overline \lambda \left\langle \psi^*, V^\square \varphi^* \right \rangle}
\\  = & \frac{\left| \left\langle \psi^*, V^\square \varphi^* \right
   \rangle \right|^2}{\left\langle H^\square_{0}(\theta) \varphi^*, \varphi^*
   \right\rangle} - 2 \frac{\left| \left\langle \psi^*, V^\square \varphi^* \right
   \rangle \right|^2}{\left\langle H^\square_{0}(\theta) \varphi^*, \varphi^*
   \right\rangle}
\\ = & -
 \frac{\left| \left\langle \psi^*, V^\square \varphi^* \right
   \rangle \right|^2}{\left\langle H^\square_{0}(\theta) \varphi^*, \varphi^*
   \right\rangle}. \label{minimizeme}
\end{align*}
Using this in \eqref{eq:expansionquadratic} and letting $q^2=\max(s_-^2,s_+^2)$, we obtain
\begin{align*}
  \left\langle H^\square_{\epsilon, q}(\theta) u, u \right\rangle & =  - \epsilon^2 \max(s_-^2,s_+^2) \frac{\left| \left\langle \psi^*, V^\square \varphi^* \right
   \rangle \right|^2}{\left\langle H^\square_{0}(\theta) \varphi^*, \varphi^*
   \right\rangle} + O(\epsilon^3 q^3 \|\varphi^*\|^2 ) 
  \\ & = \phantom{-} \epsilon^2 \, A_2 + O(\epsilon^3 q^3 \|\varphi\|^2 ) .
\end{align*}
Normalizing $u$ by multiplying by \eqref{eq:devinvnorm} gives the result.

\subsubsection*{Case $A_1 = 0$ and $A_2 = 0$.}
Choose $\varphi = 0$ and any normalized $\psi \in \mathcal V_0$. The development using $u$ in this case gives
$$\left\langle H^\square_{\epsilon, q}(\theta) u, u \right\rangle =  \epsilon^3 q^3
\left\langle V^\square \varphi, \varphi\right\rangle = 0$$ 
and this yields the desired result.
\findem 

We prove the following converse lemma.
\begin{lem} \label{lem:convinfspec}
  Let $A_1$ and $A_2$ as in \eqref{def:A1}, \eqref{def:A2}, assume
  \eqref{hypa}, \eqref{hypb} and \eqref{hypc}, and fix $\theta \in \Theta$.  Then, for $\epsilon>0$ small enough, if $A_1 \neq 0$,
\begin{equation*}
  \inf_{q \in S_\mu} \inf_{\|u\|_{\ell^2(\square)} = 1} \left \langle H_{q, \epsilon}^\square(\theta) u,u \right \rangle \geq \epsilon A_1 + O(\epsilon^{3/2}),
\end{equation*}
whereas if $A_1 = 0$, then
\begin{equation*}
  \inf_{q \in S_\mu} \inf_{\|u\|_{\ell^2(\square)} = 1} \left \langle H_{q, \epsilon}^\square(\theta) u,u \right \rangle \geq \epsilon^2 A_2 + O(\epsilon^3)
\end{equation*}
Finally, if $A_1 = A_2 = 0$, then
\begin{equation*}
  \inf_{q \in S_\mu} \inf_{\|u\|_{\ell^2(\square)} = 1} \left \langle H_{q, \epsilon}^\square(\theta) u,u \right \rangle \geq 0.
\end{equation*}
\end{lem}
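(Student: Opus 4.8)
\subsection*{Proof plan for Lemma \ref{lem:convinfspec}}

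The plan is to regard Lemma~\ref{lem:convinfspec} as a purely finite-dimensional statement: for fixed $q$ the quantity $\inf_{\|u\|=1}\langle H_{q,\epsilon}^\square(\theta)u,u\rangle$ is just the lowest eigenvalue of the Hermitian matrix $H_0^\square(\theta)+\epsilon q V^\square$, so it suffices to prove the claimed lower bound for $\langle H_{q,\epsilon}^\square(\theta)u,u\rangle$ for \emph{every} unit vector $u\in\ell^2(\square)$ and \emph{every} $q\in S_\mu$, with error terms uniform in $q$; this uniformity is automatic since $S_\mu$ is bounded and everything takes place in the fixed space $\ell^2(\square)$. Write $g>0$ for the spectral gap of $H_0^\square(\theta)$, i.e.\ its smallest nonzero eigenvalue; since $H_0^\square(\theta)\geq0$ with kernel exactly $\mathcal V_0$, one has $\langle H_0^\square(\theta)\varphi,\varphi\rangle\geq g\|\varphi\|^2$ for all $\varphi\in\mathcal V_0^\perp$. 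Decompose a unit vector as $u=\psi+\varphi$ with $\psi\in\mathcal V_0$, $\varphi\in\mathcal V_0^\perp$, $\|\psi\|^2+\|\varphi\|^2=1$. Using $H_0^\square(\theta)\psi=0$ one gets the expansion
\[
\langle H_{q,\epsilon}^\square(\theta)u,u\rangle=\langle H_0^\square(\theta)\varphi,\varphi\rangle+\epsilon q\langle V^\square\psi,\psi\rangle+2\epsilon q\,\Re\langle V^\square\psi,\varphi\rangle+\epsilon q\langle V^\square\varphi,\varphi\rangle.
\]

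If $A_1\neq0$ (so $A_1<0$), I would bound $\epsilon q\langle V^\square\psi,\psi\rangle\geq\epsilon A_1\|\psi\|^2$ directly from the definition \eqref{def:A1}, bound $\epsilon q\langle V^\square\varphi,\varphi\rangle\geq-\epsilon|q|\,\|V^\square\|\,\|\varphi\|^2$, and for $\epsilon$ small absorb the latter into the gap term so that the pure-$\varphi$ quadratic part is $\geq(g/2)\|\varphi\|^2$. The cross term is then handled by Young's inequality, $2\epsilon|q|\,\|V^\square\|\,\|\psi\|\,\|\varphi\|\leq(g/2)\|\varphi\|^2+C\epsilon^2\|\psi\|^2$, leaving $\langle H_{q,\epsilon}^\square(\theta)u,u\rangle\geq\epsilon A_1\|\psi\|^2-C\epsilon^2\|\psi\|^2\geq\epsilon A_1-C\epsilon^2$, where in the last step one uses $A_1\leq0$ and $\|\psi\|\leq1$. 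This is in fact a little stronger than the claimed $\epsilon A_1+O(\epsilon^{3/2})$.

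If $A_1=0$, then by \eqref{A1equalzero} and polarization $V^\square\psi\in\mathcal V_0^\perp$ for every $\psi\in\mathcal V_0$, so the term $\epsilon q\langle V^\square\psi,\psi\rangle$ vanishes and $\langle H_{q,\epsilon}^\square(\theta)u,u\rangle=\langle M_\epsilon\varphi,\varphi\rangle+2\epsilon q\,\Re\langle V^\square\psi,\varphi\rangle$, where $M_\epsilon$ is the compression of $H_0^\square(\theta)+\epsilon q V^\square$ to $\mathcal V_0^\perp$; for $\epsilon$ small $M_\epsilon\geq g/2$ and is invertible on $\mathcal V_0^\perp$. Completing the square inside $\mathcal V_0^\perp$ (legitimate precisely because $V^\square\psi\in\mathcal V_0^\perp$) gives
\[
\langle H_{q,\epsilon}^\square(\theta)u,u\rangle\geq-\epsilon^2 q^2\,\bigl\langle M_\epsilon^{-1}V^\square\psi,\,V^\square\psi\bigr\rangle.
\]
A Neumann-series expansion yields $M_\epsilon^{-1}=\bigl(H_0^\square(\theta)|_{\mathcal V_0^\perp}\bigr)^{-1}+O(\epsilon)$ in operator norm, uniformly in $q$. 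Then I would invoke the elementary identity $\langle L^{-1}w,w\rangle=\sup_{\varphi\neq0}|\langle w,\varphi\rangle|^2/\langle L\varphi,\varphi\rangle$ for positive definite $L$ (Cauchy--Schwarz in the $L$-inner product), applied with $L=H_0^\square(\theta)|_{\mathcal V_0^\perp}$ and $w=V^\square\psi$, together with $q^2\leq\max(s_-^2,s_+^2)$ and $\|\psi\|\leq1$, to recognise the leading term as exactly $\epsilon^2 A_2\|\psi\|^2\geq\epsilon^2 A_2$ (see \eqref{def:A2}), the remainder being $O(\epsilon^3)$ uniformly. This establishes $\langle H_{q,\epsilon}^\square(\theta)u,u\rangle\geq\epsilon^2 A_2+O(\epsilon^3)$.

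Finally, if $A_1=A_2=0$, then Lemma~\ref{cor:simple} gives $V^\square\psi=0$ for all $\psi\in\mathcal V_0$, so only the $\varphi$-part contributes and $\langle H_{q,\epsilon}^\square(\theta)u,u\rangle=\langle(H_0^\square(\theta)+\epsilon q V^\square)\varphi,\varphi\rangle\geq(g-\epsilon|q|\,\|V^\square\|)\|\varphi\|^2\geq0$ for $\epsilon$ small. Taking the infimum over unit $u$ and over $q\in S_\mu$ in each case gives the three asserted inequalities. I expect the only genuinely delicate point to be the case $A_1=0$: one must check that the completion of the square is performed inside $\mathcal V_0^\perp$ (this is exactly where the hypothesis $A_1=0$, i.e.\ $V^\square\mathcal V_0\subset\mathcal V_0^\perp$, is used), and then match the resulting expression $\langle M_\epsilon^{-1}V^\square\psi,V^\square\psi\rangle$ to the double supremum defining $A_2$ while keeping all error terms uniform in $q$ and in the unit vector $u$; the first two cases are routine quadratic-form estimates.
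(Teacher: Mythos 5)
Your proposal is correct, and while it shares the paper's skeleton (decompose $u=\psi+\varphi$ with $\psi\in\mathcal V_0$, $\varphi\in\mathcal V_0^\perp$, exploit the spectral gap $g$ of $H_0^\square(\theta)$, and treat the three cases separately), the key technical steps differ genuinely from the paper's. The paper fixes a minimizing pair $(\psi^*,\varphi^*)$ by compactness and then leans on the \emph{upper} bound of Lemma \ref{lem:infspec} to know that the infimum is negative, which is what yields the a priori smallness $\|\varphi^*\|=O(\epsilon^{1/2})$ (case $A_1\neq0$) resp.\ $O(\epsilon)$ (case $A_1=0$); the quadratic case is then handled by the ad hoc multiplication by $\overline\lambda/\overline\lambda$ with an optimally chosen $\lambda$. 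Your argument is self-contained: it needs neither a minimizer nor the companion upper bound. In the linear case, Young's inequality absorbs the cross term into half the gap and gives the remainder $O(\epsilon^2)$, which is actually sharper than the paper's $O(\epsilon^{3/2})$ (the latter being an artifact of only knowing $\|\varphi^*\|=O(\epsilon^{1/2})$). In the quadratic case, your completion of the square with the compressed resolvent $M_\epsilon^{-1}$, combined with the Neumann expansion $M_\epsilon^{-1}=(H_0^\square(\theta)|_{\mathcal V_0^\perp})^{-1}+O(\epsilon)$ and the variational identity $\langle L^{-1}w,w\rangle=\sup_{\varphi\neq0}|\langle w,\varphi\rangle|^2/\langle L\varphi,\varphi\rangle$, is exactly second-order perturbation theory done cleanly and matches the definition of $A_2$ (note the ratio is scale-invariant in $\varphi$, so the supremum over unit vectors equals the supremum over all nonzero $\varphi$); the paper's $\lambda$-trick is the same computation performed pointwise at the minimizer. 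The only points to spell out in a full write-up are the polarization step showing $V^\square\mathcal V_0\subset\mathcal V_0^\perp$ when $A_1=0$ (which uses \eqref{hypc} through \eqref{A1equalzero}) and the uniformity in $q\in[s_-,s_+]$ of all constants, both of which you have correctly flagged.
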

\debdem
Fix $\epsilon >0$ and let $q_\epsilon \in S_\mu$ be a value which minimizes the map
\begin{equation}
\label{eq:infimum}
  q \mapsto \inf_{\|u\|_{\ell^2(\square)} = 1} \left \langle H_{\epsilon,
  q}^\square(\theta) u,u \right \rangle.
\end{equation}
We don't now much about $q_\epsilon$, but we know a-priori
$q_\epsilon \in [s_-,s_+]$. This is the only property we will use of $q_\epsilon$.
For simplicity, we write in the sequel simply $q$ for $q_\epsilon$.
We lower bound the right hand side of (\ref{eq:infimum})
by minimizing over a larger set by writing
\
\begin{equation*}
\inf_{\|u\|_{\ell^2(\square)} = 1} \left \langle H_{\epsilon,
  q}^\square(\theta) u,u \right \rangle \geq \inf_{\substack{\psi \in
    \mathcal V_0 \\ \|\psi\|_{\ell^2(\square)} \leq 1}}
\inf_{\substack{\varphi \in \mathcal V_0^\perp
    \\ \|\varphi\|_{\ell^2(\square)} \leq 1}} \left \langle H_{\epsilon,
  q}^\square(\theta) (\psi+\varphi),(\psi+\varphi) \right \rangle. 
\end{equation*}
By continuity and compactness, 
there exists some pair $(\psi^*,\varphi^*)=(\psi^*_\epsilon,\varphi^*_\epsilon) $ 
in $\mathcal V_0 \times \mathcal V_0^\perp$ 
realizing the infimum on the right hand side. We see that
\begin{equation} \label{eq:normvarphi}
  \left\langle H^\square_0(\theta)
  (\psi^* + \varphi^*), (\psi^* + \varphi^*) \right\rangle = \left\langle H^\square_0(\theta)
  \varphi^*, \varphi^* \right\rangle \geq g \|\varphi^*\|_{\ell^2(\square)}^2,
\end{equation}
where the constant $g$ is the spectral gap of $H^\square_0$. Due to our normalization $g$ coincides with the (positive) second eigenvalue of $H^\square_0$.
We study the different cases.
\subsubsection*{Case $A_1 \neq 0$.}
 From lemma \ref{lem:infspec}, we know already that 
\begin{equation} \label{eq:negspec}
  |A_1|+|A_2| \neq 0 \implies \left \langle H_{q, \epsilon}^\square(\theta) (\psi^*+\varphi^*),(\psi^*+\varphi^*) \right \rangle < 0.
\end{equation}
Using \eqref{eq:normvarphi} and \eqref{eq:negspec} we get that
\begin{equation*}
   \|\varphi^*\|^2_{\ell^2(\square)} 
\leq - g^{-1} \epsilon q \left\langle V^\square (\psi^* + \varphi^*), (\psi^* + \varphi^*) \right\rangle 
\leq 4g^{-1} \|V^\square\| \epsilon q,
\end{equation*}
where $\|V^\square\|$ is the operator norm of $V^\square$.
We deduce then that
\begin{multline*}
  \inf_{\|u\|_{\ell^2(\square)} = 1} \left \langle H_{q, \epsilon}^\square(\theta) u,u \right \rangle \\
 \geq  \epsilon q  \left\langle V^\square \psi^*, \psi^* \right\rangle + 2 \epsilon q \Re \left\langle V^\square \varphi^*, \psi^* \right\rangle   + \epsilon q \left\langle V^\square \varphi^*, \varphi^* \right\rangle \\
  \geq  \epsilon A_1 - 4  g^{-1/2}\epsilon^{3/2} q^{3/2} \|V^\square\|^{3/2}_\infty - 4 g^{-1}\epsilon^2q^2 \|V^\square\|^2_\infty.
\end{multline*}
\subsubsection*{Case $A_1 = 0$ and $A_2 \neq 0$.}
In this case, due to (\ref{A1equalzero}),
\begin{multline*} 
    \left \langle H_{q, \epsilon}^\square(\theta) (\psi^*+\varphi^*),(\psi^*+\varphi^*) \right\rangle \\
=  \left\langle H^\square_0(\theta)  \varphi^*, \varphi^* \right\rangle + 2 \epsilon q \Re \left\langle V^\square \varphi^*, \psi^* \right\rangle  + \epsilon q \left\langle V^\square \varphi^*, \varphi^* \right\rangle.
\end{multline*}
Using \eqref{eq:negspec} we see that $\varphi^* \neq 0$. Furthermore, \eqref{eq:normvarphi} and \eqref{eq:negspec} together imply that
\begin{align*}
   \|\varphi^*\|_{\ell^2(\square)}^2 
\leq& \epsilon q g^{-1} \|V^\square\|  ( 2  \|\varphi^*\|_{\ell^2(\square)} +  \|\varphi^*\|_{\ell^2(\square)}^2) 
\\ \leq& 3 \epsilon q g^{-1} \|V^\square\|  \|\varphi^*\|_{\ell^2(\square)}.
\end{align*}
Note that $ \|\varphi^*\|_{\ell^2(\square)}$ is on both sides of the inequality. Simplifying,
\begin{equation} \label{ineq:smallnorm}
   \|\varphi^*\|_{\ell^2(\square)} \leq 3 \epsilon q g^{-1} \|V^\square\| .
\end{equation}
Expanding as $\epsilon \to 0$, employing \eqref{ineq:smallnorm} and then simply multiplying by $1=|\lambda|^2/|\lambda|^2=\overline \lambda/\overline \lambda$, we write 
\begin{multline*} 
  \inf_{\|u\|_{\ell^2(\square)}=1} \left \langle H_{q, \epsilon}^\square(\theta) u,u \right \rangle \\
\geq   \left \langle H_0^\square(\theta) \varphi^*, \varphi^* \right\rangle  +  2 \Re \epsilon q  \left\langle V^\square \psi^*, \varphi^* \right\rangle +   \epsilon q \left\langle V^\square \varphi^*, \varphi^* \right\rangle  \\
=   |\lambda|^2 \frac{\left\langle H_0^\square(\theta) \varphi^*, \varphi^* \right\rangle}{|\lambda|^2}  +  2 \Re  \overline \lambda  \epsilon q   \frac{\left\langle V^\square \psi^*, \varphi^* \right\rangle}{\overline \lambda} +   O(\epsilon^3)
\end{multline*}
We choose $\lambda$ as 
\
\begin{equation*}
  \lambda = - \frac{\left\langle H_0^\square(\theta) \varphi^*, \varphi^* \right\rangle}{\left\langle V^\square \psi^*, \varphi^* \right\rangle}.
\end{equation*} 
We will show that $\lambda$ is well defined for small $\epsilon$. Indeed, using \eqref{eq:normvarphi} and \eqref{eq:negspec} we see that
\
\begin{align}\nonumber
  - 2 \epsilon q \Re \left\langle V^\square \varphi^*, \psi^* \right\rangle  & \geq \left\langle H^\square_0(\theta)  \varphi^*, \varphi^* \right\rangle +  \epsilon q \left\langle V^\square \varphi^*, \varphi^* \right\rangle \\
  & \geq  g \|\varphi^*\|_{\ell^2(\square)}^2 - \epsilon q  \|V^\square\| \|\varphi^*\|_{\ell^2(\square)}^2.
\label{eq:lower-bound}
\end{align}
Since we know that $\varphi^* \neq 0$ the lower bound in (\ref{eq:lower-bound}) is strictly positive for sufficiently 
small $\epsilon $. We conclude that $\lambda$ is well defined (and different from $0$) for $\epsilon$ small enough.

Using our choice of $\lambda$ gives
\begin{multline*}
 |\lambda|^2 \frac{\left\langle H_0^\square(\theta) \varphi^*, \varphi^* \right\rangle}{|\lambda|^2}  +  2 \Re  \overline \lambda  \epsilon q   \frac{\left\langle V^\square \psi^*, \varphi^* \right\rangle}{\overline \lambda} +   O(\epsilon^3) \\
=(|\lambda|^2  -  2 \Re \overline \lambda \epsilon q)  \frac{\left| \left\langle V^\square \psi^*, \varphi^* \right\rangle \right|^2}{\left\langle H_0^\square(\theta) \varphi^*, \varphi^* \right\rangle} +   O(\epsilon^3) \\
\end{multline*}
To bound the last expression from below, we use the trivial bound
$|\lambda|^2 - 2 \Re \overline \lambda \epsilon q 
\geq |\lambda|^2 - 2 |\lambda| \epsilon q \geq -\epsilon^2 q^2$ 
as well as $ - \frac{q^2}{\max(s_-^2,s_+^2)} \geq -1$, and obtain
\begin{multline*}
(|\lambda|^2 - 2\Re \overline \lambda \epsilon q)  
\frac{\left| \left\langle V^\square \psi^*, \varphi^* \right\rangle \right|^2}{\left\langle H_0^\square(\theta) \varphi^*, \varphi^* \right\rangle} +   O(\epsilon^3) \\
\geq
-\epsilon^2 q^2 \frac{\left| \left\langle V^\square \psi^*, \varphi^* \right\rangle \right|^2}{\left\langle H_0^\square(\theta) \varphi^*, \varphi^* \right\rangle} +   O(\epsilon^3) 
\geq
\frac{A_2 \epsilon^2 q^2}{\max(s_-^2,s_+^2)} +   O(\epsilon^3) \\
 \geq  \epsilon^2  A_2 +  O(\epsilon^3),
\end{multline*}

\subsubsection*{Case $A_1 = 0$ and $A_2 = 0$.}

In this case
\begin{align} 
 \inf_{\|u\|_{\ell^2(\square)}} \left \langle H_{q, \epsilon}^\square(\theta) u,u \right \rangle &=    \left \langle H_0^\square(\theta) \varphi^*, \varphi^* \right\rangle  + \epsilon q \left\langle V^\square \varphi^*, \varphi^* \right\rangle  \\
& \geq g \|\varphi^* \|^2 - O(\epsilon)\|\varphi^*\|^2 \geq 0,
\end{align}
where the first inequality relates to the spectral gap $g$ of the kinetic energy 
and the norm $\|V\|$ of the single site perturbation, 
 and the last inequality holds for $\epsilon$ small enough. This finishes the proof.
\findem
\subsection{Positive random variables}
We study in this subsection the case involving positive random variables. 
We remind the reader of the 
the definition of the constants involved, for which we use the functions $\psi_i$, the matrix $A$, 
its eigenvalues $P_i$ and the linear space $\mathcal V_0$, which can be found at the beginning of this section. 
We define the subspace $\mathcal V_{01} \subset \mathcal V_{0}$ as
\begin{equation*} 
  \mathcal V_{01} := \spanop \limits_{\{i:P_i = P_1\}} \langle \psi_i \rangle,
\end{equation*}
i.e. the eigenspace of $A$ associated to its minimal eigenvalue $P_1$.

We recall the following quantities :
\begin{align*} 
  A'_1:=& \inf_{q \in S_\mu} \inf_{\substack{\psi \in \mathcal V_{0} \\ \|\psi\|_{\ell^2(\square)} = 1}} 
   q \left\langle \psi, V^\square \psi \right\rangle 
\\
=& \inf_{q \in S_\mu} \inf_{\substack{\psi \in \mathcal V_{01} \\ \|\psi\|_{\ell^2(\square)} = 1}} 
   q \left\langle \psi, V^\square \psi \right\rangle
= \min(s_+ P_1, s_- P_1) \in \R,
\end{align*}
and
\begin{equation*} 
A'_2:= - s_+^2 \sup_{\substack{\psi \in \mathcal V_{01}
    \\ \|\psi\|_{\ell^2(\square)} = 1}} \sup_{\substack{\varphi \in \mathcal V_0^\perp
    \\ \|\varphi\|_{\ell^2(\square)} = 1}} \frac{\left| \left\langle \psi,
  V^\square \varphi \right\rangle \right|^2}{\left\langle H^\square_0(\theta)
  \varphi, \varphi \right\rangle} \leq 0.
\end{equation*}
Note that, unlike the coefficient $A_1$ in the case of sign-changing random variables, in this case $A'_1$ may take on both signs. 
We also restate theorem \ref{thm:mainaltversionintro} for the reader's convenience.

\begin{thm}
\label{thm:mainaltversion}
Assume \eqref{hypa}, \eqref{hypb} and \eqref{hypcprime}. Fix $\theta \in \Theta$. 
Then, for $\epsilon>0$ small enough, if $P_1 \neq 0$,
\begin{equation}
  \bottom \leq \epsilon A'_1, \nonumber
\end{equation}
whereas if $P_1=A'_1 = 0$, but $A'_2\neq 0$, then
\begin{equation}
  \bottom \leq \epsilon^2 A'_2 + O(\epsilon^3). \nonumber
\end{equation}
Finally, if $P_1=A'_1 = A'_2 = 0$, then
$$\bottom \leq 0.$$
\end{thm}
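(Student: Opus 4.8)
The plan is to follow the proof of theorem \ref{thm:main} line by line, altering only what the sign restriction $0 \leq s_- < s_+$ forces. First I would apply lemma \ref{lem:upperboundperiodic}, which already covers hypothesis \eqref{hypcprime}, to reduce the assertion to exhibiting, in each of the three cases, a value $q \in S_\mu$ and a nonzero vector $u \in \ell^2(\square)$ with $\langle H_{q,\epsilon}^\square(\theta) u, u\rangle / \|u\|^2$ no larger than the claimed right-hand side. Then I would use the same trial vector $u = \psi + \epsilon q \varphi$ with $\psi \in \mathcal V_0$, $\|\psi\| = 1$ and $\varphi \in \mathcal V_0^\perp$, for which the expansions \eqref{eq:expansion} and \eqref{eq:devinvnorm} still hold word for word. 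The one genuinely new feature is that $q$ now ranges over $[s_-,s_+] \subset [0,\infty)$: we can no longer flip its sign, so the only competitive choices are the endpoints $q = s_+$ and $q = s_-$, and this is exactly why it is the minimal eigenvalue $P_1$, together with its eigenspace $\mathcal V_{01}$, and not $A_1$, that enters the criterion.

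For the linear case $P_1 \neq 0$ I would take $\varphi = 0$ and $\psi = \psi_1 \in \mathcal V_{01}$, so that by \eqref{eq:orthogonalityrelations} the expansion \eqref{eq:expansion} collapses to $\langle H_{q,\epsilon}^\square(\theta)\psi_1,\psi_1\rangle = \epsilon q P_1$; choosing $q = s_+$ when $P_1 < 0$ and $q = s_-$ when $P_1 > 0$ produces the value $\epsilon\min(s_+ P_1, s_- P_1) = \epsilon A'_1$ of \eqref{def:Aprime1}. If $s_- = 0$ and $P_1 > 0$ this reads $\bottom \leq 0$, which is why the criterion is stated with $P_1$ rather than $A'_1$: no negative quadratic term is available in that situation, since the optimal $q$ is then $0$ and $q=0$ kills the quadratic term as well.

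For the quadratic case $P_1 = A'_1 = 0$, $A'_2 \neq 0$, note that $\langle V^\square\psi,\psi\rangle = P_1\|\psi\|^2 = 0$ for every $\psi \in \mathcal V_{01}$ by \eqref{eq:orthogonalityrelations}, so the linear term of \eqref{eq:expansion} drops out as soon as $\psi \in \mathcal V_{01}$, and the expansion reduces to $\epsilon^2 q^2\bigl(\langle H_0^\square(\theta)\varphi,\varphi\rangle + 2\Re\langle\psi, V^\square\varphi\rangle\bigr) + \epsilon^3 q^3\langle V^\square\varphi,\varphi\rangle$, exactly as in the proof of lemma \ref{lem:infspec}. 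I would then fix $q = s_+$ (since $q^2$ is maximized at the right endpoint of $[0,s_+]$ and the quadratic coefficient to be extracted is nonpositive), choose by continuity and compactness a maximizing pair of unit vectors $(\psi^*,\varphi^*) \in \mathcal V_{01}\times\mathcal V_0^\perp$ for $|\langle\psi,V^\square\varphi\rangle|^2/\langle H_0^\square(\theta)\varphi,\varphi\rangle$, and set $\psi = \psi^*$, $\varphi = \lambda\varphi^*$ with $\lambda = -\langle\psi^*,V^\square\varphi^*\rangle/\langle H_0^\square(\theta)\varphi^*,\varphi^*\rangle$. The same algebraic identity as in lemma \ref{lem:infspec} gives $\langle H_0^\square(\theta)\varphi,\varphi\rangle + 2\Re\langle\psi^*,V^\square\varphi\rangle = -|\langle\psi^*,V^\square\varphi^*\rangle|^2/\langle H_0^\square(\theta)\varphi^*,\varphi^*\rangle$, hence $\langle H_{q,\epsilon}^\square(\theta)u,u\rangle = \epsilon^2 A'_2 + O(\epsilon^3)$ by \eqref{def:Aprime2}; normalizing via \eqref{eq:devinvnorm} multiplies this by $1 + O(\epsilon^2)$, which is harmless as $A'_2 \leq 0$. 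Finally, for $P_1 = A'_1 = A'_2 = 0$ the choice $\varphi = 0$, $\psi = \psi_1$ and any $q \in S_\mu$ gives $\langle H_{q,\epsilon}^\square(\theta)\psi_1,\psi_1\rangle = \epsilon q P_1 = 0$, so $\bottom \leq 0$.

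I do not expect any new analytic difficulty; the main obstacle is purely organizational. The point demanding care is the case split itself: one has to check that, under the constraint $q \geq 0$, the double infimum of the linear term $\epsilon q\sum_i |c_i|^2 P_i$ over normalized $\psi = \sum_i c_i\psi_i \in \mathcal V_0$ and $q \in S_\mu$ is attained by restricting $\psi$ to $\mathcal V_{01}$ and $q$ to an endpoint, which is precisely where ``$A'_1 \neq 0$'' must be replaced by ``$P_1 \neq 0$'', and that, once the linear term is dealt with, the quadratic optimization is a verbatim transcription of the sign-changing argument with $\mathcal V_0$ replaced by $\mathcal V_{01}$ and $\max(s_-^2,s_+^2)$ replaced by $s_+^2$. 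The complementary lower bound, lemma \ref{lem:pos-convinfspec}, would then follow from the same adaptation of lemma \ref{lem:convinfspec}.
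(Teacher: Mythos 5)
Your proposal is correct and follows essentially the same route as the paper: reduction via Lemma \ref{lem:upperboundperiodic}, the trial vector $u=\psi+\epsilon q\varphi$ with the expansion \eqref{eq:expansion}, the endpoint choice of $q$ driven by the sign of $P_1$, and in the quadratic case the maximizing pair $(\psi^*,\varphi^*)\in\mathcal V_{01}\times\mathcal V_0^\perp$ with the same choice of $\lambda$. Your observation that the linear term vanishes on $\mathcal V_{01}$ (rather than on all of $\mathcal V_0$, as in the sign-changing case) is exactly the point the paper's sketch relies on, so nothing is missing.
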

The proof of this theorem is very similar to the proof of theorem \ref{thm:main}. 
Indeed, lemma \ref{lem:upperboundperiodic} is also valid in this setting. The theorem is then a consequence of the following lemma.
\begin{lem} \label{lem:infspecalt}
  Let $A'_1$ and $A'_2$ as in \eqref{def:Aprime1}, \eqref{def:Aprime2}, assume
  \eqref{hypa}, \eqref{hypb}, and \eqref{hypcprime} and fix $\theta \in \Theta$.  Then, for $\epsilon>0$ small enough, if $P_1 \neq 0$,
\begin{equation*}
  \inf_{q \in S_\mu} \inf_{\|u\|_{\ell^2(\square)} = 1} \left \langle H_{q, \epsilon}^\square(\theta) u,u \right \rangle \leq \epsilon A'_1,
\end{equation*}
whereas if $P_1 = A'_1 = 0$, but $A'_2\neq 0$, then
\begin{equation*}
  \inf_{q \in S_\mu} \inf_{\|u\|_{\ell^2(\square)} = 1} \left \langle H_{q, \epsilon}^\square(\theta) u,u \right \rangle \leq \epsilon^2 A'_2 + O(\epsilon^3)
\end{equation*}
Finally, if $P_1=A'_1 = A'_2 = 0$, then
\begin{equation*}
  \inf_{q \in S_\mu} \inf_{\|u\|_{\ell^2(\square)} = 1} \left \langle H_{q, \epsilon}^\square(\theta) u,u \right \rangle \leq 0.
\end{equation*}
\end{lem}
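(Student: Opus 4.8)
The plan is to follow the exact same variational strategy as in the proof of Lemma~\ref{lem:infspec}, using a trial state of the form $u = \psi + \epsilon q \varphi$ but now with the additional constraint $\psi \in \mathcal V_{01}$ and $q \in S_\mu$ restricted so that $q \geq 0$. By Lemma~\ref{lem:upperboundperiodic} it suffices to exhibit, for each $\epsilon>0$ small, some $q\in S_\mu$ and some normalized $u\in\ell^2(\square)$ with $\langle H^\square_{q,\epsilon}(\theta)u,u\rangle$ bounded by the claimed quantity. The key algebraic identity is again the expansion \eqref{eq:expansion}, which holds verbatim for any $\psi\in\mathcal V_0$, $\varphi\in\mathcal V_0^\perp$ and any $q\in\R$.

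First I would treat the case $P_1\neq 0$. Here the point is that the linear term $\epsilon q\langle V^\square\psi,\psi\rangle$ over $\psi\in\mathcal V_0$, $q\in S_\mu$, is minimized by taking $\psi$ in the lowest eigenspace of $A$, i.e. $\psi=\psi_1\in\mathcal V_{01}$, giving the value $\epsilon q P_1$; minimizing then over $q\in S_\mu=[s_-,s_+]$ (actually over $\{s_-,s_+\}\subset S_\mu$) yields $\epsilon\min(s_+P_1,s_-P_1)=\epsilon A_1'$. One subtlety worth spelling out, and the reason $P_1$ rather than $A_1'$ appears in the hypothesis: if $P_1>0$ and $s_-=0$ then $A_1'=0$, but the minimizing choice is $q=s_-=0$, which kills $V_q$ entirely, so no negative quadratic correction can be extracted; hence the linear bound $\bottom\le\epsilon A_1'=0$ is all one gets, and this is exactly what the statement of Theorem~\ref{thm:mainaltversion} records by demanding $P_1=0$ (not merely $A_1'=0$) before passing to the quadratic regime. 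So I take $\psi=\psi_1$, $\varphi=0$, and $q\in\{s_-,s_+\}$ achieving the minimum; \eqref{eq:expansion} collapses to $\langle H^\square_{q,\epsilon}(\theta)u,u\rangle=\epsilon q P_1$, and since $u=\psi_1$ is normalized this gives the first bound.

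Next, for the case $P_1=A_1'=0$ but $A_2'\neq 0$: since $P_1=0$ and $A_{ij}$ is diagonalized by the $\psi_j$ with $A_{11}=P_1=0$, every $\psi\in\mathcal V_{01}$ satisfies $\langle\psi,V^\square\psi\rangle=0$, hence $V^\square\psi\perp\psi$; more importantly, for $\psi\in\mathcal V_{01}$ and $\psi'\in\mathcal V_0$ arbitrary one still needs $\langle\psi,V^\square\psi'\rangle$ to not contribute a linear term — but in \eqref{eq:expansion} the only linear-in-$\epsilon$ term is $\epsilon q\langle V^\square\psi,\psi\rangle$ with the \emph{same} $\psi$, which vanishes because $\psi\in\mathcal V_{01}$ and $P_1=0$. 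Thus \eqref{eq:expansion} reduces to $\epsilon^2 q^2(\langle H^\square_0(\theta)\varphi,\varphi\rangle + 2\Re\langle\psi,V^\square\varphi\rangle) + \epsilon^3 q^3\langle V^\square\varphi,\varphi\rangle$. Now I choose $(\psi^*,\varphi^*)\in\mathcal V_{01}\times\mathcal V_0^\perp$ maximizing $|\langle\psi,V^\square\varphi\rangle|^2/\langle H^\square_0(\theta)\varphi,\varphi\rangle$ (this maximum exists by continuity and compactness of the relevant unit spheres, finite dimension), set $\varphi=\lambda\varphi^*$ with $\lambda = -\langle\psi^*,V^\square\varphi^*\rangle/\langle H^\square_0(\theta)\varphi^*,\varphi^*\rangle$, and take $q=s_+$ (which is legitimate since $s_+>0$ and $s_+^2=\max$ over $S_\mu$ of $q^2$ when $S_\mu\subset[0,s_+]$ — this is why $s_+^2$ and not $\max(s_-^2,s_+^2)$ appears in $A_2'$). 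The same completion-of-the-square computation as in Lemma~\ref{lem:infspec} turns the bracket into $-|\langle\psi^*,V^\square\varphi^*\rangle|^2/\langle H^\square_0(\theta)\varphi^*,\varphi^*\rangle$, so $\langle H^\square_{q,\epsilon}(\theta)u,u\rangle = \epsilon^2 A_2' + O(\epsilon^3)$, and normalizing $u$ via \eqref{eq:devinvnorm} only perturbs this at order $\epsilon^4$. Finally, if $P_1=A_1'=A_2'=0$, then by Lemma~\ref{cor:simple}-type reasoning $V^\square\psi=0$ for all $\psi\in\mathcal V_{01}$ (indeed $A_2'=0$ forces $\langle V^\square\psi,\varphi\rangle=0$ for all $\varphi\in\mathcal V_0^\perp$, and combined with $\langle V^\square\psi,\psi'\rangle=0$ for $\psi'\in\mathcal V_0$ this gives $V^\square\psi=0$); take $\psi\in\mathcal V_{01}$ normalized, $\varphi=0$, any $q$, and \eqref{eq:expansion} gives $\langle H^\square_{q,\epsilon}(\theta)u,u\rangle=0$.

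The main obstacle, and the only place where this proof genuinely differs from that of Lemma~\ref{lem:infspec}, is bookkeeping the role of the sign constraint $s_-\geq 0$: one must be careful that the quadratic-optimization step is only available when $P_1=0$ (so that restricting to $\mathcal V_{01}$ rather than all of $\mathcal V_0$ does not cost a linear term), and that the relevant coefficient is $s_+^2$ rather than $\max(s_-^2,s_+^2)$ because only $q=s_+$ is guaranteed to be both admissible and to maximize $q^2$. Everything else — the expansion \eqref{eq:expansion}, the completion of the square, the normalization estimate — is identical to the sign-changing case and I would simply refer to the corresponding steps in the proof of Lemma~\ref{lem:infspec} rather than repeating the computations.
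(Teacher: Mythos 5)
Your proposal is correct and follows essentially the same route as the paper's own (sketched) proof: the trial state $u=\psi+\epsilon q\varphi$ with $\psi$ restricted to $\mathcal V_{01}$, the choice $q\in\{s_-,s_+\}$ minimizing $qP_1$ in the linear case, the completion of the square with $q=s_+$ in the quadratic case, and $\varphi=0$ in the degenerate case. Your added explanation of why the hypothesis is $P_1\neq 0$ rather than $A_1'\neq 0$, and why $s_+^2$ replaces $\max(s_-^2,s_+^2)$, matches the remark the authors make after Theorem \ref{thm:mainaltversionintro}.
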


\subsubsection*{Sketch of proof}
We proceed in the argument as in lemma \ref{lem:infspec} up to equation \eqref{eq:expansion}. 
If $P_1 \neq 0$ we let 
\
\begin{equation*}
  u= \psi_1, \quad \varphi =0, \quad 
\end{equation*}
in  \eqref{eq:expansion} and thus  
\begin{equation*} 
  \inf_{q \in S_\mu} \inf_{\|u\|_{\ell^2(\square)} = 1} \left \langle H_{q, \epsilon}^\square(\theta) u,u \right \rangle 
\leq \inf_{q \in S_\mu} \left \langle H_{q, \epsilon}^\square(\theta) \psi_1,\psi_1 \right \rangle 
\end{equation*}
Choosing
\begin{equation*}
q = 
  \left\{
    \begin{matrix}
      s_+ & \textrm{ if } & P_1 < 0  \\
      s_- & \textrm{ if } & P_1 \geq 0 
    \end{matrix}
  \right.
\end{equation*}
we obtain \begin{equation} \label{eq:upperbound-fixedsign}
\inf_{q \in S_\mu} \left \langle H_{q, \epsilon}^\square(\theta) \psi_1,\psi_1 \right \rangle 
\leq \epsilon A'_1,
\end{equation}
If $P_1=0$ (and thus $A'_1=0$) but $A'_2 \neq 0$, 
then we find $\psi^* \in \mathcal V_{01}$ and $\varphi \in \mathcal V_0^\perp$ realizing the supremum in the definition of $A'_2$ and then we proceed as in lemma \ref{lem:infspec}.
In particular, we know 
\begin{equation} \label{eq:upperbound-fixedsign2}
  \inf_{q \in S_\mu} \inf_{\|u\|_{\ell^2(\square)} = 1} \left \langle H_{q, \epsilon}^\square(\theta) u,u \right \rangle 
\leq \epsilon^2 A'_2 + O(\epsilon^3)
\end{equation}
Finally, if $P_1=A'_1 = A'_2 = 0$ we take $u=\psi$ in \eqref{eq:expansion}
and conclude
\begin{equation} \label{eq:upperbound-fixedsign3}
  \inf_{q \in S_\mu} \inf_{\|u\|_{\ell^2(\square)} = 1} \left \langle H_{q, \epsilon}^\square(\theta) u,u \right \rangle 
\leq \epsilon A'_1= 0.
\end{equation}
\findem

We prove the following converse lemma.
\begin{lem} \label{lem:pos-convinfspec}
  Let $A'_1$ and $A'_2$ as in \eqref{def:Aprime1}, \eqref{def:Aprime2}, assume
  \eqref{hypa}, \eqref{hypb} and \eqref{hypcprime}, and fix $\theta \in \Theta$.  Then, for $\epsilon>0$ small enough, if $P_1 \neq 0$,
\begin{equation*}
  \inf_{q \in S_\mu} \inf_{\|u\|_{\ell^2(\square)} = 1} \left \langle H_{q, \epsilon}^\square(\theta) u,u \right \rangle \geq \epsilon A'_1 + O(\epsilon^{3/2}),
\end{equation*}
whereas if $P_1=A'_1 = 0$, but $A'_2 \neq 0$, then
\begin{equation*}
  \inf_{q \in S_\mu} \inf_{\|u\|_{\ell^2(\square)} = 1} \left \langle H_{q, \epsilon}^\square(\theta) u,u \right \rangle \geq \epsilon^2 A'_2 + O(\epsilon^3)
\end{equation*}
Finally, if $P_1=A'_1 = A'_2 = 0$, then
\begin{equation*}
  \inf_{q \in S_\mu} \inf_{\|u\|_{\ell^2(\square)} = 1} \left \langle H_{q, \epsilon}^\square(\theta) u,u \right \rangle \geq 0.
\end{equation*}
\end{lem}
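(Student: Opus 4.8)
The argument parallels the proof of Lemma~\ref{lem:convinfspec}, with one structural simplification: assumption \eqref{hypcprime} makes every $q\in S_\mu$ non-negative, so for all $q\in S_\mu$ and all $\psi\in\mathcal V_0$ we have $\epsilon q\left\langle V^\square\psi,\psi\right\rangle\geq\epsilon qP_1\|\psi\|_{\ell^2(\square)}^2$, because $P_1$ is the smallest eigenvalue of the compression of $V^\square$ to $\mathcal V_0$. As in Lemma~\ref{lem:convinfspec} I would fix $\epsilon>0$, let $q=q_\epsilon\in S_\mu\subset[s_-,s_+]$ be a value minimizing \eqref{eq:infimum}, lower bound $\inf_{\|u\|=1}\left\langle H_{q,\epsilon}^\square(\theta)u,u\right\rangle$ by letting $u=\psi+\varphi$ run over $\mathcal V_0\times\mathcal V_0^\perp$ with $\|\psi\|,\|\varphi\|\leq 1$, choose a minimizing pair $(\psi^*,\varphi^*)$, and use $\left\langle H_0^\square(\theta)(\psi^*+\varphi^*),(\psi^*+\varphi^*)\right\rangle=\left\langle H_0^\square(\theta)\varphi^*,\varphi^*\right\rangle\geq g\|\varphi^*\|^2$ with $g$ the spectral gap of $H_0^\square$.

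\emph{Case $P_1\neq 0$.} Here no minimizer is needed. For any $q\in S_\mu$ and any normalized $u=\psi+\varphi$, expanding $\left\langle H_{q,\epsilon}^\square(\theta)u,u\right\rangle$ as in \eqref{eq:expansion}, inserting the inequality above, and using $\|\psi\|^2=1-\|\varphi\|^2$ gives
\begin{equation*}
  \left\langle H_{q,\epsilon}^\square(\theta)u,u\right\rangle\geq g\|\varphi\|^2+\epsilon qP_1(1-\|\varphi\|^2)-2\epsilon q\|V^\square\|\,\|\varphi\|-\epsilon q\|V^\square\|\,\|\varphi\|^2.
\end{equation*}
If $P_1>0$ then $qP_1\geq s_-P_1=A'_1$, and if $P_1<0$ then $qP_1\geq s_+P_1=A'_1$ with in addition $-\epsilon qP_1\|\varphi\|^2\geq 0$; in either case, after pulling out $\epsilon A'_1$, the coefficient of $\|\varphi\|^2$ that remains is $g-\epsilon q(P_1+\|V^\square\|)$, which is positive for $\epsilon$ small, so completing the square in $\|\varphi\|$ costs only $O(\epsilon^2)$ and leaves $\left\langle H_{q,\epsilon}^\square(\theta)u,u\right\rangle\geq\epsilon A'_1+O(\epsilon^2)$, uniformly in $q$ and $u$. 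This settles the first assertion.

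\emph{Cases $P_1=A'_1=0$.} Now $\mathcal V_{01}=\ker A$, and I would split $\psi^*=\psi^*_1+\psi^*_\perp$ with $\psi^*_1\in\mathcal V_{01}$ and $\psi^*_\perp$ in the orthogonal complement of $\mathcal V_{01}$ inside $\mathcal V_0$ (if $A=0$ this complement is trivial and the argument reduces to that of Lemma~\ref{lem:convinfspec}); then $\left\langle V^\square\psi^*_1,\psi^*_1\right\rangle=\left\langle V^\square\psi^*_1,\psi^*_\perp\right\rangle=0$ and $\left\langle V^\square\psi^*_\perp,\psi^*_\perp\right\rangle\geq P'\|\psi^*_\perp\|^2$, where $P'>0$ is the smallest nonzero eigenvalue of $A$. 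If $A'_2=0$, then as in Lemma~\ref{cor:simple} the identities $P_1=0$ and $A'_2=0$ force $V^\square\psi^*_1\perp\mathcal V_0$ and $V^\square\psi^*_1\perp\mathcal V_0^\perp$, hence $V^\square\psi^*_1=0$, and the form reduces to $g\|\varphi^*\|^2+\epsilon q\bigl(P'\|\psi^*_\perp\|^2-2\|V^\square\|\,\|\varphi^*\|\,\|\psi^*_\perp\|-\|V^\square\|\,\|\varphi^*\|^2\bigr)$, a quadratic form in $(\|\varphi^*\|,\|\psi^*_\perp\|)$ that is positive semidefinite for $\epsilon$ small (as $g,P'>0$), giving the bound $0$. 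If $A'_2\neq 0$, Lemma~\ref{lem:infspecalt} makes the relaxed minimum negative at $(\psi^*,\varphi^*)$, which together with the gap bound forces $\varphi^*\neq 0$ and $\|\varphi^*\|=O(\epsilon)$; completing the square in $\|\psi^*_\perp\|$ against $\epsilon qP'\|\psi^*_\perp\|^2$ absorbs the cross term $2\epsilon q\Re\left\langle V^\square\varphi^*,\psi^*_\perp\right\rangle$ into $O(\epsilon\|\varphi^*\|^2)=O(\epsilon^3)$ and also shows $\left\langle V^\square\varphi^*,\psi^*_1\right\rangle\neq 0$, leaving
\begin{equation*}
  \left\langle H_{q,\epsilon}^\square(\theta)(\psi^*+\varphi^*),(\psi^*+\varphi^*)\right\rangle\geq\left\langle H_0^\square(\theta)\varphi^*,\varphi^*\right\rangle+2\epsilon q\Re\left\langle V^\square\varphi^*,\psi^*_1\right\rangle+O(\epsilon^3),
\end{equation*}
which is exactly the quantity treated in Lemma~\ref{lem:convinfspec}: the substitution $\lambda=-\left\langle H_0^\square(\theta)\varphi^*,\varphi^*\right\rangle/\left\langle V^\square\psi^*_1,\varphi^*\right\rangle$ turns it into $(|\lambda|^2-2\Re\overline\lambda\,\epsilon q)\,|\left\langle V^\square\psi^*_1,\varphi^*\right\rangle|^2/\left\langle H_0^\square(\theta)\varphi^*,\varphi^*\right\rangle+O(\epsilon^3)$, and the trivial bound $|\lambda|^2-2\Re\overline\lambda\,\epsilon q\geq-\epsilon^2q^2\geq-\epsilon^2s_+^2$ together with $\psi^*_1\in\mathcal V_{01}$, $\varphi^*\in\mathcal V_0^\perp$ and the definition \eqref{def:Aprime2} of $A'_2$ yields $\geq\epsilon^2A'_2+O(\epsilon^3)$.

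The main obstacle is this last case. Unlike in Lemma~\ref{lem:infspecalt}, where the test vector is chosen inside $\mathcal V_{01}$ from the start, here $A'_1=0$ does not force $A\equiv 0$, so the minimizing $\psi^*$ may pick up a component $\psi^*_\perp$ transverse to $\mathcal V_{01}$ within $\mathcal V_0$; one has to check, using only $q\geq 0$ and the positive-definiteness of $V^\square$ on that transverse subspace, that $\psi^*_\perp$ and its cross terms with $\varphi^*$ contribute only at order $O(\epsilon^3)$, and also that $\left\langle V^\square\psi^*_1,\varphi^*\right\rangle\neq 0$ so that $\lambda$ is well defined. Both come out of completing the square against $\epsilon qP'\|\psi^*_\perp\|^2$; keeping $\psi^*_\perp$ under control throughout the $\lambda$-substitution is the one genuinely new point relative to Lemma~\ref{lem:convinfspec}.
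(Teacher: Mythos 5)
Your proposal is correct and follows essentially the same route as the paper's proof: the same relaxation over $\mathcal V_0\times\mathcal V_0^\perp$ with a minimizing pair $(\psi^*,\varphi^*)$, the same spectral-gap bound $\left\langle H_0^\square(\theta)\varphi^*,\varphi^*\right\rangle\geq g\|\varphi^*\|^2$, the same decomposition $\psi^*=\psi^*_{01}+\psi^*_{0\perp}$ with the gap of $A$ (your $P'$, the paper's $g_A$) controlling the transverse component, and the same $\lambda$-substitution against the definition of $A'_2$. The only differences are streamlinings: in the case $P_1\neq 0$ you give a direct pointwise bound (yielding $O(\epsilon^2)$ instead of the paper's $O(\epsilon^{3/2})$ via the shifted form $H^\square_{q,\epsilon}-\epsilon A'_1$), and in the quadratic case you absorb the cross term $2\epsilon q\Re\left\langle V^\square\psi^*_{0\perp},\varphi^*\right\rangle$ by one completion of the square where the paper splits into the two subcases $\Re\left\langle V^\square\psi^*_{0\perp},\varphi^*\right\rangle=0$ and $\neq 0$ and uses an auxiliary coefficient $\mu$.
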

\debdem
We adapt here the proof of lemma \ref{lem:convinfspec}.
Fix $\epsilon >0$ and let $q_\epsilon \in S_\mu$ be a value which minimizes the map
\begin{equation}
\label{eq:infimum2}
  q \mapsto \inf_{\|u\|_{\ell^2(\square)} = 1} \left \langle H_{\epsilon,
  q}^\square(\theta) u,u \right \rangle.
\end{equation}
We know a-priori $q_\epsilon \in [s_-,s_+]$. 
For simplicity, we write in the sequel simply $q$ for $q_\epsilon$.
We lower bound the quadratic form
by minimizing over a larger set by writing
\
\begin{multline*}
\inf_{\|u\|_{\ell^2(\square)} = 1} \left \langle ( H_{\epsilon,
  q}^\square(\theta) - \epsilon A'_1) u,u \right \rangle 
\\ \geq \inf_{\substack{\psi \in
    \mathcal V_{0} \\ \|\psi\|_{\ell^2(\square)} \leq 1}}
\inf_{\substack{\varphi \in \mathcal V_{0}^\perp
    \\ \|\varphi\|_{\ell^2(\square)} \leq 1}} \left \langle ( H_{\epsilon,
  q}^\square(\theta)  - \epsilon A'_1)(\psi+\varphi),(\psi+\varphi) \right \rangle. 
\end{multline*}
By continuity and compactness, there exists some pair 
$(\psi^*,\varphi^*):=(\psi^*_\epsilon,\varphi^*_\epsilon)$ in $\mathcal V_{0} \times \mathcal V_{0}^\perp$ realizing the infimum on the right hand
side. 
We see that
\begin{equation} \label{eq:pos-normvarphi}
  \left\langle H^\square_0(\theta)
  (\psi^* + \varphi^*), (\psi^* + \varphi^*) \right\rangle = \left\langle H^\square_0(\theta)
  \varphi^*, \varphi^* \right\rangle \geq g \|\varphi^*\|_{\ell^2(\square)}^2,
\end{equation}
where the constant $g$ is the spectral gap of $H^\square_0$, which is also its (positive) second eigenvalue. 

We study the different cases.

\subsubsection*{Case $P_1 \neq 0$.}
By (\ref{eq:upperbound-fixedsign}) in the previous lemma, we know
\begin{equation} \label{eq:pos-negspec}
  \left \langle (H_{q, \epsilon}^\square(\theta) - \epsilon A'_1) (\psi^*+\varphi^*),(\psi^*+\varphi^*) \right \rangle \leq 0.
\end{equation}
Using \eqref{eq:pos-normvarphi} and \eqref{eq:pos-negspec}  we get
\begin{equation*}
   \|\varphi^*\|^2_{\ell^2(\square)} 
\leq - \frac{\epsilon }{g} \left\langle (qV^\square -  A'_1) (\psi^* + \varphi^*), (\psi^* + \varphi^*) \right\rangle 
\leq 2\epsilon g^{-1}  \|qV^\square-A'_1\| .
\end{equation*}
Note that 
\
\begin{align*}
   q \left\langle V^\square \psi^*, \psi^* \right\rangle 
&\geq 
\|\psi^*\|^2 q \inf_{\substack{\psi \in \mathcal V_0 \\ \|\psi\|_{\ell^2(\square)} \leq 1}} \left\langle V^\square \psi, \psi \right\rangle  
\\
&=  \|\psi^*\|^2 q \inf_{\substack{\psi \in \mathcal V_{01} \\ \|\psi\|_{\ell^2(\square)} \leq 1}} \left\langle V^\square \psi, \psi \right\rangle 
\geq \|\psi^*\|^2 A'_1.
\end{align*}
and thus
\begin{equation*}
    \left\langle  (q V^\square - A'_1) \psi^*, \psi^* \right\rangle \geq 0.
\end{equation*}
On the other hand, we have $ \|qV^\square-A'_1\|\leq  2s_+\|V^\square\| $. This implies
\begin{multline*}
  \inf_{\|u\|_{\ell^2(\square)} = 1} \left \langle (H_{q, \epsilon}^\square(\theta) -\epsilon A'_1) u,u \right \rangle \geq  
  \epsilon \left\langle (q V^\square - A'_1) \psi^*, \psi^* \right\rangle \\
  \phantom{\widehat MMMMMMMM} + 2 \epsilon \Re \left\langle (q V^\square - A'_1)  \varphi^*, \psi^* \right\rangle  + \epsilon \left\langle (q V^\square - A'_1)  \varphi^*, \varphi^* \right\rangle 
\\
 \phantom{\widehat M} \geq  - 2^{3/2}  g^{-1/2}\epsilon^{3/2}  \|qV^\square-A'_1\|^{3/2}- 2 g^{-1}\epsilon^2  \|qV^\square-A'_1\|^2 
\\
 \geq  - 8  g^{-1/2}\epsilon^{3/2}  s_+^{3/2}\|V^\square\|^{3/2}- 8 g^{-1}\epsilon^2 s_+^2  \|V^\square\|^2 .
\end{multline*}
\subsubsection*{Case $P_1 = A'_1 = 0$ and $A'_2 \neq 0$.}
We know from 
\eqref{eq:upperbound-fixedsign2}
that
\begin{equation} \label{eq:pos-negspec2}
0>   \left \langle H_{q, \epsilon}^\square(\theta) (\psi^*+\varphi^*),(\psi^*+\varphi^*) \right \rangle.
\end{equation}
 We will decompose further $\psi^* = \psi^*_{01} + \psi^*_{0\perp} \in
 \mathcal V_0$, with $\psi^*_{01} \in \mathcal V_{01}$ and
 $\psi^*_{0\perp} \in \mathcal V_{01}^\perp$. 
Using \eqref{eq:orthogonalityrelations} and $P_1=0$, we conclude
\
\begin{gather*}
\left\langle  V^\square \psi^*_{01}, \psi^*_{01} \right\rangle  
= \left\langle  V^\square \psi^*_{01}, \psi^*_{0\perp} \right\rangle  = 0
\\
\text{and }\left\langle  V^\square \psi^*_{0\perp}, \psi^*_{0\perp} \right\rangle  
\geq 0
\end{gather*}
Hence
\begin{align}
\bigl\langle  V^\square (\psi^*+\varphi^*), (&\psi^*+\varphi^*) \bigr\rangle\label{eq:quadratic_form} \\
= 
& \left\langle  V^\square \psi^*_{01}, \psi^*_{01} \right\rangle  + 2 \Re \left\langle  V^\square \psi^*_{01}, \psi^*_{0\perp} \right\rangle   + \left\langle  V^\square \psi^*_{0\perp}, \psi^*_{0\perp} \right\rangle  \nonumber 
\\
& + 2 \Re \left\langle  V^\square \psi^*_{01}, \varphi^* \right\rangle 
+ 2 \Re \left\langle  V^\square \psi^*_{0\perp}, \varphi^* \right\rangle 
+ \left\langle  V^\square \varphi^*, \varphi^* \right\rangle \nonumber \\
\geq
& 2 \Re \left\langle  V^\square \psi^*_{01}, \varphi^* \right\rangle 
+  2 \Re \left\langle  V^\square \psi^*_{0\perp}, \varphi^* \right\rangle   
+ \left\langle  V^\square \varphi^*, \varphi^* \right\rangle  \nonumber
\end{align}
In the specific case $\Re \left\langle  V^\square \psi^*_{0\perp}, \varphi^* \right\rangle =0$ 
we obtain:
\begin{align} \nonumber
\left\langle  V^\square (\psi^*+\varphi^*), (\psi^*+\varphi^*) \right\rangle 
&\geq
  2 \Re \left\langle  V^\square \psi^*_{01}, \varphi^* \right\rangle   
+ \left\langle  V^\square \varphi^*, \varphi^* \right\rangle  \label{ineq:magic}.
\\&
  \geq
  2 \Re \left\langle  V^\square \psi^*_{01}, \varphi^* \right\rangle   
 - \| V^\square\|  \| \varphi^*\|^2.
\end{align}
Using this bound, \eqref{eq:pos-normvarphi}  and   (\ref{eq:pos-negspec2}), we obtain
\begin{align*} 
0&>  
\left \langle H_{q, \epsilon}^\square(\theta) (\psi^*+\varphi^*),(\psi^*+\varphi^*) \right\rangle \\
  &\geq  \left\langle H^\square_0(\theta)  \varphi^*, \varphi^* \right\rangle + 2 \epsilon q  \Re \left\langle  V^\square \psi^*_{01}, \varphi^* \right\rangle 
+ \epsilon q \left\langle  V^\square  \varphi^*, \varphi^* \right\rangle 
\\
  &\geq  g \|\varphi^*\|^2 + 2 \epsilon q  \Re \left\langle  V^\square \psi^*_{01}, \varphi^* \right\rangle 
- \epsilon q \| V^\square\|  \| \varphi^*\|^2
\end{align*}
Therefore 
\begin{equation*}
   \|\varphi^*\|_{\ell^2(\square)}^2 
\leq  \epsilon q g^{-1} \|V^\square\|  ( 2  \|\varphi^*\|_{\ell^2(\square)} +  \|\varphi^*\|_{\ell^2(\square)}^2) 
\leq  3 \epsilon q g^{-1} \|V^\square\|  \|\varphi^*\|_{\ell^2(\square)}.
\end{equation*}
which simplifies to  
\begin{equation} \label{ineq:pos-smallnorm}
   \|\varphi^*\|_{\ell^2(\square)} \leq 3 \epsilon q g^{-1} \|V^\square\| .
\end{equation}
This inequality implies 
\begin{align} 
\nonumber
0&> \left \langle H_{q, \epsilon}^\square(\theta) (\psi^*+\phi^*),(\psi^*+\phi^*) \right \rangle 
\\ 
\label{eq:beforelambda}
&\geq 
  \left\langle H^\square_0(\theta)  \varphi^*, \varphi^* \right\rangle 
+ 2 \epsilon q  \Re \left\langle  V^\square \psi^*_{01}, \varphi^* \right\rangle 
- \frac{9}{g^2} \epsilon^3 q^3  \|V^\square\|^3
\\
\nonumber
&\geq 
g \|\varphi^*\|_{\ell^2(\square)}^2
+ 2 \epsilon q  \Re \left\langle  V^\square \psi^*_{01}, \varphi^* \right\rangle 
- \frac{9}{g^2} \epsilon^3 q^3  \|V^\square\|^3
\end{align}
and we see that $\left\langle V^\square \psi^*_{01}, \varphi^* \right\rangle \neq 0$ 
for small $\epsilon$. Thus the choice 
$$ 
\lambda = - \frac{\left\langle H_0^\square(\theta) \varphi^*, \varphi^* \right\rangle}{\left\langle V^\square \psi^*_{01}, \varphi^* \right\rangle}.
$$
is well defined for small $\epsilon$. 
We multiply (\ref{eq:beforelambda}) by $1=|\lambda|^2/|\lambda|^2=\overline \lambda/\overline \lambda$
and obtain
\begin{align} \label{eq:afterlambda}
 0&> \left \langle H_{q, \epsilon}^\square(\theta) (\psi^*+\phi^*),(\psi^*+\phi^*) \right \rangle 
\\ \nonumber
 &\geq   |\lambda|^2 \frac{\left\langle H_0^\square(\theta) \varphi^*, \varphi^* \right\rangle}{|\lambda|^2}  +  2 \Re  \overline \lambda  \epsilon q   \frac{\left\langle V^\square \psi^*_{01}, \varphi^* \right\rangle}{\overline \lambda} 
- \frac{9}{g^2} \epsilon^3 q^3  \|V^\square\|^3
\\ \nonumber
&\geq (|\lambda|^2  -  2 \Re \overline \lambda \epsilon q)  \frac{\left| \left\langle V^\square \psi^*_{01}, \varphi^* \right\rangle \right|^2}{\left\langle H_0^\square(\theta) \varphi^*, \varphi^* \right\rangle} +   O(\epsilon^3) 
\\ \nonumber
&\geq
-\epsilon^2 q^2 \frac{\left| \left\langle V^\square \psi^*, \varphi^* \right\rangle \right|^2}{\left\langle H_0^\square(\theta) \varphi^*, \varphi^* \right\rangle} +   O(\epsilon^3) 
\end{align}
where in the last line we used the trivial bound
$|\lambda|^2 - 2 \Re \overline \lambda \epsilon q 
\geq |\lambda|^2 - 2 |\lambda| \epsilon q \geq -\epsilon^2 q^2$. 
Since $-q^2 \geq -s_+^2$, this implies by the very definition of $A'_2$
\begin{align*} 
 0 &> \left \langle H_{q, \epsilon}^\square(\theta) (\psi^*+\phi^*),(\psi^*+\phi^*) \right \rangle 
\geq  \epsilon^2  A'_2 +  O(\epsilon^3).
\end{align*}
In the alternative case that 
$\Re \left\langle V^\square \psi^*_{0\perp}, \varphi^* \right\rangle \neq 0$ the coefficient
\begin{equation*}
  \mu 
:= - \frac{\left\langle V^\square \psi_{0\perp}^*, \psi_{0\perp}^* \right\rangle}{\left\langle V^\square \psi^*_{0\perp}, \varphi^* \right\rangle}
\in \C
\end{equation*} 
is well defined. Moreover, it implies $\psi^*_{0\perp} \neq 0$. Consequently, 
$\mathcal V_0 \neq \mathcal V_{01}$ and there exists an eigenvalue of the matrix $A$ strictly
larger than $P_1$. Thus $g_A:= \min\{P_i \mid i =1, \ldots, p, P_i >P_1 \}$ is finite and strictly 
positive and 
$\left\langle V^\square \psi_{0\perp}^*, \psi_{0\perp}^* \right\rangle 
\geq g_A \|\psi_{0\perp}^* \|^2$. 
We can now argue 
\begin{align*}
&\left\langle  V^\square (\psi^*+\varphi^*), (\psi^*+\varphi^*) \right\rangle 
\\
\geq  
&\left\langle  V^\square \psi^*_{0\perp}, \psi^*_{0\perp} \right\rangle
+ 2 \Re \left\langle  V^\square \psi^*_{01}, \varphi^* \right\rangle 
+ 2 \Re \left\langle  V^\square \psi^*_{0\perp}, \varphi^* \right\rangle   
+ \left\langle  V^\square \varphi^*, \varphi^* \right\rangle  
\\  \nonumber 
=&   |\mu|^2 \frac{ \left\langle  V^\square \psi^*_{0\perp}, \psi^*_{0\perp} \right\rangle} {|\mu|^2} +  2 \Re \overline \mu \frac{  \left\langle  V^\square \psi^*_{0\perp}, \varphi^* \right\rangle} {\overline \mu} 
\\
 &\phantom{ |\mu|^2 \frac{ \left\langle  V^\square \psi^*_{0\perp}, \psi^*_{0\perp} \right\rangle} {|\mu|^2}} +2 \Re \left\langle  V^\square \psi^*_{01}, \varphi^* \right\rangle 
+ \left\langle  V^\square \varphi^*, \varphi^* \right\rangle \nonumber 
\\
=&  (|\mu|^2  -  2 \Re \overline \mu)  \frac{\left| \left\langle V^\square \psi^*_{0\perp}, \varphi^* \right\rangle \right|^2}{\left\langle V^\square \psi^*_{0\perp}, \psi^*_{0\perp} \right\rangle}
 +  2 \Re \left\langle  V^\square \psi^*_{01}, \varphi^* \right\rangle 
 + \left\langle  V^\square \varphi^*, \varphi^* \right\rangle   \nonumber \\
\geq & -\frac{\left| \left\langle V^\square \psi^*_{0\perp}, \varphi^* \right\rangle \right|^2}{\left\langle V^\square \psi^*_{0\perp}, \psi^*_{0\perp} \right\rangle}
 + 2 \Re \left\langle  V^\square \psi^*_{01}, \varphi^* \right\rangle 
+ \left\langle  V^\square \varphi^*, \varphi^* \right\rangle   \nonumber  \\
\end{align*}
Now we estimate
\begin{align*}
\frac{\left| \left\langle V^\square \psi^*_{0\perp}, \varphi^* \right\rangle \right|^2}{\left\langle V^\square \psi^*_{0\perp}, \psi^*_{0\perp} \right\rangle}
\leq
\frac{\left\|V^\square\right\|^2 \, \left\|\psi^*_{0\perp}\right\|^2 \, \left\|\varphi^* \right\|^2}
{g_A \left\|\psi^*_{0\perp}\right\|^2}
=
\frac{\left\|V^\square\right\|^2 \, \left\|\varphi^* \right\|^2}{g_A}
\end{align*}
and conclude, analogously to \eqref{ineq:magic},
\begin{align*}
\bigl\langle  V^\square (\psi^*+\varphi^*), (&\psi^*+\varphi^*) \bigr\rangle \\
\geq  
 &  2 \Re \left\langle  V^\square \psi^*_{01}, \varphi^* \right\rangle 
+ \left\langle  V^\square \varphi^*, \varphi^* \right\rangle   \nonumber  
- \frac{\left\|V^\square\right\|^2 \, \left\|\varphi^* \right\|^2}{g_A}
\\
\geq  
 &  2 \Re \left\langle  V^\square \psi^*_{01}, \varphi^* \right\rangle 
\nonumber  
-\left(\left\|V^\square\right\|   + \frac{\left\|V^\square\right\|^2 }{g_A} \right ) \left\|\varphi^* \right\|^2.
\end{align*}
This implies two bounds similar to \eqref{ineq:pos-smallnorm}
and \eqref{eq:beforelambda}, respectively
\[
\left\|\varphi^* \right\| \leq 
\frac{\epsilon q}{g} \left[3 \|V^\square\| + \frac{\|V^\square\|^2}{g_A} \right] 
\]
and
\[ 
 0>  g \|\varphi^*\|^2 
+ 2 \epsilon q  \Re \left\langle  V^\square \psi^*_{01}, \varphi^* \right\rangle 
- O(\epsilon^3 q^3  ).
\]
The proof is now concluded as in \eqref{eq:afterlambda}.

\subsubsection*{Case $P_1=A'_1 = 0$ and $A'_2 = 0$.} 
The last condition implies
\begin{equation} 
\sup_{\substack{\psi \in \mathcal V_{01}
    \\ \|\psi\|_{\ell^2(\square)} = 1}} \sup_{\substack{\varphi \in \mathcal V_0^\perp
    \\ \|\varphi\|_{\ell^2(\square)} = 1}} 
\left| \left\langle \psi,
  V^\square \varphi \right\rangle \right|^2= 0.
\end{equation}
and in particular 
$\left\langle  V^\square \psi^*_{01}, \varphi^* \right\rangle =0$.
Similarly as in the previous case, a calculation completing the square will be helpful.
For small $\epsilon$, we have
\begin{align*}
&(g -\epsilon q \|V^\square\|)\left \| \varphi^* \right\|^2
+  2 \epsilon q \Re \left \langle  V^\square \psi^*_{0\perp} , \varphi^* \right\rangle 
\\ 
&\geq \frac{1}{2} g \left \| \varphi^* \right\|^2
-  2 \epsilon q \left \|  V^\square \right\| \left \| \psi^*_{0\perp}  \right\| \left \| \varphi^* \right\|
\geq -2 \epsilon^2 q^2 g^{-1}  \|  V^\square \|^2     \|\varphi^* \|^2 
 \end{align*}
We now use (\ref{eq:quadratic_form}) and $\left\langle V^\square \psi_{0\perp}^*, \psi_{0\perp}^* \right\rangle 
\geq g_A \|\psi_{0\perp}^* \|^2$ again, assuming for the moment that $ \psi^*_{0\perp}\neq 0$ and thus 
$g_A>0$.
Then
\begin{align*} 
&\inf_{\|u\|_{\ell^2(\square)}=1} \left \langle H_{q, \epsilon}^\square(\theta) u,u \right \rangle 
\\ 
&\geq (g -\epsilon q \|V^\square\|)\left\| \varphi^* \right\|^2
+  2 \epsilon q \Re \left\langle  V^\square  \psi^*_{0\perp}, \varphi^* \right\rangle + \epsilon q \langle V^\square  \psi^*_{0\perp} , \psi^*_{0\perp}\rangle 
\\
&\geq - 2 \epsilon^2 q^2 g^{-1}\left\|  V^\square \right\|^2 \left\|  \psi^*_{0\perp} \right\|^2 + \epsilon q g_A \| \psi^*_{0\perp} \|^2 \geq 0
\end{align*}
for $\epsilon>0$ small enough. If $ \psi^*_{0\perp}= 0$ then the eigenvalue gap $g_A$ 
does not exist, but then we have an even better lower bound
\begin{align*} 
\inf_{\|u\|_{\ell^2(\square)}=1} \left \langle H_{q, \epsilon}^\square(\theta) u,u \right \rangle 
\geq (g -\epsilon q \|V^\square\|)\left\| \varphi^* \right\|^2
\end{align*}

\findem

\subsection{Application to the discrete alloy type model}
\subsubsection*{Proof of theorem \ref{thm:anderson}}
It is enough
to verify that the assumptions of theorem \ref{thm:main} are
satisfied. Let $C_W:= \inf \sigma(-\Delta_{\Z^d} + W)$. It is clear
that the operator $H_0 := -\Delta_{\Z^d} + W - C_W $ satisfies hypothesis
\eqref{hypa}. In this case the set $\Theta$ consists of the single point $\theta=0$ (see theorem \ref{thm:kirschsimon} below).

Let us check property \eqref{cond:nonzero} for the operator
\
\begin{equation*}
  H_0(0) : = -\Delta_{\square} + W_\square + C_W,
\end{equation*} 
where $\Delta_{\square}$ is the Laplacian on $\square$ with periodic boundary conditions and $W_\square$ is the restriction of $W$ to $\square$.
To check the property we
use Perron--Frobenius theorem \cite{meyer2000}. For $m > |\square|$,
we verify that
\begin{equation*}
  \left\langle \delta_x , \left(\Delta_\square - W_\square + \|W\|_{\infty}  + 2d + 1 \right)^m \delta_y \right\rangle 
\geq 1.
\end{equation*}
This implies that the
largest eigenvalue of the matrix $\Delta_\square - W_\square + \|W\|_{\infty} + 2d + 1$ is simple and its corresponding eigenfunction
$\psi_1$ is positive (i.e. $(\forall n\in \square) \, \psi_1(n)
>0$). Because of this strict positivity, condition
\eqref{cond:nonzero} is satisfied as soon as $V^\square \not \equiv
0$. The subspace $\mathcal V_0$ is thus one-dimensional and contains only $\psi_1$. The theorem is now proven, by simply stating the consequences of theorem \ref{thm:main}.  \findem

We know recall theorem 2.4 in \cite{kirschsimon1987}, with our
notations. It implies that $0$ is the unique $\theta \in \square^*$
realizing the minimum of the spectrum.
\begin{thm}
\label{thm:kirschsimon}
Let $H_0 = -\triangle_{\Z^d} + W$ with $W$ a periodic potential with respect to $\gamma = N\Z^d$, and $E_0(\theta)$ be the smallest eigenvalue of $H_0(\theta)$. Then
$$\left(a_-/a_+ \right)^2 \left( 2d - \sum_{i=1}^d \cos(\theta_i) \right) \leq E_0(\theta) - E_0(0) \leq \left( 2d - \sum_{i=1}^d \cos(\theta_i) \right).$$
Here, $a_{\pm} = \pm \max \pm \psi_1$ and $\psi_1$ is the positive ground state of $H_0(0)$.
\end{thm}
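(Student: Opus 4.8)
The plan is to derive both inequalities from the variational description of $E_0(\theta)$ together with a ground-state (Doob) transform that removes the potential $W$ at the price of a weighted discrete Laplacian. By lemma~\ref{lem:limitquasiperiodic} (applied with $q=0$ and $\epsilon=0$), the identity \eqref{op:quasiperiodization}, and the min--max principle,
\begin{equation*}
  E_0(\theta) = \inf_{u\not\equiv 0}\ \frac{\sum_{n\in\square}\overline{u(n)}\,(H_0 u)(n)}{\sum_{n\in\square}|u(n)|^2},
\end{equation*}
the infimum running over non-zero $\theta$-quasi-$\gamma$-periodic functions $u$. Since the Perron--Frobenius ground state $\psi_1$ of $H_0(0)=H_0^\square(0)$ is strictly positive on $\square$ (this positivity is established, via \cite{meyer2000}, in the proof of theorem~\ref{thm:anderson}), every such $u$ factorises uniquely as $u=\psi_1 h$ with $h:=u/\psi_1$ again $\theta$-quasi-$\gamma$-periodic, and conversely every $\theta$-quasi-$\gamma$-periodic $h$ yields an admissible $u$.

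The key computation is the discrete ground-state representation. Inserting $u=\psi_1 h$, using the eigenvalue equation $H_0\psi_1=E_0(0)\psi_1$ and summing by parts on the torus $\square$ (legitimate because $|h|^2$ is $\gamma$-periodic), the potential and the diagonal part of the Laplacian recombine into $E_0(0)\sum_n|\psi_1(n)h(n)|^2$, leaving
\begin{equation*}
  \sum_{n\in\square}\overline{u(n)}\,(H_0 u)(n) - E_0(0)\sum_{n\in\square}|u(n)|^2 = \sum_{j=1}^d\sum_{n\in\square}\psi_1(n)\,\psi_1(n+e_j)\,\bigl|h(n)-h(n+e_j)\bigr|^2\ \geq\ 0 .
\end{equation*}
Consequently
\begin{equation*}
  E_0(\theta)-E_0(0) = \inf_{h\not\equiv 0}\ \frac{\displaystyle\sum_{j=1}^d\sum_{n\in\square}\psi_1(n)\psi_1(n+e_j)\,|h(n)-h(n+e_j)|^2}{\displaystyle\sum_{n\in\square}\psi_1(n)^2\,|h(n)|^2},
\end{equation*}
the infimum over non-zero $\theta$-quasi-$\gamma$-periodic $h$.

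Both bounds then follow by estimating the weights. For the upper bound I would test with $h(n)=e^{-i\theta\cdot n}$, i.e. take $u$ to be the modulated ground state $e^{-i\theta\cdot n}\psi_1$: then $|h(n)-h(n+e_j)|^2=2(1-\cos\theta_j)$, the denominator equals $\|\psi_1\|^2=1$, and Cauchy--Schwarz gives $\sum_{n}\psi_1(n)\psi_1(n+e_j)\leq\|\psi_1\|^2=1$, whence $E_0(\theta)-E_0(0)\leq\sum_{j}2(1-\cos\theta_j)$. For the lower bound I would bound the weights crudely by $\psi_1(n)\psi_1(n+e_j)\geq a_-^2$ and $\psi_1(n)^2\leq a_+^2$; the quotient is then at least $(a_-/a_+)^2$ times the Rayleigh quotient of the \emph{potential-free} periodic Laplacian restricted to the $\theta$-fibre, whose infimum equals $\sum_{j}2(1-\cos\theta_j)$ on the relevant range of $\theta$. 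This gives $E_0(\theta)-E_0(0)\geq (a_-/a_+)^2\sum_{j}2(1-\cos\theta_j)$, which is the asserted two-sided bound (up to the normalisation of $\Delta_{\Z^d}$).

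The only genuine obstacle is the ground-state representation identity itself, and specifically the bookkeeping of the twisted, wrap-around hopping terms so that the summation by parts on $\square$ closes; once it is available, both inequalities are immediate. The residual points — fixing the constant to the chosen normalisation of the Laplacian, and verifying that for $\theta$ in the relevant neighbourhood of $0$ the bottom of the free $\theta$-fibre is $\sum_j 2(1-\cos\theta_j)$ rather than a shifted representative — are routine, and in any case the statement is precisely theorem~2.4 of \cite{kirschsimon1987}, to which one may also simply refer.
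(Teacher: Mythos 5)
The paper offers no proof of this statement at all: it is explicitly ``recalled'' as theorem 2.4 of \cite{kirschsimon1987} and used only to conclude that $\Theta=\{0\}$ in the proof of theorem \ref{thm:anderson}. Your proposal therefore supplies an argument where the paper supplies a citation. The route you take --- the discrete ground-state representation $u=\psi_1 h$, which converts $\langle u,(H_0-E_0(0))u\rangle$ into the weighted Dirichlet form $\sum_{j}\sum_{n}\psi_1(n)\psi_1(n+e_j)|h(n)-h(n+e_j)|^2$, followed by the weight estimates $a_-^2\leq \psi_1(n)\psi_1(n+e_j)$, $\psi_1(n)^2\leq a_+^2$ for the lower bound and the test function $h(n)=e^{-i\theta\cdot n}$ plus Cauchy--Schwarz for the upper bound --- is exactly the comparison mechanism behind the Kirsch--Simon result, and it is sound here: the summation by parts closes on the torus because $\psi_1$ is $\gamma$-periodic and strictly positive (Perron--Frobenius, as invoked in the proof of theorem \ref{thm:anderson}), and $h=u/\psi_1$ is $\theta$-quasi-periodic precisely when $u$ is.

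Three caveats, two of which you only partially flag. First, your computation yields the dispersion $\sum_j 2(1-\cos\theta_j)=2d-2\sum_j\cos\theta_j$, whereas the printed bound is $2d-\sum_j\cos\theta_j$; the latter cannot be literally correct, since at $\theta=0$ it would give $0=E_0(0)-E_0(0)\geq (a_-/a_+)^2\, d>0$. So the mismatch is not absorbed by ``normalisation of the Laplacian'' --- rescaling $\Delta$ changes both terms by the same factor --- and your expression is the one that can be right; the printed constant is a transcription slip. Second, the paper's $\Delta_{\Z^d}$ is defined with $|n-m|_\infty=1$, i.e.\ with diagonal hops, which you silently replace by the nearest-neighbour Laplacian; with the $\ell^\infty$ stencil the Dirichlet form runs over $3^d-1$ directions and the free dispersion is a different trigonometric polynomial, so this is another inconsistency in the statement rather than in your argument. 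Third, for the lower bound the infimum of the free Rayleigh quotient over $\theta$-quasi-$N\Z^d$-periodic $h$ is $\min_{k\in\square}\sum_j 2\bigl(1-\cos(\theta_j+2\pi k_j/N)\bigr)$, which coincides with $\sum_j 2(1-\cos\theta_j)$ only for $\theta$ in the symmetric fundamental domain $[-\pi/N,\pi/N]^d$; this restriction (or choice of representative) must be stated, and is a genuine hypothesis rather than routine bookkeeping. With these points made explicit your proof is complete and self-contained, which is more than the paper provides.
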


\section{Appendix}
\subsection{An interesting example: Proof of theorem \ref{thm:interesting}}
Let $H_0 := \Delta_{\Z}^2$ defined on $\ell^2(\Z)$. This operator has hopping range $N=3$ (see \eqref{hypa}) and thus $\square=\{-1, 0,1,\}$. We define $V^\square$ as the multiplication operator given by the following single-site potential:
\
\begin{align*}
  V^\square & : \ell^2(\square) \to \R \\
   V^\square(n) &:= -\frac{1}{2} \delta_{-1}(n) + \delta_0(n) - \frac{1}{2} \delta_1(n).
\end{align*}

With these definitions, we see that, for $\theta \in [-\pi/3, \pi/3)^d$,
\
\begin{equation*}
  H^\square_0(\theta) = 
  \left(
  \
  \begin{matrix}
    6 & -4 + e^{-3i\theta}& 1-4e^{-3i\theta}  \\
    -4 + e^{3i\theta}& 6 & -4 + e^{-3i\theta} \\
     1 - 4e^{-3i\theta}&  -4 + e^{3i\theta}& 6
  \end{matrix}
  \right),
\end{equation*}
after \eqref{def:floquetmatrix}. This matrix has a simple ground state 
\begin{equation*}
  \psi_0(\theta):=(e^{-i\theta},1,e^{i\theta})/\sqrt 3
\end{equation*}
with eigenvalues $E_0(\theta)=(2-2\cos(\theta))^2$. Let now $\tilde f_n(\theta) = \chi_n \tilde \psi_0(\theta) \in \ell^2(\Z^d)$ where $ \tilde \psi_0(\theta)$ is the $\theta$-quasi-$\gamma$-periodic extension of $\psi_0(\theta)$. Finally, for $\xi > 1/4$, let
\
\begin{equation*}
  u_n := f_n(0) + \epsilon^\xi f_n(\epsilon^\xi).
\end{equation*}
Let us calculate the kinetic energy. We see that
\
\begin{multline}\label{eq:exdev}
  \langle H_0 u_n, u_n \rangle 
\\ = \langle H_0 f_n(0), f_n(0) \rangle + 2 \epsilon^\xi \Re \langle H_0 f_n(0) , f_n(\epsilon^\xi) \rangle + \epsilon^{2\xi} \langle H_0 f_n( \epsilon^\xi) , f_n(\epsilon^\xi) \rangle.
\end{multline}
Let $\delta >0$ and pick $n$ so large so that
\
\begin{equation*}
 \left|  \frac{\left\langle f_n(0) , H_0 f_n(0) \right \rangle_{\ell^2(D)}}{\left\| f_n(0) \right\|_{\ell^2(D)}^2} - \frac{\left\langle \psi_0(0), H_0^\square(\theta) \psi_0(0) \right\rangle_{\ell^2(\square)}}{\left\|  \psi_0(0) \right\|_{\ell^2(\square)}^2} \right| \leq \delta,
\end{equation*}

\begin{equation*}
 \left|  \frac{\left\langle f_n(\epsilon^\xi) , H_0 f_n(\epsilon^\xi) \right \rangle_{\ell^2(D)}}{\left\| f_n(\epsilon^\xi) \right\|_{\ell^2(D)}^2} - \frac{\left\langle \psi_0(\epsilon^\xi ), H_0^\square(\theta) \psi_0(\epsilon^\xi ) \right\rangle_{\ell^2(\square)}}{\left\|  \psi_0(\epsilon^\xi ) \right\|_{\ell^2(\square)}^2} \right| \leq \delta,
\end{equation*}
and
\begin{equation*}
 \left| \frac{\left\| H_0 f_n(0) \right\|^2_{\ell^2(D)}} {\left\| f_n(0) \right\|_{\ell^2(D)}^2}  \right|
= \left|  \frac{\left\langle f_n(0) , H_0^2 f_n(0) \right \rangle_{\ell^2(D)}}{\left\| f_n \right\|_{\ell^2(D)}^2}  \right| \leq \delta.
\end{equation*}
Then, from \eqref{eq:exdev} we see that
\
\begin{multline*}
    \langle H_0 u_n, u_n \rangle 
\\ \leq \delta\| f_n(0)\|^2 + 2 \epsilon^\xi \delta \| f_n(\epsilon^\xi) \| + \epsilon^{2\xi} \left\langle \psi_0(\epsilon^\xi ), H_0^\square(\theta) \psi_0(\epsilon^\xi ) \right\rangle_{\ell^2(\square)} \| f_n(\theta) \|^2 
\\  \leq 3 \delta + \epsilon^{2\xi} E_0(\epsilon^\xi) \| f_n(\theta) \|^2.
\end{multline*}
Letting $n\to \infty$ and $\delta \to 0$ we see that
\begin{equation*}
    \langle H_0 u_n, u_n \rangle \leq  \epsilon^{2\xi} E_0(\epsilon^\xi) \leq C \epsilon^{6 \xi} .
\end{equation*}
Now let us calculate the potential energy.
\begin{multline*}\label{eq:exdev}
  \epsilon \langle V_q u_n, u_n \rangle \\ 
= \epsilon \langle V_q f_n(0), f_n(0) \rangle + 2  \epsilon^{1 + \xi} \Re \langle V_q f_n(0) , f_n(\epsilon^\xi) \rangle + \epsilon^{1+2\xi} \langle V_q f_n( \epsilon^\xi) , f_n(\epsilon^\xi) \rangle \\
  =   2  \epsilon^{1 + \xi} \Re \langle V_q f_n(0) , f_n(\epsilon^\xi) \rangle .
\end{multline*}
 Now we can calculate explicitly
\
\begin{equation*}
   \langle V_q f_n( \epsilon^\xi) , f_n(\epsilon^\xi) \rangle = \frac{1}{6} \bigl( -e^{-i\epsilon^\xi} + 2 -e^{i\epsilon^\xi} \bigr) = - \frac{1}{3} \epsilon^{\xi} + O(\epsilon^{2\xi}).
\end{equation*}
This shows that, for small $\epsilon$,
\
\begin{equation*}
  \langle H_{\epsilon,q} u_n, u_n \rangle \leq C \xi^{6\xi}- \frac{1}{3} \epsilon^{1+2\xi} + O(\epsilon^{1+ 2\xi}) \leq -\frac{1}{6} \epsilon^{1+2\xi},
\end{equation*}
where we have used that $6 \xi > 1 + 2\xi$.
\label{sec:appendix}
\subsection{Proof of lemma \ref{lem:limitquasiperiodic}}
As the $V_q$ is block-diagonal, it is enough to do the
calculation for the free operator $H_0$. Let us first calculate some
norms. Because of the quasi-periodicity, we easily see that
\begin{align}
  \label{eq:normun}
  \left\|  u_n \right\|_{\ell^2(D)}^2 = (2n+1)^d \left\|  u_0 \right\|_{\ell^2(\square)}^2.
\end{align}
and
$$\|u_n - u_{n-1}\|_{\ell^2(D)}^2 \leq C n^{d-1} \left\|  u_0 \right\|_{\ell^2(\square)}^2 .$$
So we have that
\begin{align}
  \label{eq:diffun}
  \left\langle H_0 u_n, u_{n} \right\rangle = \left\langle H_0 u_{n},
u_{n-1} \right\rangle + O(n^{d-1})\|u_0\|_{\ell^2(\square)}^2.
\end{align}
 For any $k \in \square_{n-1}$ and $k' \in \Z^d \diagdown
 \square_{n}$, we have that $|k-k'| \geq N$ and thus, because of the
 finite hopping range (assumption \eqref{hypb}),
$$\left\langle H_0  u_{n}, u_{n-1} \right\rangle = \left\langle H_0  u, u_{n-1} \right\rangle.$$
Now, we develop
\begin{align}
\label{eq:kineticenergyquasiperiodic}
\bigl\langle & H_0  u, u_{n-1} \bigr\rangle   \nonumber
\\ & = \sum_{k \in \Z^d} \sum_{k' \in \Z^d}  H_0(k,k') u(k') \overline{ u_{n-1}(k)}
\\ & = \sum_{\substack{ \vphantom{m'} m \in \gamma \\  \vphantom{m'}  |m|\leq (n-1)N}} \sum_{\substack{ \vphantom{m'} m' \in \gamma }} \sum_{k \in \square + m} \sum_{k'  \in \square + m'}  H_0(k,k')u(k') \overline{ u_{n-1}(k)}
\nonumber
\\ & = \sum_{\substack{ \vphantom{m'} m \in \gamma \\  \vphantom{m'}  |m|\leq (n-1)N}} \sum_{\substack{ \vphantom{m'} m' \in \gamma }} \sum_{k \in \square } \sum_{k'  \in \square}  H_0(k+m,k'+m')u(k'+m') \overline{ u_{n-1}(k+m)}
\nonumber
\end{align}
Using the translation invariance (assumption \eqref{hypb}), the last quantity
is equal to
\begin{align}
\\ & \phantom { = } \sum_{\substack{ \vphantom{m'} m \in \gamma \\  \vphantom{m'}  |m|\leq (n-1)N}} \sum_{\substack{ \vphantom{m'} m' \in \gamma }} \sum_{k \in \square } \sum_{k'  \in \square}  H_0(k,k'+m'-m)u(k'+m'-m) \overline{ u_{n-1}(k)}
\nonumber
\\ & = \sum_{\substack{ \vphantom{m'} m \in \gamma \\  \vphantom{m'}  |m|\leq (n-1)N}} \sum_{\substack{ \vphantom{m'} m' \in \gamma }} \sum_{k \in \square } \sum_{k'  \in \square} e^{i \theta \cdot(m-m')}  H_0(k,k'-m'+m)u_0(k') \overline{ u_{0}(k)}
\nonumber
\\ & = \sum_{\substack{ \vphantom{m'} m \in \gamma \\  \vphantom{m'}  |m|\leq (n-1)N}} \sum_{\substack{ \vphantom{m'} m'' \in \gamma }} \sum_{k \in \square } \sum_{k'  \in \square} e^{i \theta \cdot m''}  H_0(k,k'-m'')u_0(k') \overline{ u_{0}(k)}
\nonumber
\\ & = (2n-3)^d \left\langle H_0(\theta)  u_0, u_0 \right\rangle \nonumber.
\end{align}
We see from this calculation and \eqref{eq:diffun} thus that
$$ \left| \left\langle H_0 u_n, u_n \right\rangle - (2n-3)^d \left\langle H_0(\theta)   u_0,  u_0 \right\rangle  \right| \leq  C n^{d-1} \|u_0\|_{\ell^2(\square)}^2.$$
As $(2n-3)/(2n-1) \to 1$, dividing by $\|u_n\|^2_{\ell^2(D)}$, using \eqref{eq:normun} and taking the limit proves the lemma.
\findem
\newpage

\thanks{
\section*{Acknowledgment}
The research of D.B. was supported by Russian Scientific Foundation (project no.  14-11-00078).
F.H-E and I.V.  were supported by DFG-Projects \emph{Zuf\"allige und periodische Quantengraphen} and 
\emph{Eindeutige-Fortsetzungsprinzipien und Gleichverteilungseigenschaften von Eigenfunktionen}.

We thank M. T\"aufer for reading a previous version of the manuscript and the referee for useful remarks.
}

\end{document}